\def\R{{\mathbb{R}}}
\newcommand{\defword}[1]{\textcolor{blue}{\em #1}}
\newcommand{\pd}[2]{\ensuremath{ \frac{ \partial #1 }{\partial #2}  }}
\newtheorem{example}[theorem]{Example}
\newtheorem{question}[theorem]{Question}
\newtheorem{assumption}[theorem]{Assumption}
\crefname{hypothesis}{Hypothesis}{Hypotheses}
\title{Multistability of Small  Reaction Networks\thanks{Submitted to the editors DATE.
\funding{XT was funded by the NSFC12001029.}
%\funding{XT was partially funded by NSF (DMS-1752672), Jie Wang was supported by China Postdoctoral Science Foundation under grants 2018M641055.}
}}
\author{Xiaoxian Tang\thanks{School of Mathematical Sciences, Beihang University, Beijing,  China
  (\email{xiaoxian@buaa.edu.cn}, \url{https://sites.google.com/site/rootclassification/}).}
\and Hao Xu\thanks{School of Mathematics and Statistics, The University of Melbourne, Parkville, VIC 3010, Australia (\email{hao.xu3@student.unimelb.edu.au})}
% \and Hao Xu\thanks{School of Mathematical Sciences, Beihang University, Beijing,  China
%  (\email{xuhao0620@buaa.edu.cn}).}
%\and Jane E. Smith\footnotemark[3]
}
\begin{document}

\maketitle

% REQUIRED
\begin{abstract}
 For three typical sets of small reaction networks (networks with two reactions, one irreversible and one reversible reaction, or two reversible-reaction pairs),   we completely answer the challenging question:
what is the smallest subset of all multistable networks such that any multistable network outside of the subset
contains either more species or more reactants than any network in this subset?
\end{abstract}

% REQUIRED
\begin{keywords}
chemical reaction networks, mass-action kinetics, multistationarity,
multistability
\end{keywords}

% REQUIRED
\begin{AMS}
 % 68Q25, 68R10, 68U05
 92C40, 92C45
\end{AMS}

\section{Introduction}\label{sec:intro}
For the dynamical systems that arise from biochemical reaction networks,  we ask the following question:

\begin{question}\label{question}
Given a class of networks with the same number of irreversible reactions and the same number of reversible-reaction pairs,
what is the smallest nonemp-ty subset of multistable networks such that any multistable network outside of the subset has either more species or more reactants than any network from the subset? Here, we define the number of reactants as the maximum sum of stoichiometric coefficients in the reactant complexes (see Definition \ref{def:km}).
\end{question}
%We will formally state this question in Section \ref{sec:ps}, see Question \ref{formalq}.
The above question is motivated by the multistationarity (multistability) problem of biochemical reaction systems, which
is crucial for understanding basic phenomena such as decision-making process in cellular signaling \cite{BF2001, FM1998, XF2003, CTF2006}.
Given a network, we pursue
rate constants such that the corresponding dynamical system arising under mass-action kinetics
has at least two (stable) positive steady states in the same stoichiometric compatibility class. Mathematically,
one needs to identify a value or an open region in the parameter space for which a parametric semi-algebraic system has at least two real solutions, which is a fundamental  problem in computational
algebraic geometry \cite{signs, CFMW}.
%A necessary condition for multistability is multistationarity.
It is well-known that networks with only one reaction admit no multistationarity/multistability. So, for Question \ref{question}, the first non-trivial case to study is the class of networks with two reactions (possibly reversible). It is implied by \cite{Joshi:Shiu:Multistationary} that for the networks with two pairs of reversible reactions,
the smallest nonempty subclass of multistable networks is the set with a single network
``$0 ~ \xLeftrightarrow[]{} ~   X_1,
 \; 2X_1~ \xLeftrightarrow[]{} ~ 3X_1$". That means for any other network with two reversible-reaction pairs, if it admits multistability, either it has at least $2$ species, or
 the number of reactants is at least $4$.
 %Also, \cite{Joshi:Shiu:Multistationary} \textcolor{red}{which result?} shows that
 %networks with exactly two reactions and up to two species admit no multistability.
 In this paper, our main contributions are  complete answers to Question \ref{question} for the networks with exactly two reactions (see Theorem \ref{thm:4-reactant}) and those with
 one irreversible reaction and one reversible reaction (see Theorem \ref{thm:g1}).

 %say something about relationship between mss and multistable;
Our main focus is the multistability problem. Generally, multisability is a much more difficult problem than multistationarity because the standard algebraic tool for studying stability (Routh-Hurwitz criterion \cite{HTX2015}, or alternatively Li\'enard-Chipart criterion \cite{datta1978}) is computationally challenging (e.g., \cite{OSTT2019}, \cite{TF2020}).
%Using these criteria, one needs to examine  the positivity of some
%gigantic determinants.
Fortunately, for the networks with one-dimensional stoichiometric subspaces, we can determine stability by checking the trace of the Jacobian matrix (see Lemma \ref{lm:stable}).
Using the simpler criterion and the elimination theory (from algebraic geometry), we prove an upper bound for the maximum number of stable positive steady states in terms of the maximum number of positive steady states (Theorems \ref{thm:multistable}). For the networks such that the  nondegeneracy conjecture \cite[Conjecture 2.3]{Joshi:Shiu:Multistationary} is true, a lower bound for the the maximum number of stable positive steady states can be similarly obtained (Theorem \ref{thm:nondegmultistable}).
 These results show that a multistable network admits at least three positive steady states. So, the number of reactants for a multistable network should be at least three (in fact, for two-reaction networks, the number of reactants should be at least four, see Theorem \ref{thm:3-reactant}).
 A recent study on at-most-bimolecular networks \cite{OSS2020} supports our result.
These results extend \cite[Theorem 3.6 2(c)]{Joshi:Shiu:Multistationary}, which is for one-species networks, to two-reaction networks and to two-species networks with one irreversible and one reversible reaction, or with two pairs of reversible reactions. We remark that these results are based on a sign condition (see Theorem \ref{thm:sign}), which also provides one way to determine multistationarity (by checking if the determinant of the Jacobian matrix changes sign)  for small networks with
one-dimensional stoichiometric subspaces (see Corollary \ref{cry:sign}). 	There have been a long list of such criterion (without or with a steady-state parametrization), see \cite{CF2005, ShinarFeinberg2012, WiufFeliu_powerlaw, BP, signs,  CFMW, DMST2019}. One criterion in the list based on degree theory \cite[Theorem 3.12]{DMST2019}  requires the networks to admit no boundary steady states, which cannot be directly applied to two-reaction networks since if a two-reaction network admits multistationarity, then it must admit boundary steady states (see Theorem \ref{thm:nomss}).

This work can be viewed as one step toward an ambitious goal: a complete classification of  multistable networks with one-dimensional stoichiometric subspaces.
As the first step toward the big goal,  Joshi and Shiu \cite{Joshi:Shiu:Multistationary} solved the
multistationarity problem for small networks with only one species or up to two reactions (possibly reversible).
Later, Shiu and de Wolff \cite{shiu-dewolff} extended these results to nondegenerate multistationarity for two-species networks with two reactions (possibly reversible).
The idea of studying small networks is inspired by the fact that
multistability or nondegenerate multistationarity can be lifted from small networks to related large networks \cite{JS13, BP16}.
Here, our contribution is straightforward: one can directly read multistable networks with few species and few reactants from the two main results Theorem \ref{thm:4-reactant} and Theorem \ref{thm:g1}.
For two-reaction networks with up to four reactants and up to three species, there are in fact only two
kinds  of networks (but infinitely many) that are multistable. For instance, by Theorem \ref{thm:4-reactant}, we directly see the network ``$X_1 ~ \xrightarrow[]{} ~   X_2+X_3,
 \; 2X_1+X_2+X_3~ \xrightarrow[]{} ~ 3X_1$"
 %mentioned in \cite[Remark 5.4]{Joshi:Shiu:Multistationary}
 admits no multistability (see more details on this example in Remark \ref{rmk:p3101}). And, for the networks with
one irreversible and one reversible reaction, if there are up to three reactants and up to two species, then
only four kinds of networks are multistable.

%state the value of our work

%explain why studying small networks is important

The rest of this paper is organized as follows.
In Section~\ref{sec:back}, we introduce
mass-action kinetics systems  arising  from reaction networks. We formally state our problem and present the main results in Section \ref{sec:ps}.
In Section~\ref{sec:dim1}, for the small networks with one-dimensional stoichiometric subspaces, we provide a sign condition (see Theorem \ref{thm:sign}), which reveals the relationship between the maximum number of positive steady states and  the maximum number of stable positive stable steady states (see Theorems \ref{thm:multistable} and \ref{thm:nondegmultistable}).
In Section~\ref{sec:g0}, we study networks with exactly two reactions. We prove a list of necessary conditions for a two-reaction network to admit multistability (for instance, see Theorems \ref{thm:nomss} and \ref{thm:3-reactant}). Based on these results, for the set of all two-reaction networks, we find the smallest
subset of all multistable networks such that any multistable network
contains either more species or more reactants than any network in this subset (see the proof of Theorem \ref{thm:4-reactant}).
We extend these results for networks with reversible reactions in Section~\ref{sec:ext} (see the proof of Theorem \ref{thm:g1}).
The proof of Theorem \ref{thm:g2} is presented in  Section \ref{sec:nig2}.
Finally, we end this paper with open problems inspired by Theorem \ref{thm:nondegmultistable}, see Section \ref{sec:dis}.

\section{Background}\label{sec:back}
\subsection{Chemical reaction networks}\label{sec:pre}

%\subsection{Reaction networks}\label{sec:reaction}
%{\color{red} Put a more general definition for CRNs here.  Define mass-action ODEs. Steady states. Stability.}

%\textcolor{red}{check the notation $N$ and $S$.--Xiaoxian}

In this section, we briefly recall the standard notions and definitions on reaction networks, see \cite{CFMW, Joshi:Shiu:Multistationary} for more details.
%As in~\cite{DPST}, our notation closely matches that of Conradi, Feliu, Mincheva, and Wiuf~\cite{CFMW}.
A \defword{reaction network} $G$  (or \defword{network} for short) consists of a set of $s$ species $\{X_1, X_2, \ldots, X_s\}$ and a set of $m$ reactions:
\begin{align}\label{eq:network}
\alpha_{1j}X_1 +
 \dots +
\alpha_{sj}X_s
%~ \xrightarrow{\kappa_j} ~
~ \xrightarrow{} ~
\beta_{1j}X_1 +
 \dots +
\beta_{sj}X_s,
 \;
    {\rm for}~
	j=1,2, \ldots, m,
\end{align}
where all $\alpha_{ij}$ and $\beta_{ij}$ are non-negative integers, and
$(\alpha_{1j},\ldots,\alpha_{sj})\neq (\beta_{1j},\ldots,\beta_{sj})$. We call the $s\times m$ matrix with
$(i, j)$-entry equal to $\beta_{ij}-\alpha_{ij}$ the
\defword{stoichiometric matrix} of
$G$,
denoted by ${\mathcal N}$.
%Let $d=s-{\rm rank}(N)$. %, and
We call the image of ${\mathcal N}$
the \defword{stoichiometric subspace}, denoted by $S$. Note that $S$ is a real subspace.
%let $\mathrm{im}(N)^{\perp}$ denote the orthogonal complement of the image of $N$.
%$S^{\perp}$.
%$\mathrm{im}(N)^{\perp}$.

We denote by $x_1, x_2, \ldots, x_s$ the concentrations of the species $X_1,X_2, \ldots, X_s$, respectively.
Under the assumption of {\em mass-action kinetics}, we describe how these concentrations change  in time by following system of ODEs:
\begin{align}\label{eq:sys}
\dot{x}~=~f(x)~:=~{\mathcal N}\cdot \begin{pmatrix}
\kappa_1 \, x_1^{\alpha_{11}}
		x_2^{\alpha_{21}}
		\cdots x_s^{\alpha_{s1}} \\
\kappa_2 \, x_1^{\alpha_{12}}
		x_2^{\alpha_{22}}
		\cdots x_s^{\alpha_{s2}} \\
		\vdots \\
\kappa_m \, x_1^{\alpha_{1m}}
		x_2^{\alpha_{2m}}
		\cdots x_s^{\alpha_{sm}} \\
\end{pmatrix}~,
\end{align}
where $x$ denotes the vector $(x_1, x_2, \ldots, x_s)$,
and each $\kappa_j \in \mathbb R_{>0}$ is called a \defword{rate constant}.
 By considering the rate constants as an unknown vector $\kappa=(\kappa_1, \kappa_2, \dots, \kappa_m)$, we have polynomials $f_{i} \in \mathbb Q[\kappa, x]$, for $i=1,2, \dots, s$.

 A \defword{conservation-law matrix} of $G$, denoted by $W$, is any row-reduced $d\times s$-matrix whose rows form a basis of $S^{\perp}$, where
 $d:=s-{\rm rank}({\mathcal N})$ (note here, ${\rm rank} (W)=d$). The system~\eqref{eq:sys} satisfies $W \dot x =0$,  and both the positive orthant $\mathbb R_{>0}^s$ and its closure $\mathbb R_{\ge 0}$ are forward-invariant for the dynamics. Thus,
a trajectory $x(t)$ beginning at a non-negative vector $x(0)=x^0 \in
\mathbb{R}^s_{> 0}$ remains, for all positive time,
 in the following \defword{stoichiometric compatibility class} with respect to the  \defword{total-constant vector} $c:= W x^0 \in {\mathbb R}^d$:
\begin{align}\label{eq:pc}
{\mathcal P}_c~:=~ \{x\in {\mathbb R}_{\geq 0}^s \mid Wx=c\}.~
\end{align}
That means ${\mathcal P}_c$ is forward-invariant with
respect to the dynamics~\eqref{eq:sys}.

In this work, we mainly focus on the three families of small networks defined as
\begin{align*}
{\mathcal G}_0 &\;:=\; \{\text{the networks with exactly two reactions}, \; \text{i.e.},\; m=2 \;\text{in}\; \text{\eqref{eq:network}}\},\\
%\end{align*}
%\begin{align*}
{\mathcal G}_1 &\;:=\; \{\text{the networks with one irreversible  and one reversible reaction}\},\; \text{and}\\
% \end{align*}
%\begin{align*}
{\mathcal G}_2 &\;:=\; \{\text{the networks with two reversible-reaction pairs}\}.
\end{align*}
We  denote the union $\cup_{i=0}^2{\mathcal G}_i$ simply by ${\mathcal G}$.
Also, we simplify/clarify our notation \eqref{eq:network} for reversible reactions. Any $G\in {\mathcal G}_1$ has the form
\begin{align}\label{eq:networkg1}
\Sigma_{i=1}^s\alpha_{i1}X_i
%+ \dots +\alpha_{s1}X_s
%~ \xLeftrightarrow[\kappa_2]{\kappa_1} ~
~ \xLeftrightarrow[]{} ~
\Sigma_{i=1}^s\beta_{i1}X_i,
% +\dots +
%\beta_{s1}X_s,
 \;\;
 \Sigma_{i=1}^s\alpha_{i2}X_i
%\alpha_{12}X_1 +
 %\dots +
%\alpha_{s2}X_s
%~ \xrightarrow{\kappa_3} ~
~ \xrightarrow{} ~
\Sigma_{i=1}^s\beta_{i2}X_i,
%\beta_{12}X_1 +
 %\dots +
%\beta_{s2}X_s,
 \end{align}
 and for any
  $G\in {\mathcal G}_2$, we denote it by
\begin{align}\label{eq:networkg2}
\Sigma_{i=1}^s\alpha_{i1}X_i
%~ \xLeftrightarrow[\kappa_2]{\kappa_1} ~
~ \xLeftrightarrow[]{} ~
\Sigma_{i=1}^s\beta_{i1}X_i,
\;\;
\Sigma_{i=1}^s\alpha_{i2}X_i
%~ \xLeftrightarrow[\kappa_4]{\kappa_3} ~
~ \xLeftrightarrow[]{} ~
\Sigma_{i=1}^s\beta_{i2}X_i.
 \end{align}

 \begin{definition}\label{def:form}
For two networks $G$ and $\hat{G}$ in ${\mathcal G}$,
we say the network $\hat{G}$ \defword{has the form of the network} $G$,
if we can obtain $\hat{G}$ from $G$ by relabeling the species.
%if there
%exist finitely many networks $G^{(0)}, \ldots, G^{(n)}$ such that
%$G^{(0)}=G$, $G^{(n)}=\hat{G}$, and for any $i\in \{0, \ldots, n-1\}$,
%we can obtain $G^{(i+1)}$ from $G^{(i)}$ by switching two species in $G^{(i)}$.
\end{definition}
\begin{example}\label{ex:form}

%In Definition \ref{def:form}, when we say ``by switching two species", we simply mean relabeling the species.
For instance, the network \eqref{eq:exnet1} has the form of the network \eqref{eq:exnet2}: we obtain \eqref{eq:exnet1} from \eqref{eq:exnet2} by relabeling the two species $X_2$ and $X_3$.

%The network \eqref{eq:exnet1} has the form of the network \eqref{eq:exnet2}. Because we
%will obtain exactly the network \eqref{eq:exnet1} if %we first switch the
%two reactions in \eqref{eq:exnet2} and then
%we switch the two species $X_2$ and $X_3$ in \eqref{eq:exnet2}.
\begin{align}\label{eq:exnet1}
%X_1 + X_3~ \xrightarrow{\kappa_1} ~ X_2, \;\;\;\;  X_2 + 2X_3 ~ \xrightarrow{\kappa_2} ~ X_1 + 3 X_3.
X_1 + X_3~ \xrightarrow{} ~ X_2, \;\;\;\;  X_2 + 2X_3 ~ \xrightarrow{} ~ X_1 + 3 X_3.
\end{align}
\begin{align}\label{eq:exnet2}
 % 2X_2 + X_3 ~ \xrightarrow{\kappa_1} ~ X_1 + 3 X_2,  \;\;\;\; X_1 + X_2~ \xrightarrow{\kappa_2} ~ X_3.
 X_1 + X_2~ \xrightarrow{} ~ X_3,  \;\;\;\; 2X_2 + X_3 ~ \xrightarrow{} ~ X_1 + 3 X_2.
  \end{align}
%  Clearly, the two sets of reactant coefficients associated with the above two networks are equivalent.
  \end{example}

\subsection{Steady states}\label{sec:ss}
A \defword{steady state} %(or, simply \defword{steady state})\footnote{Usually, a steady state is defined as a non-negative vector $x\in {\mathbb R}_{\geq 0}^s$.
%In our setting, we do not
%consider boundary steady states (i.e., steady states with zero coordinates). So all steady states in our context are positive.}
of~\eqref{eq:sys} is a concentration vector
$x^* \in \mathbb{R}_{\geq 0}^s$ at which $f(x)$ on the
right-hand side of the
ODEs~\eqref{eq:sys}  vanishes, i.e., $f(x^*) =0$.
If a steady state $x^*$ has all strictly positive coordinates (i.e., $x^*\in \mathbb{R}_{> 0}^s$), then we call $x^*$ a \defword{positive steady state}.
If a steady state $x^*$ has zero coordinate(s) (i.e., $x^*\in \mathbb{R}_{\geq 0}^s\backslash \mathbb{R}_{> 0}^s$), then we call $x^*$ a \defword{boundary steady state}.
%A steady state $x^*$ is {\em nondegenerate} if ${\rm Im}\left( df_{\kappa} (x^*)|_{S} \right) = \St$.
%(Here, $df_{\kappa}(x^*)$ is the Jacobian matrix of $f_{\kappa}$ at $x^*$.)
%A nondegenerate steady state is
%{\em hyperbolic} if each of the $\sigma:= \dim (\St)$ nonzero eigenvalues of $df_{\kappa}(x^*)$ has nonzero real part and is
%{\em exponentially stable} if each of the $\sigma:= \dim(\St)$ nonzero eigenvalues of $df_{\kappa}(x^*)$ has negative real part.
%We say a steady state $x^*$ is a \defword{positive steady states} if $x ^* \in \mathbb{R}^s_{> 0}$. % and \defword{boundary steady states}
%$x^*\in {\mathbb R}_{\geq 0}^s\backslash {\mathbb R}_{>0}^s$.
We say a steady state $x^*$ is \defword{nondegenerate} if
${\rm im}\left({\rm Jac}_f (x^*)|_{S}\right)=S$,
% = \St$.
where ${\rm Jac}_f(x^*)$ denotes the Jacobian matrix of $f$, with respect to $x$, at $x^*$.
%A nondegenerate steady state $x^*$ is \defword{Liapunov stable} if for any $\epsilon>0$ and for any $t_0>0$, there exists $\delta>0$ such that
%$\parallel x(t_0)-x^*\parallel<\delta$ implies $\parallel x(t)-x^*\parallel<\epsilon$ for any $t\geq t_0$.
%A Liapunov stable steady state $x^*$ is  \defword{locally asymptotically stable}  if  there exists $\delta>0$  such that
%$\parallel x(t_0)-x^*\parallel<\delta$ implies $\lim_{t\rightarrow \infty} x(t)=x^*$.
A steady state $x^*$ is \defword{exponentially stable} (or, simply \defword{stable} in this paper)
if it is nondegenerate, and  all non-zero eigenvalues of ${\rm Jac}_f(x^*)$ have negative real parts.
%Note that if a steady state is exponentially stable, then it is locally asymptotically stable \cite{P2001}.
%Notice that when the stoichiometric matrix $N$ is full rank, a steady state $x^*$ is nondegenerate if
%  ${\rm Jac}_f(x^*)$ is full rank.

%A nondegenerate steady state is
%{\em hyperbolic} if each of the $\sigma:= \dim (\St)$ nonzero eigenvalues of $df_{\kappa}(x^*)$ has nonzero real part and is
%\defword{exponentially stable} if every nonzero eigenvalue of ${\rm Jac}_f(x^*)$ has a negative real part.

%A nondegenerate steady state $x^*$ is said to be \defword{Liapunov stable} if for any $\epsilon>0$ and for any $t_0>0$, there exists $\delta>0$ such that
%$\parallel x(t_0)-x^*\parallel<\delta$ implies $\parallel x(t)-x^*\parallel<\epsilon$ for any $t\geq t_0$.
%A Liapunov stable steady state $x^*$ is said to be \defword{locally asymptotically stable}  if  there exists $\delta>0$  such that
%$\parallel x(t_0)-x^*\parallel<\delta$ implies $\lim_{t\rightarrow \infty} x(t)=x^*$.

%It is well-known that if a nondegenerate steady state $x^*$ is exponentially stable, then
%it is locally asymptotically stable.

Suppose $N\in {\mathbb Z}_{\geq 0}$. A  network  \defword{admits $N$ (nondegenerate) positive steady states}  if for some rate-constant
vector $\kappa$ and for some total-constant vector $c$,  it has $N$ (nondegenerate) positive steady states  in the same stoichiometric compatibility class ${\mathcal P}_c$.
%A  network \defword{admits $N$ nondegenerate positive steady states}  if for some rate-constant
%vector $\kappa$ and for some total-constant vector $c$,  it has $N$ nondegenerate positive steady states  in the same stoichiometric compatibility class ${\mathcal P}_c$.
A  network  \defword{admits $N$ stable positive steady states}  if for some rate-constant
vector $\kappa$ and for some total-constant vector $c$,  it has $N$ stable positive steady states in the same stoichiometric compatibility class ${\mathcal P}_c$.

The \defword{maximum number of positive steady states} of a network $G$ is
{\footnotesize
\[cap_{pos}(G)\;:=\;\max\{N\in {\mathbb Z}_{\geq 0}\cup \{+\infty\}|G \;\text{admits}\; N\; \text{positive steady states}\}.\]
}
The \defword{maximum number of nondegenerate positive steady states} of a network $G$ is
{\footnotesize
\[cap_{nondeg}(G)\;:=\;\max\{N\in {\mathbb Z}_{\geq 0}\cup \{+\infty\}|G \;\text{admits}\; N\; \text{nondegenerate positive steady states}\}.\]
}
The \defword{maximum number of stable positive steady states} of a network $G$ is
{\footnotesize
\[cap_{stab}(G)\;:=\;\max\{N\in {\mathbb Z}_{\geq 0}\cup \{+\infty\}|G \;\text{admits}\; N\; \text{stable positive steady states}\}.\]
}
It is obvious that if $\hat{G}$ has the form of $G$,
then $cap_{pos}(\hat{G})=cap_{pos}(G)$, $cap_{nondeg}(\hat{G})=cap_{nondeg}(G)$, and
 $cap_{stab}(\hat{G})=cap_{stab}(G)$.

We say a network admits \defword{multistationarity} if $cap_{pos}(G)\geq 2$.
Notice that a network that admits more than one boundary steady states is not said to admit multistationarity in this paper.
We say a network admits \defword{nondegenerate multistationarity} if $cap_{nondeg}(G)\geq 2$. We say a network admits \defword{multistability} if $cap_{stab}(G)\geq 2$.

%\subsection{Nondegeneracy conjecture}\label{sec:nondeg}
%it is not necessary to highlight the conjecture. otherwise, people think this work is about this conjecture. But it is not true.

\subsection{Problem statement and main results}\label{sec:ps}
\begin{definition}\label{def:km}
For a non-negative integer $K$, a network $G$ with reactions defined in
\eqref{eq:network} is \defword{at-most-$K$-reactant} if
for all $j\in \{1, \ldots, m\}$, we have $\sum_{k=1}^s\alpha_{kj}\leq K$, and we say
 $G$ is \defword{$K$-reactant} (or, the \defword{number of reactants} of $G$ is $K$) if $G$ is at-most-$K$-reactant  and there exists $j\in \{1, \ldots, m\}$ such that $\sum_{k=1}^s\alpha_{kj}=K$.
\end{definition}

%For $i\in \{0, 1, 2\}$,
%define ${\mathcal M}_{i}$ as the set of subsets of ${\mathcal G}_i$ such that every $\mathcal H\in {\mathcal M}_{i}$ satisfies the two conditions:
%$$
%{\mathcal M}_{i} := \{
%{\mathcal H}\subset {\mathcal G}_i|{\mathcal H}\; \text{satisfies conditions (A) and (B)}
%\},
%$$
%where the two conditions are
%\begin{itemize}
%\item[(i)]for any $G\in {\mathcal H}$, $G$ admits multistability, and
%\item[(ii)]for any $G\in {\mathcal G}_i\backslash {\mathcal H}$ and for any
%$\hat G\in {\mathcal H}$, if $G$ admits multistability, then $G$ contains either more species or more reactants than $\hat G$.
%\end{itemize}
%\textcolor{red}{In other words, ${\mathcal H}$ is the smallest multistable subset such that any other multistable network has more species or more reactions than any $\hat{G}$ in ${\mathcal H}$.}
%\begin{question}[Formal version of Question \ref{question}]\label{formalq}
%For $i\in \{0, 1, 2\}$, find ${\mathcal H}^*\in {\mathcal M}_{i}$ such that ${\mathcal H}^*\neq \emptyset$ and for any
%${\mathcal H}\in {\mathcal M}_i$, we have ${\mathcal H}^*\subset {\mathcal H}$.
%\end{question}
For the classes of networks ${\mathcal G}_0$, ${\mathcal G}_1$ and ${\mathcal G}_2$,
we provide complete answers to Question \ref{question}: see Theorem \ref{thm:4-reactant}, Theorem \ref{thm:g1} and Theorem \ref{thm:g2}.
\begin{theorem}\label{thm:4-reactant}
Given $G\in {\mathcal G}_0$, %suppose the stoichiometric subspace of $G$ is one-dimensional.
%Suppose network $G$ %is consistent and
%has exactly $2$ reactions and $3$ species.
if $G$ has up to $3$ species and $G$ is at-most-$4$-reactant, then $G$ admits multistability if and only if $G$ has the form of one of the
two networks \eqref{eq:net1} and \eqref{eq:net2} below
\begin{align}\label{eq:net1}
X_1 + 3X_2 ~ \xrightarrow{} ~  4 X_2+X_3 ,
 \;\;\;\;  X_2 + X_3~ \xrightarrow{} ~ X_1;
\end{align}
\begin{align}\label{eq:net2}
X_1 + 2X_2 + X_3 ~ \xrightarrow{} ~ \beta_{21}X_2, \;\;\;\;
3X_3~ \xrightarrow{} ~\beta_{12}X_1+\beta_{22}X_2+\beta_{32}X_3, \; %\text{where}\;
\end{align}
where $\beta_{21}\in \{0,1\}$, $\beta_{12}\in {\mathbb Z}_{>0}$, $\beta_{22}=\beta_{12}(2-\beta_{21})$ and $\beta_{32}=\beta_{12}+3$.
\end{theorem}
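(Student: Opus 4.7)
The proof naturally splits into sufficiency (the two listed families indeed admit multistability) and necessity (every network in $\mathcal G_0$ with at most $3$ species and at most $4$ reactants that is multistable must have the form of \eqref{eq:net1} or \eqref{eq:net2}). The strategy for necessity is to compose the earlier structural theorems into a pipeline of reductions that cuts the problem down to a finite case analysis; the strategy for sufficiency is explicit construction.

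For the sufficiency direction, I would first observe that both \eqref{eq:net1} and \eqref{eq:net2} have rank-one stoichiometric matrices (a direct computation), so Lemma \ref{lm:stable} reduces stability to a sign condition on the trace of the Jacobian. Eliminating the conservation laws then collapses the steady-state system to a single univariate polynomial of degree at most $4$. For \eqref{eq:net1} I would exhibit a specific choice of $(\kappa_1,\kappa_2)$ and total-constant vector $c$ making this polynomial a cubic with three distinct positive roots, and check that the outer two roots give a negative trace. For the two-parameter family \eqref{eq:net2} the construction must depend on $(\beta_{21},\beta_{12})$; a single explicit template for $(\kappa,c)$ parameterized by $\beta_{12}$ should suffice, after which one verifies the sign pattern of the trace at the three roots uniformly in the parameters.

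For the necessity direction, the pipeline is as follows. Theorem \ref{thm:3-reactant} forces the reactant count of a multistable $G \in \mathcal G_0$ to be exactly $4$. Theorem \ref{thm:nomss} forces $G$ to admit at least one boundary steady state, which implies that the two reactant monomials $x^{\alpha_{\cdot 1}}$ and $x^{\alpha_{\cdot 2}}$ must share at least one variable, so that both vanish on a common coordinate hyperplane. Theorem \ref{thm:multistable} combined with Theorem \ref{thm:sign} forces $cap_{pos}(G)\geq 3$ and restricts the admissible sign pattern of the determinantal condition. Using these three constraints, I would enumerate all reactant pairs $(\alpha_{\cdot 1},\alpha_{\cdot 2})$ with $s\in\{1,2,3\}$, $\sum_i\alpha_{ij}\le 4$, at least one sum equal to $4$, and a common zero coordinate. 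For each surviving reactant pair, I would enumerate the possible product complexes (subject to $\alpha_{\cdot j}\ne\beta_{\cdot j}$) and apply the sign condition together with Descartes' rule of signs to the resulting univariate steady-state polynomial, checking whether three positive roots with the required stability pattern are possible. Only \eqref{eq:net1} and \eqref{eq:net2} will remain.

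The main obstacle is managing the necessity enumeration: while each individual reduction is clean, combining them to rule out the many surviving reactant--product combinations requires careful bookkeeping in the $s=3$ case, where the candidate list is largest. The crux is the reaction whose reactant sum equals $4$: its product complex has, a priori, many admissible stoichiometric coefficients, and one must show that the rank-one plus sign-alternation requirements force these coefficients into the precise affine family described by $\beta_{22}=\beta_{12}(2-\beta_{21})$ and $\beta_{32}=\beta_{12}+3$. The residue of the enumeration is exactly \eqref{eq:net2}, which is why that family is naturally parametric rather than finite.
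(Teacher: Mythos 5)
Your overall architecture matches the paper's: sufficiency by explicit witnesses checked with Lemma \ref{lm:stable}, and necessity by the pipeline ``two species impossible $\Rightarrow$ exactly four reactants (Theorem \ref{thm:3-reactant}) $\Rightarrow$ reactant monomials share a variable $\Rightarrow$ $cap_{pos}\geq 3$ $\Rightarrow$ finite enumeration of candidates $\Rightarrow$ case-by-case elimination.'' Two ingredients of the paper's reduction are missing from your description but are recoverable from what you wrote: the exact count $\sum_k\min\{\alpha_{k1},\alpha_{k2}\}=1$ (degree of the reduced polynomial $\tilde h_1$ must be exactly $3$), and Lemma \ref{lm:nobss4rec}, which shows that $\beta_{k1}-\alpha_{k1}\neq 0$ and $\alpha_{k1}-\alpha_{k2}\neq 0$ for \emph{every} species $k$ (otherwise the steady-state system degenerates to a power function meeting a line, giving at most two positive solutions). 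Without the latter the candidate list is substantially larger, though still finite.

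The genuine weak point is the final elimination step, where you propose Descartes' rule of signs applied to the univariate steady-state polynomial. That tool works in the two-species setting of Theorem \ref{thm:g1}, but for three species it does not suffice: after substituting the two conservation laws $x_2=A_2x_1-B_2$, $x_3=A_3x_1-B_3$, the cubic $\tilde g(x_1)$ has coefficients that are polynomials in $c_1,c_2$ of uncontrolled sign, so Descartes gives no bound below three on the number of positive roots; moreover, positive roots of $\tilde g$ correspond to positive steady states only when they lie in the interval $I$ of \eqref{eq:interval}, so counting roots on $(0,+\infty)$ is the wrong question. The paper's substitute is Lemma \ref{lm:3ss} (together with Lemma \ref{lm:jactildeg}): two stable positive steady states force $\tilde g(a)>0$, $\tilde g(A)<0$ at the endpoints of $I$, and force an intermediate root with $\tilde g'>0$, whose sign is controlled by the quantity $\sum_i(\beta_{i1}-\alpha_{i1})(\alpha_{i1}-\alpha_{i2})\Pi_{k\neq i}x_k^*$ rewritten via the conservation laws. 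Each of the ten non-multistable candidate rows of Table \ref{tab:net} is killed by violating one of these three conditions, typically after a case split on the signs of $c_1,c_2$. Your proposal would need to replace the Descartes step with an argument of this endpoint-evaluation type; as stated, the elimination would stall on rows such as (\ref{row:2r2.1}) and (\ref{row:2r2.3}), where no sign pattern of the cubic's coefficients is forced.
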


\begin{remark}\label{rmk:p5191}
We remark that for two networks with the same stoichiometric matrix but different defining ODEs, they might have different
dynamical behaviors. For instance,
consider the network
\begin{align}\label{eq:net3}
X_1 + X_2 ~ \xrightarrow{} ~  2 X_2+X_3 ,
 \;\;\;\;  X_2 + X_3~ \xrightarrow{} ~ X_1.
\end{align}
The networks \eqref{eq:net1} and \eqref{eq:net3}  have the same stoichiometric matrix.
However, Theorem \ref{thm:4-reactant} shows that the network \eqref{eq:net1} admits multistability but the network \eqref{eq:net3} does not.
\end{remark}

Theorem \ref{thm:4-reactant} means for ${\mathcal G}_0$, the answer to Question \ref{question} is
$${\mathcal H} \;:=\;\{G\;\text{has the form of the network \eqref{eq:net1} or the network \eqref{eq:net2}}\}.$$
In fact, Theorem \ref{thm:4-reactant} implies that
${\mathcal H}$ is the smallest multistable subset of ${\mathcal G}_0$ such that any other multistable network has more species or more reactions than any $G$ in ${\mathcal H}$.
% the above set satisfies two conditions:
% \begin{itemize}
%\item[(i)]for any $G\in {\mathcal H}^*$, $G$ admits multistability, and
%\item[(ii)]for any $G\in {\mathcal G}_0\backslash {\mathcal H}^*$ and for any
%$\hat G\in {\mathcal H}$, if $G$ admits multistability, then $G$ contains either more species or more reactants than $\hat G$.
%\end{itemize}
 %For any ${\mathcal H}\in {\mathcal M}_i$, if there exists $G\in {\mathcal H}^*$ such that
%$G\not\in {\mathcal H}$, then by the condition (ii), there exists a multistable network
%$\hat G\in {\mathcal H}$ such that $\hat G$ has either \textcolor{red}{fewer} than $3$ species or \textcolor{red}{fewer} than $4$ reactants, which is a contradiction
%to Theorem \ref{thm:4-reactant}. So, we definitely have ${\mathcal H}^*\subset {\mathcal H}$.
 Similarly, one can understand why
Theorems \ref{thm:g1} and \ref{thm:g2} below answer Question \ref{question} for ${\mathcal G}_1$ and ${\mathcal G}_2$, respectively.

\begin{theorem}\label{thm:g1}
%Suppose network $G$ %is consistent and
%has  one irreversible and one reversible reaction,  and $G$ has exactly $2$ species.
Given $G\in {\mathcal G}_1$, %and suppose $G$ has exactly $2$ species.
 if $G$ has up to $2$ species and
$G$ is at-most-$3$-reactant, then $G$ admits multistability if and only if $G$ has the form of one of the
networks listed in Rows \eqref{row:1r1i10}--\eqref{row:1r1i2}  of Table \ref{tab:net2}.
\end{theorem}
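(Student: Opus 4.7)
The plan is to prove both directions of this "if and only if" statement by enumerating the candidate networks in $\mathcal{G}_1$ with at most $2$ species and at most $3$ reactants, and then either establishing multistability by explicit construction or ruling it out by the necessary conditions developed in Sections \ref{sec:dim1} and \ref{sec:g0}. The overall strategy parallels that of Theorem \ref{thm:4-reactant}, but one now must accommodate the reversible pair, which contributes two reactant complexes and a pair of opposing stoichiometric vectors.

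For the forward ("only if") direction, I would first apply the multistability bound implicit in Theorem \ref{thm:multistable}: any multistable network must admit at least three positive steady states, which severely restricts the sign behavior of the Jacobian determinant via the sign condition of Theorem \ref{thm:sign} (or Corollary \ref{cry:sign}). Together with the species bound $s \leq 2$ and the reactant bound $\sum_k \alpha_{kj} \leq 3$, the reactant complexes of $G$ are constrained to a finite list; the product complexes remain a priori arbitrary, but the results of Section \ref{sec:ext} (extending Theorems \ref{thm:nomss} and \ref{thm:3-reactant} to $\mathcal{G}_1$) should cut this down to finitely many skeletons together with finitely many one-parameter product-complex families. I would then perform a case analysis on the reactant-complex pattern of the reversible pair and the irreversible reaction, using the steady-state parametrization together with the sign-condition criterion to eliminate those for which the Jacobian determinant cannot change sign on the positive steady-state locus.

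For the reverse ("if") direction, each of the networks listed in Rows \eqref{row:1r1i10}--\eqref{row:1r1i2} of Table \ref{tab:net2} must be shown to admit multistability. I would do this by exhibiting, for each form, explicit rate constants $\kappa$ and a total-constant vector $c$ such that the associated ODE system has at least two stable positive steady states in $\mathcal{P}_c$. When the stoichiometric subspace is one-dimensional, Lemma \ref{lm:stable} reduces the stability verification to checking the sign of the trace of the Jacobian at each candidate positive steady state, which is a straightforward computation once three positive roots of the steady-state equation have been produced.

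The main obstacle is the combinatorial richness of the product complexes: although the reactant complexes are finite-in-number once the species and reactant bounds are imposed, the product complexes $\beta_{\cdot j}$ may in principle have arbitrarily large coefficients, producing infinite parametric families of networks (as the form \eqref{eq:net2} in Theorem \ref{thm:4-reactant} already demonstrates). Identifying the finite list of skeletons and the parametric subfamilies on which multistability actually occurs, and then uniformly verifying the sign conditions and constructing stable steady states across those families, is where the real work is concentrated, because the reversible pair introduces additional algebraic coupling between reactant and product complexes that was absent in the purely irreversible two-reaction setting of Section \ref{sec:g0}.
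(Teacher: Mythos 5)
Your overall architecture (reduce to a finite candidate list, eliminate the non-multistable candidates by necessary conditions, and certify the listed networks by explicit rate constants plus Lemma~\ref{lm:stable}) matches the paper, and your reverse direction is exactly what the paper does. However, the forward direction as you describe it has two concrete problems. First, you propose to prune the candidates by ``extending Theorems~\ref{thm:nomss} and~\ref{thm:3-reactant} to $\mathcal{G}_1$.'' The analogue of Theorem~\ref{thm:3-reactant} for $\mathcal{G}_1$ is \emph{false}: the networks in Rows~\eqref{row:1r1i10}--\eqref{row:1r1i2} are themselves at-most-$3$-reactant and multistable, and the proofs of Theorems~\ref{thm:nomss} and~\ref{thm:3-reactant} lean on $\lambda>0$ and on $h_1$ having only two monomials, both of which fail in $\mathcal{G}_1$ (Lemma~\ref{lm:dim1} only gives $\lambda\neq 0$ there, and $h_1$ has three monomials). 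What the paper actually uses to close up the enumeration is the external characterization of nondegenerate multistationarity for two-species networks in $\mathcal{G}_1$ due to Shiu and de~Wolff (Lemma~\ref{lm:JSthm58}, condition~\eqref{eq:js}), combined with the Descartes-based necessary conditions of Lemma~\ref{lm:nec1} and the degree-exactly-$3$ and no-boundary-steady-state constraints \eqref{eq:3-reactant1}--\eqref{eq:3-reactant2}; this yields the $25$ equivalence classes of Table~\ref{tab:sigma} and then Table~\ref{tab:net2}. Without citing \eqref{eq:js} or an equivalent constraint tying the product complexes to the reactant complexes, your enumeration does not terminate in the stated one-parameter families.

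Second, your proposed elimination tool for the surviving non-multistable candidates --- checking whether the Jacobian determinant can change sign along the positive steady-state locus --- is not what carries the argument, and it is not clear it would. The paper instead eliminates Rows~\eqref{row:1r1i4}--\eqref{row:1r1i9} by substituting the conservation law into $h_1$ to get the univariate polynomial $g(x_1)$ of \eqref{eq:g} and invoking Descartes' rule of signs (Theorem~\ref{thm:des}): in each case $g$ has at most two sign changes (sometimes after a short sub-case analysis on the sign of $c_1$ and an auxiliary derivative bound, as for Row~\eqref{row:1r1i5}), so $cap_{pos}(G)\le 2$ and Theorem~\ref{thm:multistable} precludes multistability. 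This Descartes argument is entirely absent from your plan, and it is the step where the reversible pair's third monomial is actually controlled; I would regard its omission, together with the reliance on a false extension of Theorem~\ref{thm:3-reactant}, as a genuine gap rather than merely an alternative route.
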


\begin{theorem}\label{thm:g2}
For $G\in {\mathcal G}_2$, %is consistent and
if $G$ has only one species and
$G$ is at-most-$3$-reactant, then $G$ admits multistability if and only if $G$ has the form of the
network
\begin{align}\label{eq:unique}
0 ~ \xLeftrightarrow[]{} ~   X_1,
 \;\;\;\;  2X_1~ \xLeftrightarrow[]{} ~ 3X_1.
\end{align}
\end{theorem}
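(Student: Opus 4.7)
The plan is to reduce the problem to a univariate polynomial classification and finish with a one-dimensional stability check. First I would invoke the one-species hypothesis to set up the necessary condition: any $G \in \mathcal{G}_2$ with a single species has one-dimensional stoichiometric subspace and no nontrivial conservation law, so every positive root of the right-hand side $f$ is a positive steady state lying in the unique positive compatibility class. By Theorem~\ref{thm:multistable}, multistability forces $cap_{pos}(G) \ge 3$; hence it suffices to classify at-most-$3$-reactant one-species networks in $\mathcal{G}_2$ that can realize three distinct positive steady states for some choice of rate constants.

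Next I would parametrize. Writing the two reversible pairs as $\alpha_j X_1 \xLeftrightarrow[]{} \beta_j X_1$ (with $\alpha_j \neq \beta_j$, $j=1,2$), one obtains
\[
f(x) \;=\; (\beta_1-\alpha_1)\bigl(\kappa_1 x^{\alpha_1} - \kappa_2 x^{\beta_1}\bigr) + (\beta_2-\alpha_2)\bigl(\kappa_3 x^{\alpha_2} - \kappa_4 x^{\beta_2}\bigr).
\]
A brief sign check shows that within each pair the smaller exponent receives a strictly positive coefficient and the larger a strictly negative one, independently of the pair's orientation. If the four exponents $\{\alpha_1,\beta_1,\alpha_2,\beta_2\}$ are not pairwise distinct, collecting like terms reduces $f$ to at most three monomials, so Descartes' rule of signs gives $cap_{pos} \le 2$, a contradiction. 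Hence all four exponents are distinct, and being in $\{0,1,2,3\}$ by the at-most-$3$-reactant hypothesis, they are exactly $\{0,1,2,3\}$.

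I would then enumerate the three partitions of $\{0,1,2,3\}$ into two reversible pairs. The smaller-gets-$+$ rule yields sign pattern $(+,-,+,-)$ for $\{0,1\}\cup\{2,3\}$ (three sign changes, up to three positive roots) and $(+,+,-,-)$ for each of $\{0,2\}\cup\{1,3\}$ and $\{0,3\}\cup\{1,2\}$ (one sign change, at most one positive root). Only the first partition can satisfy $cap_{pos} \ge 3$, and it corresponds precisely to the network~\eqref{eq:unique}. For the converse, I would exhibit explicit positive rate constants, e.g.\ $\kappa_1 = 6$, $\kappa_2 = 11$, $\kappa_3 = 6$, $\kappa_4 = 1$, giving $f(x) = -(x-1)(x-2)(x-3)$; Lemma~\ref{lm:stable} then reduces stability to the sign of the scalar Jacobian $f'$, and $f'(1)<0$, $f'(3)<0$ yields two stable positive steady states.

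The main obstacle is the combined sign-pattern bookkeeping: one must verify that the exponent-coincidence case truly collapses $f$ to at most three distinct monomials in every sub-case (including when the two pairs share exactly one exponent, where a coefficient could in principle vanish but never flip the overall sign pattern), and that the smaller-gets-$+$ rule is indeed invariant under independently reversing either reversible pair. Once these details are secured, the three-case partition analysis is small and mechanical, and the one-dimensional stability criterion closes the converse direction cleanly.
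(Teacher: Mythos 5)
Your proof is correct, but it takes a genuinely different route from the paper's. The paper disposes of both directions by citing the arrow-diagram machinery of Joshi and Shiu: for sufficiency it observes that \eqref{eq:unique} contains a $3$-alternating subnetwork and invokes their Theorem~3.6~2(c) to get $\lceil 3/2\rceil=2$ stable states, and for necessity it combines Theorem~\ref{thm:nonde} and Theorem~\ref{thm:multistable} to get three nondegenerate positive steady states and then invokes their Theorem~3.6~2(b) to force the arrow diagram $(\rightarrow,\leftarrow,\rightarrow,\leftarrow)$ on four distinct reactant complexes in $\{0,1,2,3\}$. You instead reprove the relevant special case by hand: your ``smaller exponent gets $+$, larger gets $-$'' observation is exactly what the arrow diagram encodes, your Descartes count on the three partitions of $\{0,1,2,3\}$ replaces Theorem~3.6~2(b), and your explicit witness $f(x)=-(x-1)(x-2)(x-3)$ together with Lemma~\ref{lm:stable} replaces Theorem~3.6~2(c). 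Your version is more self-contained (it never needs Theorem~\ref{thm:nonde}, since Descartes bounds roots with multiplicity and hence bounds $cap_{pos}$ directly even in degenerate cases) and produces a concrete certificate of multistability; the paper's version is shorter and generalizes beyond the at-most-$3$-reactant setting. One small remark: in the exponent-coincidence case the merged coefficient at a shared exponent genuinely can take either sign depending on the rate constants, but this is harmless --- all that matters is that $f$ has at most three monomials, hence at most two sign changes, hence at most two positive roots.
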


It is straightforward to prove Theorem \ref{thm:g2} by Theorem \ref{thm:multistable} (see Section \ref{sec:relation}) and \cite[Theorem 3.6]{Joshi:Shiu:Multistationary}. We provide the details in Section \ref{sec:nig2}. Here, our main contributions are Theorem \ref{thm:4-reactant} and Theorem \ref{thm:g1}. See the proofs in Section \ref{sec:stable} and Section \ref{sec:ext}.

Note that for each set ${\mathcal G}_i$, an ambitious goal is to find the subset of all multistable network. Our work provides one way to achieve the goal by detecting multistable networks in  ${\mathcal G}_0$ and ${\mathcal G}_1$ when the number of species
and the number of reactants are restricted. Unfortunately, using the approach presented in this paper, we
cannot characterize multistability for the networks in ${\mathcal G}_2$ with more than one species.

\section{Small networks with one-dimensional stoichiometric subspaces}\label{sec:dim1}
\subsection{Stability}\label{sec:consis}

%\begin{lemma}\label{lm:assumption}
%For any  network $G$, % in \eqref{eq:network} has exactly two reactions (i.e., $m=2$).
%$cap_{pos}(G)\leq 1$ if
%the stoichiometric matrix $\mathcal{N}$ is a zero matrix.
 %$\beta_{k1}-\alpha_{k1}=0$ for all $k=1, \ldots, s$.
%$G$ has at most one positive steady state in any stoichiometric compatibility class ${\mathcal P}_c$.
%\end{lemma}
%\begin{proof}
%By Lemma \ref{lm:scalar}, if the equality \eqref{eq:scalar} does not hold for any $\lambda \in {\mathbb R}\backslash\{0\}$, then
%$cap_{pos}(G)<1$. If  \eqref{eq:scalar} holds for some $\lambda \in {\mathbb R}\backslash\{0\}$, and
%if $\beta_{k1}-\alpha_{k1}=0$ for all $k=1, \ldots, s$, then %by Lemma \ref{lm:scalar},
%If the stoichiometric matrix ${\mathcal S}$ is a zero matrix, then  the
%conservation-law matrix $W$ is square and full rank. That means
%for any $c\in {\mathbb R}$, there is at most one positive steady state in the %stoichiometric compatibility class ${\mathcal P}_c$, see \eqref{eq:pc}.
%\end{proof}

%{\bf Assumption 1.}
\begin{assumption}\label{assumption}
For any  $G\in {\mathcal G}$ with reactions defined in \eqref{eq:network},
by the definition of reaction network,
we know $(\alpha_{11},\ldots, \alpha_{s1})\neq (\beta_{11},\ldots, \beta_{s1})$.
 Without loss of generality, we will assume %there exists $\lambda \in {\mathcal R}\backslash \{\}$
  $\beta_{11}-\alpha_{11}\neq 0$ throughout this paper.
\end{assumption}
\begin{lemma}\label{lm:dim1}
For $G\in {\mathcal G}$,  if $G$  admits multistationarity, %has at least two positive steady states,
%$cap_{pos}(G)\geq 2$,
then
  there exists
$\lambda \in {\mathbb R}\backslash \{0\}$
 %$\lambda>0 \in {\mathbb R}\backslash \{0\}$
 such that the equality %\eqref{eq:scalar}
 \begin{align}\label{eq:scalar}
\left(
\begin{array}{c}
\beta_{12}-\alpha_{12}\\
\vdots\\
\beta_{s2}-\alpha_{s2}
\end{array}
\right)\;=\;
-\lambda\left(
\begin{array}{c}
\beta_{11}-\alpha_{11}\\
\vdots\\
\beta_{s1}-\alpha_{s1}
\end{array}
\right).
\end{align}
 holds, and additionally, if $G\in {\mathcal G}_0$, then
the scalar $\lambda$  is positive.
%, i.e.,
%\begin{align}\label{eq:fact2}
%(\beta_{k1}-\alpha_{k1})(\beta_{k2}-\alpha_{k2})<0, \;\text{for every}\; k \in \{1, 2, 3\}.
%\end{align}
\end{lemma}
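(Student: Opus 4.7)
The plan is to show that the reaction vectors $v_{1}:=(\beta_{11}-\alpha_{11},\dots,\beta_{s1}-\alpha_{s1})^{T}$ and $v_{2}:=(\beta_{12}-\alpha_{12},\dots,\beta_{s2}-\alpha_{s2})^{T}$ must be collinear whenever $G$ admits multistationarity, and that for $G\in\mathcal{G}_{0}$ the proportionality constant has a definite sign. By Assumption~\ref{assumption} one has $v_{1}\neq 0$, and by the defining condition $(\alpha_{12},\dots,\alpha_{s2})\neq(\beta_{12},\dots,\beta_{s2})$ also $v_{2}\neq 0$, so collinearity is equivalent to the existence of a unique nonzero $\lambda\in\mathbb{R}$ with $v_{2}=-\lambda v_{1}$, which is exactly \eqref{eq:scalar}.

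First I would rewrite $f(x)$ from \eqref{eq:sys} in the factored form $c_{1}(x)\,v_{1}+c_{2}(x)\,v_{2}$: for $G\in\mathcal{G}_{0}$ each $c_{j}(x)$ is a single strictly positive monomial $\kappa_{j}\,x_{1}^{\alpha_{1j}}\cdots x_{s}^{\alpha_{sj}}$; for $G\in\mathcal{G}_{1}$, $c_{1}(x)$ is a binomial produced by the reversible pair while $c_{2}(x)$ remains a strictly positive monomial; and for $G\in\mathcal{G}_{2}$ both $c_{1}(x)$ and $c_{2}(x)$ are binomials. I would then argue by contradiction: assume $v_{1}$ and $v_{2}$ are linearly independent, so that at any positive steady state $x^{\ast}$ both $c_{1}(x^{\ast})=0$ and $c_{2}(x^{\ast})=0$ must hold.

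For $G\in\mathcal{G}_{0}$ and $G\in\mathcal{G}_{1}$ this is immediately impossible, since the coefficient $c_{2}(x^{\ast})$ is a product of a positive rate constant with positive powers of positive coordinates and therefore cannot vanish on $\mathbb{R}_{>0}^{s}$; hence no positive steady state exists, contradicting multistationarity. For $G\in\mathcal{G}_{2}$ the two binomials can individually vanish, so the contradiction must come from uniqueness within each stoichiometric compatibility class: after taking logarithms, the system $c_{1}=c_{2}=0$ becomes log-linear with coefficient vectors $v_{1}$ and $v_{2}$, and its solution set in log-coordinates is an affine translate of the orthogonal complement of ${\rm span}(v_{1},v_{2})$. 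By Birch's theorem, its intersection with any ${\mathcal P}_{c}$ (of complementary dimension $s-{\rm rank}({\mathcal N})=s-2$) is a single point, so multistationarity is again ruled out. This $\mathcal{G}_{2}$ step---invoking Birch's theorem to transfer uniqueness from the log-affine picture to the compatibility class---is the main obstacle, and the only step where the argument relies on something deeper than elementary positivity.

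Having established $v_{2}=-\lambda v_{1}$ with $\lambda\neq 0$, the sign condition for $\mathcal{G}_{0}$ follows by substitution: the positive steady-state equation reduces to $c_{1}(x^{\ast})\,v_{1}-\lambda\,c_{2}(x^{\ast})\,v_{1}=0$, and since $v_{1}\neq 0$ this forces $c_{1}(x^{\ast})=\lambda\,c_{2}(x^{\ast})$. Because both $c_{1}(x^{\ast})$ and $c_{2}(x^{\ast})$ are strictly positive monomials evaluated at a positive state, $\lambda$ must be positive, completing the proof.
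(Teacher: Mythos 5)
Your proof is correct, but it takes a genuinely different route from the paper: the paper's proof of this lemma is purely by citation, invoking \cite[Lemma 4.1]{Joshi:Shiu:Multistationary} for ${\mathcal G}_0$, \cite[Theorem 5.8]{Joshi:Shiu:Multistationary} for ${\mathcal G}_1$, and \cite[Theorem 5.12]{Joshi:Shiu:Multistationary} for ${\mathcal G}_2$, whereas you reprove the collinearity from first principles. Your decomposition $f=c_1(x)v_1+c_2(x)v_2$ and the positivity argument for ${\mathcal G}_0$ and ${\mathcal G}_1$ (linear independence of $v_1,v_2$ would force $c_2(x^*)=0$, impossible for a strictly positive monomial on ${\mathbb R}^s_{>0}$, so no positive steady state exists at all) is airtight, and the sign claim for ${\mathcal G}_0$ follows correctly from $c_1(x^*)=\lambda c_2(x^*)$ with both sides positive, using that multistationarity supplies at least one positive steady state. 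For ${\mathcal G}_2$ your appeal to Birch's theorem is legitimate and is the one place where a non-elementary input is genuinely needed: when $v_1,v_2$ are independent the positive steady-state locus is log-affine parallel to ${\rm span}(v_1,v_2)^{\perp}$, and Birch's theorem gives exactly one intersection point with each coset of ${\rm span}(v_1,v_2)$, so $cap_{pos}(G)\le 1$; one could equivalently note that two reversible pairs with independent reaction vectors form a reversible deficiency-zero network and cite the deficiency-zero theorem. The only blemish is the parenthetical attaching dimension $s-2$ to ${\mathcal P}_c$ --- it is the log-affine steady-state set that has dimension $s-2$, while ${\mathcal P}_c$ has dimension ${\rm rank}({\mathcal N})=2$ --- but this does not affect the argument. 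The paper's approach buys brevity; yours buys self-containedness and makes explicit that only the ${\mathcal G}_2$ case requires more than elementary positivity.
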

\begin{proof}
By \cite[Lemma 4.1]{Joshi:Shiu:Multistationary}, for any $G\in {\mathcal G}_0$, if $cap_{pos}(G)\geq 1$,
then there exists $\lambda>0$ such that \eqref{eq:scalar} holds.
By \cite[Theorem 5.8]{Joshi:Shiu:Multistationary}, for any $G\in {\mathcal G}_1$, if $cap_{pos}(G)\geq 2$,
then there exists $\lambda\neq 0$ such that \eqref{eq:scalar} holds.
By \cite[Theorem 5.12]{Joshi:Shiu:Multistationary}, for any $G\in {\mathcal G}_2$, if $cap_{pos}(G)\geq 2$,
then there exists $\lambda\neq 0$ such that \eqref{eq:scalar} holds.
\end{proof}

\begin{corollary}\label{cry:dim1}
For $G\in {\mathcal G}$,  if $G$ admits multistationarity, %has at least two positive steady states,
%$cap_{pos}(G)\geq 2$,
then  the stoichiometric subspace of $G$ is one-dimensional.
\end{corollary}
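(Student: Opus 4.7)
The statement is an almost immediate consequence of Lemma~\ref{lm:dim1}, so the plan is essentially to interpret that lemma as a rank statement about the stoichiometric matrix $\mathcal{N}$, checking that the argument applies uniformly across the three families $\mathcal{G}_0$, $\mathcal{G}_1$, $\mathcal{G}_2$.

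First, I would apply Lemma~\ref{lm:dim1} to produce a scalar $\lambda \in \mathbb{R}\setminus\{0\}$ for which the net reaction vectors
\[
v_1 \;:=\; (\beta_{11}-\alpha_{11},\,\ldots,\,\beta_{s1}-\alpha_{s1})^{\top}, \qquad
v_2 \;:=\; (\beta_{12}-\alpha_{12},\,\ldots,\,\beta_{s2}-\alpha_{s2})^{\top}
\]
satisfy $v_2 = -\lambda\, v_1$. Assumption~\ref{assumption} ensures $\beta_{11}-\alpha_{11}\neq 0$, so $v_1\neq 0$; since $\lambda\neq 0$, also $v_2\neq 0$. Thus both vectors lie in the one-dimensional real subspace $\mathrm{span}_{\mathbb{R}}(v_1)$.

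Next I would translate this into a statement about $\mathrm{im}(\mathcal{N})$ according to the family:
\begin{itemize}
\item If $G\in\mathcal{G}_0$, then $\mathcal{N}$ has exactly the two columns $v_1, v_2$, so $\mathrm{im}(\mathcal{N})=\mathrm{span}(v_1)$.
\item If $G\in\mathcal{G}_1$ or $G\in\mathcal{G}_2$, each reversible reaction contributes two columns, namely $v_i$ and $-v_i$, to $\mathcal{N}$. These extra columns still lie in $\mathrm{span}(v_1)$, so again $\mathrm{im}(\mathcal{N})=\mathrm{span}(v_1)$.
\end{itemize}
In every case the stoichiometric subspace $S=\mathrm{im}(\mathcal{N})$ equals the one-dimensional subspace $\mathrm{span}(v_1)$, which proves the corollary.

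There is no real obstacle here: all the content has been absorbed into Lemma~\ref{lm:dim1}, and the only thing to verify beyond citing that lemma is the harmless bookkeeping that reversible reactions add no new directions to the column span. The one place to be slightly careful is the hypothesis ``admits multistationarity'': Lemma~\ref{lm:dim1} requires $cap_{pos}(G)\geq 1$ for $\mathcal{G}_0$ and $cap_{pos}(G)\geq 2$ for $\mathcal{G}_1, \mathcal{G}_2$, but multistationarity ($cap_{pos}(G)\geq 2$) certainly implies both, so the lemma applies across the board.
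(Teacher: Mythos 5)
Your proof is correct and matches the paper's (implicit) argument: the paper states this corollary without proof as an immediate consequence of Lemma~\ref{lm:dim1}, and your spelling out of why the scalar relation \eqref{eq:scalar} together with Assumption~\ref{assumption} forces ${\rm im}({\mathcal N})={\rm span}(v_1)$ in all three families is exactly the intended deduction. The extra care about reversible reactions contributing columns $\pm v_i$ and about the hypothesis strengths in Lemma~\ref{lm:dim1} is harmless and accurate.
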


For any network $G$, if $G$ has one-dimensional stoichiometric subspace,
then  all $f_i$'s defined in \eqref{eq:sys} will be the same polynomial up to scaling.
For instance, for $G\in {\mathcal G}_0$,
we substitute
\eqref{eq:scalar} into $f(x)$ in \eqref{eq:sys}, and we get
%$f(x)$ becomes
\begin{align}\label{eq:f}
f_i \;=\; \kappa_1\left(\beta_{i1}-\alpha_{i1}\right)\Pi_{k=1}^s x_k^{\alpha_{k1}}-\lambda\kappa_2\left(\beta_{i1}-\alpha_{i1}\right)\Pi_{k=1}^s x_k^{\alpha_{k2}}, \;\; i=1, \ldots, s.
\end{align}
By this fact, we can derive a simple criterion (Lemma \ref{lm:stable}) for the stability.

  \begin{lemma}\label{lm:stable}
For  any $G\in {\mathcal G}$, if  the stoichiometric subspace of $G$ is one-dimension-al, then
a nondegenerate steady state $x^*$ is stable if and only if
%for a nondegenerate steady state $x^*$,
%it is  stable if and only if
$\sum_{i=1}^s\frac{\partial f_i}{\partial x_i}|_{x=x^*}<0$. % if and only if .
\end{lemma}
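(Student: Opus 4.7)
The plan is to exploit the rank-one structure of the Jacobian that the one-dimensional stoichiometric subspace forces on the vector field. By Assumption~\ref{assumption} the vector $c := (\beta_{11}-\alpha_{11}, \ldots, \beta_{s1}-\alpha_{s1})^T$ is nonzero and, since $\dim S = 1$, it spans $S$. The factorization displayed in \eqref{eq:f} for $G \in \mathcal{G}_0$ shows that every coordinate of $f$ is a common scalar multiple of this vector: $f_i = c_i \cdot g(x)$ for a single polynomial $g(x)$. For $G \in \mathcal{G}_1$ and $G \in \mathcal{G}_2$ the same factorization goes through by using Lemma~\ref{lm:dim1} (i.e.\ \eqref{eq:scalar}) to absorb the forward and backward monomials of each reversible pair into one scalar function $g(x)$, the sign issue in $\lambda$ being irrelevant since it is absorbed into $g$.

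Given this factorization, the Jacobian at the steady state $x^*$ takes the outer-product form
\[
\mathrm{Jac}_f(x^*) \;=\; c \cdot v^T, \qquad v := \nabla g(x^*).
\]
A rank-one matrix of this shape has one possibly nonzero eigenvalue equal to
\[
v^T c \;=\; \mathrm{tr}(c\, v^T) \;=\; \sum_{i=1}^s c_i\, \partial_{x_i} g(x^*) \;=\; \sum_{i=1}^s \pd{f_i}{x_i}\bigg|_{x=x^*},
\]
with eigenvector $c$, and the remaining $s-1$ eigenvalues equal to zero (with eigenspace $v^\perp$).

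Next I would use the nondegeneracy hypothesis to pin down the nonzero eigenvalue. Since $S = \mathrm{span}(c)$ and $\mathrm{Jac}_f(x^*)\cdot c = (v^T c)\, c$, the restriction $\mathrm{Jac}_f(x^*)|_S$ is multiplication by $v^T c$; nondegeneracy, which requires $\mathrm{im}(\mathrm{Jac}_f(x^*)|_S) = S$, is therefore equivalent to $v^T c \neq 0$. Hence the unique nonzero eigenvalue of $\mathrm{Jac}_f(x^*)$ is the real number $\sum_{i=1}^s \partial f_i/\partial x_i |_{x^*}$, and by the definition of exponential stability $x^*$ is stable iff this number is negative.

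The argument is essentially a linear-algebra observation; the only step that requires genuine care is the uniform derivation of the factorization $f_i = c_i \cdot g(x)$ across the three families $\mathcal{G}_0, \mathcal{G}_1, \mathcal{G}_2$, since for the reversible classes one has to verify that the four or more monomials appearing in \eqref{eq:sys} really do collapse into a single common scalar $g(x)$ after applying \eqref{eq:scalar}. Once this common-factor form is in hand, the eigenvalue computation and the reduction of exponential stability to a sign condition on the trace follow immediately.
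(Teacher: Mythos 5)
Your proposal is correct and follows essentially the same route as the paper: both arguments use the one-dimensionality of $S$ to write all $f_i$ as scalar multiples of a single function, conclude that ${\rm Jac}_f(x^*)$ has rank one so its only possibly nonzero eigenvalue equals the trace, invoke nondegeneracy to ensure that eigenvalue is nonzero, and read off stability from the sign of the trace. Your version merely makes the outer-product form $c\,v^{T}$ and the identification of nondegeneracy with $v^{T}c\neq 0$ more explicit than the paper does.
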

\begin{proof}
Since the stoichiometric subspace of $G$ is one-dimensional, there exists $\lambda \in {\mathbb R}$ such that
the equality \eqref{eq:scalar} holds.
We substitute
\eqref{eq:scalar} into $f(x)$ in \eqref{eq:sys}, and
we have
\begin{align}\label{eq:conf}
\left(\beta_{i1}-\alpha_{i1}\right)f_1 \;=\; \left(\beta_{11}-\alpha_{11}\right) f_i, \;\;\text{for}\; i=2, \ldots, s.
\end{align}
%So, by \eqref{eq:conf}, we have conservation laws
%\begin{align}\label{eq:con}
%\left(\beta_{i1}-\alpha_{i1}\right)x_1 - \left(\beta_{11}-\alpha_{11}\right) x_i - c_{i-1} \;=\; 0, \;\;\text{for}\; i=2, \ldots, s.
%\end{align}
By \eqref{eq:conf}, the matrix ${\rm Jac}_{f}(x^*)$  has rank $1$, and so, it has at most one non-zero
eigenvalue.
Note also ${\rm Jac}_{f}(x^*)$ has at least one non-zero eigenvalue since $x^*$ is nondegenerate.
Therefore,
there is only one non-zero eigenvalue, which is equal to the trace of
the Jacobian matrix ${\rm Jac}_{f}(x^*)$. So, this eigenvalue has a negative real part if and only if the trace of ${\rm Jac}_{f}(x^*)$ is negative.
\end{proof}

\subsection{Sign condition}\label{sec:sign}
For any $G\in {\mathcal G}$, suppose $f(x)=(f_1(x), \ldots, f_s(x))$ is defined as in \eqref{eq:sys}, and
suppose the stoichiometric subspace of $G$ is one-dimensional.
Define $h$, the system: %augmented with the conservation laws
\begin{align}\label{eq:h}
h_1 ~:=~ f_1, \;\; %=\kappa_1\left(\beta_{11}-\alpha_{11}\right)\Pi_{k=1}^s x_k^{\alpha_{k1}}-\lambda\kappa_2\left(\beta_{11}-\alpha_{11}\right)\Pi_{k=1}^s x_k^{\alpha_{k2}} \notag \\
h_i ~:= ~(\beta_{i1}-\alpha_{i1})x_1 - (\beta_{11}-\alpha_{11})x_i - c_{i-1}, \;\; 2\leq i\leq s,
\end{align}
where $c_1, \ldots, c_{s-1}\in {\mathbb R}$.  Here,  the linear equation $h_i$
is derived by the ODEs \eqref{eq:sys} and the linear-dependency condition \eqref{eq:conf}.

\begin{theorem}\label{thm:sign}
Given $G\in {\mathcal G}$, suppose the stoichiometric subspace of $G$ is one-dimensional.
If for a rate-constant vector $\kappa^*$ and  a total-constant vector $c^*$, $G$ has exactly $N$ distinct positive steady states $x^{(1)}, \ldots, x^{(N)}$, where $x^{(1)}, \ldots, x^{(N)}$ are ordered  according to their first coordinates (i.e., $x_1^{(1)}<\ldots<x_1^{(N)}$), and  all positive steady states are nondegenerate,
then $|{\rm Jac}_h(x^{(i)})||{\rm Jac}_h(x^{(i+1)})|< 0$ for $i\in \{1, \ldots, N-1\}$.
\end{theorem}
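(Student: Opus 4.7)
The plan is to reduce the system $h=0$ to a one-variable polynomial equation in $x_1$, identify $|{\rm Jac}_h(x^{(j)})|$ with the derivative of that polynomial at $x_1^{(j)}$ up to a fixed nonzero multiplicative factor, and then invoke the classical fact that a differentiable function has derivatives of opposite signs at consecutive simple zeros.

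First, set $b:=\beta_{11}-\alpha_{11}$, which is nonzero by Assumption \ref{assumption}, and $a_i:=\beta_{i1}-\alpha_{i1}$ for $i=1,\ldots,s$ (so $a_1=b$). The linear equations $h_i(x)=0$ for $i\geq 2$ solve uniquely as $x_i=\phi_i(x_1):=(a_i x_1-c_{i-1})/b$. Define the reduced one-variable polynomial
\begin{align*}
g(x_1)\;:=\;f_1\!\left(x_1,\phi_2(x_1),\ldots,\phi_s(x_1)\right),
\end{align*}
so that $g'(x_1)=\sum_{i=1}^{s}(a_i/b)\,\partial_{x_i}f_1\big|_{x_k=\phi_k(x_1)}$.

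Next, compute $|{\rm Jac}_h|$ by a single column operation on the $s\times s$ Jacobian. The matrix has first row $(\partial_{x_1}f_1,\ldots,\partial_{x_s}f_1)$, and for $i\geq 2$, row $i$ has $a_i$ in column~$1$, $-b$ in column~$i$, and zeros elsewhere. Adding $(a_i/b)$ times column~$i$ to column~$1$ for each $i\geq 2$ kills the lower entries of column~$1$ and turns the $(1,1)$-entry into $\partial_{x_1}f_1+\sum_{i=2}^{s}(a_i/b)\,\partial_{x_i}f_1$, while the $(s-1)\times(s-1)$ lower-right sub-block remains the diagonal $-b\cdot I_{s-1}$. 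Thus for any $x$ on the line $\{h_i=0:\,i\geq 2\}$ (in particular, at every positive steady state),
\begin{align*}
|{\rm Jac}_h(x)|\;=\;(-b)^{s-1}\,g'(x_1).
\end{align*}

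Then, observe that the positive steady states in ${\mathcal P}_{c^*}$ are precisely the points on the affine line $\{h_i=0:i\geq 2\}$ lying in ${\mathbb R}^s_{>0}$ at which $f_1=0$. Each condition $\phi_i(x_1)>0$, along with $x_1>0$, is linear in $x_1$, so their common solution set is an open interval $I\subset(0,\infty)$. By the polynomial identity $a_i f_1=b f_i$ coming from \eqref{eq:f}, $f(x)=0$ along this line is equivalent to $g(x_1)=0$. Hence $x_1^{(1)}<\cdots<x_1^{(N)}$ are exactly the distinct roots of $g$ in $I$, in increasing order, and nondegeneracy of each $x^{(j)}$ forces $g'(x_1^{(j)})\neq 0$ through the boxed identity.

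Finally, since $x_1^{(j)}$ and $x_1^{(j+1)}$ are consecutive simple roots of $g$ with no other root of $g$ in $(x_1^{(j)},x_1^{(j+1)})\subset I$, a standard sign-crossing argument (the continuous $g$ has fixed sign on the open interval between them, matching the sign just to the right of $x_1^{(j)}$ and opposite to the sign just to the right of $x_1^{(j+1)}$) gives $g'(x_1^{(j)})\,g'(x_1^{(j+1)})<0$. Multiplying by the positive factor $b^{2(s-1)}=\bigl((-b)^{s-1}\bigr)^2$ produces the claimed inequality $|{\rm Jac}_h(x^{(j)})|\,|{\rm Jac}_h(x^{(j+1)})|<0$. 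The main delicate step is bookkeeping the signs and factors of $b$ in the Jacobian column operation; the rest is a routine reduction to the elementary fact about simple zeros on an interval.
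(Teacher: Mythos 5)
Your proposal follows essentially the same route as the paper's proof: eliminate $x_2,\ldots,x_s$ via the linear equations $h_i=0$ to obtain a univariate polynomial $g$, establish the identity $|{\rm Jac}_h(x)|=(\alpha_{11}-\beta_{11})^{s-1}g'(x_1)$ along the line $\{h_2=\cdots=h_s=0\}$ (your column-operation computation is correct and reproduces Lemmas \ref{lm:jh} and \ref{lm:jac} in a more self-contained way than the paper's route through the reduced Jacobian and long division), observe that the $x_1^{(j)}$ are exactly the roots of $g$ in an open interval so that consecutive ones have no root of $g$ between them (Lemmas \ref{lm:interval} and \ref{lm:root}), and conclude by alternation of the sign of $g'$ at consecutive simple zeros (your direct sign-crossing argument in place of the factorization argument of Lemma \ref{lm:uni}). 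All of these computations check out, including the sign bookkeeping.

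The one genuine gap is the sentence claiming that ``nondegeneracy of each $x^{(j)}$ forces $g'(x_1^{(j)})\neq 0$ through the boxed identity.'' The identity $|{\rm Jac}_h|=(-b)^{s-1}g'$ converts $|{\rm Jac}_h(x^{(j)})|\neq 0$ into $g'(x_1^{(j)})\neq 0$, but it does not address why the paper's definition of nondegeneracy, namely ${\rm im}\left({\rm Jac}_f(x^{(j)})|_{S}\right)=S$, implies $|{\rm Jac}_h(x^{(j)})|\neq 0$; that is a separate statement (the paper's Lemma \ref{lm:nonde}) and it is exactly what lets the hypothesis of the theorem enter the argument. Without it, the simplicity of the roots of $g$ — on which your entire sign-crossing step rests — is unestablished. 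The missing link is short: writing ${\mathcal N}_1=(\beta_{11}-\alpha_{11},\ldots,\beta_{s1}-\alpha_{s1})^{\top}$, the relations $(\beta_{i1}-\alpha_{i1})f_1=(\beta_{11}-\alpha_{11})f_i$ give ${\rm Jac}_f(x^{(j)})=\tfrac{1}{b}{\mathcal N}_1\nabla f_1(x^{(j)})$, a rank-one matrix sending the generator ${\mathcal N}_1$ of $S$ to $\tfrac{1}{b}\left(\nabla f_1(x^{(j)}){\mathcal N}_1\right){\mathcal N}_1$; hence the restriction to $S$ is surjective precisely when $\nabla f_1(x^{(j)}){\mathcal N}_1=\sum_{i=1}^s(\beta_{i1}-\alpha_{i1})\partial_{x_i}f_1(x^{(j)})\neq 0$, and this quantity is a fixed nonzero multiple of your $g'(x_1^{(j)})$. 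With that supplied, your proof is complete.
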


The goal of this subsection is to prove Theorem \ref{thm:sign}. We first prepare some lemmas. In fact,
Theorem
\ref{thm:sign} directly follows from Lemma \ref{lm:uni}, Lemma \ref{lm:jac}, Lemma \ref{lm:root} and Lemma \ref{lm:nonde}.

\begin{lemma}\label{lm:uni}
Let $g(z):=a_nz^n+\cdots+a_1z+a_0$ be a univariate polynomial in ${\mathbb R}[z]$. If the equation
$g(z)=0$ has exactly $r$ $(r\geq 2)$ distinct real roots, say
$z_1<\cdots<z_r$, and if $g'\left(z_i\right)\neq 0$ for $i\in \{1, \ldots, r\}$, then
 we have $g'(z_i)g'(z_{i+1})<0$ for $i\in \{1, \ldots, r-1\}$.
\end{lemma}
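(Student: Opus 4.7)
The plan is to exploit the fact that each $z_i$ is a \emph{simple} root of $g$ (because $g'(z_i)\neq 0$), so that $g$ genuinely changes sign when $z$ passes through $z_i$, and then track these sign changes across consecutive roots.

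First I would fix $i\in\{1,\dots,r-1\}$ and observe that on the open interval $(z_i,z_{i+1})$ the polynomial $g$ has no zero (since $z_1<\cdots<z_r$ exhausts the real roots), so by the intermediate value theorem $g$ has constant sign on $(z_i,z_{i+1})$; call this sign $\varepsilon\in\{+1,-1\}$. Since $z_i$ is a simple root, I can write $g(z)=(z-z_i)q_i(z)$ with $q_i(z_i)=g'(z_i)\neq 0$, so in a punctured neighborhood of $z_i$ the sign of $g(z)$ equals the sign of $(z-z_i)\,g'(z_i)$. Applying this on the right side of $z_i$, the sign of $g$ just above $z_i$ equals $\operatorname{sgn}(g'(z_i))$, and this must equal $\varepsilon$. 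The same argument on the left side of $z_{i+1}$ shows $\operatorname{sgn}(-g'(z_{i+1}))=\varepsilon$, i.e.\ $\operatorname{sgn}(g'(z_{i+1}))=-\varepsilon$. Multiplying gives $\operatorname{sgn}(g'(z_i)\,g'(z_{i+1}))=-\varepsilon^2=-1$, which is the desired inequality.

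I do not expect any serious obstacle: the argument is elementary real analysis (Taylor's formula to first order, or equivalently the local factorization of $g$ at a simple root, together with the intermediate value theorem). One could alternatively phrase it via Rolle's theorem by noting that $g'$ must vanish somewhere in $(z_i,z_{i+1})$; but to actually conclude the sign-alternation of $g'(z_i)$ and $g'(z_{i+1})$ one still needs the local sign-crossing behavior of $g$ at a simple root, so the factorization route above is more direct. The hypothesis $r\ge 2$ is only used to ensure that the statement is non-vacuous (there exists at least one $i$ with $1\le i\le r-1$).
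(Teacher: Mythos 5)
Your proof is correct. It takes a recognizably different route from the paper's: you factor out one root at a time, $g(z)=(z-z_i)q_i(z)$ with $q_i(z_i)=g'(z_i)$, and directly read off the sign of $g'$ at each endpoint from the (constant, by the intermediate value theorem) sign $\varepsilon$ of $g$ on the rootless interval $(z_i,z_{i+1})$, concluding $\operatorname{sgn}(g'(z_i))=\varepsilon$ and $\operatorname{sgn}(g'(z_{i+1}))=-\varepsilon$. The paper instead factors out both consecutive roots at once, $g(z)=(z-z_i)(z-z_{i+1})h(z)$, differentiates to get the identity $g'(z_i)g'(z_{i+1})=-(z_{i+1}-z_i)^2h(z_i)h(z_{i+1})$, and argues by contradiction: if the product of derivatives were positive, $h$ would change sign on $(z_i,z_{i+1})$ and produce an extra root of $g$ there. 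Both arguments rest on the same two ingredients (no roots between consecutive roots, plus the local behavior of $g$ at a simple root), but yours is a direct sign computation while the paper's is an indirect argument via an algebraic identity; your version has the small advantage of explicitly identifying the signs of $g'(z_i)$ and $g'(z_{i+1})$ in terms of the sign of $g$ on the gap, rather than only their product. Your closing remark about Rolle's theorem is also apt: Rolle alone gives a zero of $g'$ in the gap but not the alternation of signs, so the local-factorization step is genuinely needed either way.
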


\begin{proof}
%\textcolor{red}{Hao Xu}
Fix any $i\in \{1, \ldots, r-1\}$, let $ h(z) $ be the univariate polynomial such that
$g(z) =(z-z_{i})(z-z_{i+1})h(z)$. Note that
    $$g'(z) \;=\; (z-z_{i+1})h(z) + (z-z_{i})h(z) + (z-z_{i})(z-z_{i+1})h'(z).$$
    %g'(z_{i}) &= (z_{i}-z_{i+1})h(z_{i}) \ne 0, \\
    %g'(z_{i+1}) &= (z_{i+1}-z_{i})h(z_{i+1}) \ne 0, \\
  So,
  $g'(z_{i})g'(z_{i+1}) \;=\; -(z_{i+1}-z_{i})^{2}h(z_{i})h(z_{i+1})$.
 If $ g'(z_{i})g'(z_{i+1}) > 0 $, then we have $ h(z_{i})h(z_{i+1}) < 0 $. Notice that $ h(z) $ is a continuous function, so there exists $ z_{0} \in (z_{i},z_{i+1}) $ such that $ h(z_{0}) = 0 $.  We know $z_{0} \ne z_{i} $ for $ i \in \{1,\ldots,r\}$, which is a  contradiction to the hypothesis that $g(z)=0$ has exactly $r$ distinct roots. Therefore, we definitely have $ g'(z_{i})g'(z_{i+1}) < 0 $.
\end{proof}

\begin{lemma}\label{lm:jh}
The determinant of the Jacobian matrix of $h$ defined in \eqref{eq:h} with respect to $x$ (denoted by $|{\rm Jac}_h|$) is equal to
\begin{align}\label{eq:cryjh}
(\alpha_{11}-\beta_{11})^{s-1}\sum_{i=1}^s\frac{\partial f_i}{\partial x_i}.
\end{align}
which is also equal to
\begin{align}\label{eq:jh}
(\alpha_{11}-\beta_{11})^{s-2}\sum_{i=1}^s(\alpha_{i1}-\beta_{i1})\frac{\partial f_1}{\partial x_i},
%\;=\;(\beta_{11}-\alpha_{11})^{s-1}\sum_{i=1}^s\frac{\partial f_i}{\partial x_i}.
\end{align}
\end{lemma}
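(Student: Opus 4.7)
The plan is to compute $|{\rm Jac}_h|$ directly by column reduction and then use the one-dimensional stoichiometric subspace assumption to relate the two expressions. Writing $\gamma_i:=\beta_{i1}-\alpha_{i1}$, the definition \eqref{eq:h} shows that ${\rm Jac}_h$ has first row $(\partial f_1/\partial x_1,\ldots,\partial f_1/\partial x_s)$, and for $2\le i\le s$ its $i$-th row has $\gamma_i$ in column $1$, $-\gamma_1$ in column $i$, and zeros elsewhere. By Assumption \ref{assumption}, $\gamma_1\ne 0$, so division by $\gamma_1$ is legal.

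First I would establish \eqref{eq:jh}. For each $j\in\{2,\ldots,s\}$, add $(\gamma_j/\gamma_1)$ times column $j$ to column $1$. For row $i\ge 2$ this cancels the entry $\gamma_i$ in column $1$ (the contribution is $\gamma_i+(\gamma_i/\gamma_1)(-\gamma_1)=0$), while in row $1$ the $(1,1)$-entry becomes
$$\frac{\partial f_1}{\partial x_1}+\sum_{j=2}^s\frac{\gamma_j}{\gamma_1}\frac{\partial f_1}{\partial x_j}\;=\;\frac{1}{\gamma_1}\sum_{j=1}^s\gamma_j\frac{\partial f_1}{\partial x_j}.$$
The resulting matrix is block upper-triangular with lower-right $(s-1)\times(s-1)$ block equal to $-\gamma_1 I$, so
$$|{\rm Jac}_h|\;=\;\frac{(-\gamma_1)^{s-1}}{\gamma_1}\sum_{j=1}^s\gamma_j\frac{\partial f_1}{\partial x_j}.$$
A short sign manipulation, using $(-\gamma_1)^{s-1}/\gamma_1=(-1)^{s-1}\gamma_1^{s-2}=-(-\gamma_1)^{s-2}=-(\alpha_{11}-\beta_{11})^{s-2}$ together with $\gamma_j=-(\alpha_{j1}-\beta_{j1})$, absorbs the two minus signs and produces exactly \eqref{eq:jh}.

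To then obtain \eqref{eq:cryjh}, I would invoke the linear-dependency identity \eqref{eq:conf}, namely $\gamma_i f_1=\gamma_1 f_i$, which is available because the stoichiometric subspace of $G$ is one-dimensional (as is hypothesized in the statement). Differentiating with respect to $x_i$ yields $\partial f_i/\partial x_i=(\gamma_i/\gamma_1)\partial f_1/\partial x_i$, and summing over $i$ gives
$$\sum_{i=1}^s\frac{\partial f_i}{\partial x_i}\;=\;\frac{1}{\gamma_1}\sum_{i=1}^s\gamma_i\frac{\partial f_1}{\partial x_i}.$$
Multiplying both sides by $(\alpha_{11}-\beta_{11})^{s-1}=(-\gamma_1)^{s-1}$ reproduces the boxed formula for $|{\rm Jac}_h|$ displayed above, so \eqref{eq:cryjh} equals \eqref{eq:jh}.

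The computation is essentially mechanical: the near-diagonal sparsity of ${\rm Jac}_h$ in rows $2,\ldots,s$ makes the column reduction trivial, and the equivalence between the two expressions is just the linear-dependency of the $f_i$'s. The only real obstacle is sign bookkeeping — correctly tracking the parity $(-1)^{s-1}$ against the conversion $\gamma_i\leftrightarrow \alpha_{i1}-\beta_{i1}$ — but this is routine.
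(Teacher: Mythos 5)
Your proof is correct, and the computations check out: the column operations annihilate the first-column entries of rows $2,\ldots,s$, the cofactor expansion gives $|{\rm Jac}_h|=\frac{(-\gamma_1)^{s-1}}{\gamma_1}\sum_{j=1}^s\gamma_j\,\partial f_1/\partial x_j$, and the sign bookkeeping converting this to \eqref{eq:jh} is right. However, your route is genuinely different from the paper's. The paper first establishes \eqref{eq:cryjh} by rescaling $h_2,\ldots,h_s$ by $\frac{1}{\alpha_{11}-\beta_{11}}$ and then citing general results of Conradi and Pantea \cite{ConradiPantea_Multi} (their Propositions 9.1 and 9.2) which identify the determinant of the reduced Jacobian matrix with the sum of the $r\times r$ principal minors of ${\rm Jac}_f$; with $r={\rm rank}({\mathcal N})=1$ this sum is the trace $\sum_i\partial f_i/\partial x_i$, and \eqref{eq:jh} is then obtained by substituting \eqref{eq:conf}. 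You proceed in the opposite order: you prove \eqref{eq:jh} first by an entirely self-contained column reduction exploiting the near-diagonal sparsity of rows $2,\ldots,s$, and only then use the linear dependency $\gamma_i f_1=\gamma_1 f_i$ to pass to \eqref{eq:cryjh}. What your approach buys is independence from the external machinery of reduced Jacobians — everything is an explicit determinant computation — at the modest cost of the sign-parity bookkeeping you flag. The paper's approach is shorter on the page but pushes the real work into the cited propositions; both correctly use the one-dimensionality hypothesis (via \eqref{eq:conf}) only to link the two expressions, and your observation that \eqref{eq:jh} itself needs nothing beyond the definition of $h$ is a small structural clarification over the paper's presentation.
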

\begin{proof}
For $i=2, \ldots, s$, let $\tilde h_i=\frac{h_i}{\alpha_{11} - \beta_{11}}$, where
$h_i$ is defined in \eqref{eq:h}. Then, the determinant of
the Jacobian matrix of $h_1, \tilde h_2, \ldots, \tilde h_s$ with respect to $x$ is equal to the determinant of the reduced Jacobian matrix (see \cite[Definition 9.8]{ConradiPantea_Multi} and \cite[Proposition 9.2]{ConradiPantea_Multi}).
By \cite[Proposition 9.1]{ConradiPantea_Multi}, we know the determinant of the reduced Jacobian matrix is the sum of the $r \times r$ (here, $r:={\tt rank}({\mathcal N})=1$) principal minors of ${\rm Jac}_f$.
So, we have $|{\rm Jac}_h|=(\alpha_{11}-\beta_{11})^{s-1}\sum_{i=1}^s\frac{\partial f_i}{\partial x_i}$.
Now \eqref{eq:jh} follows by substituting equalities \eqref{eq:conf} into this expression for $|{\rm Jac}_h|$.
%Finally, by Lemma \ref{lm:jh} and the equalities \eqref{eq:conf}, we have \eqref{eq:cryjh}.
%where $ M_{ij} $ denotes the $ (i,j) $ minor of matrix $ {\rm Jac}_h $. Hence, the lemma holds true, establishing the inductive step.
%Since both the base case and the inductive step have been shown, by mathematical induction the lemma holds for every natural number s.
%\textcolor{red}{Hao Xu}
\end{proof}

%\begin{corollary}\label{cry:jh}
%The determinant of Jacobian matrix of $h$ \textcolor{red}{with respect to} $x$ is equal to
%\begin{align}
%(\beta_{11}-\alpha_{11})^{s-1}\sum_{i=1}^s\frac{\partial f_i}{\partial x_i}.
%\end{align}
%\end{corollary}

%\begin{proof}
%The conclusion directly follows from Lemma \ref{lm:jh} and the equalities \eqref{eq:conf}.
%\end{proof}

%\begin{assumption}\label{assumption}
%The network $G$ admits at least one nondegenerate steady state (i.e., there exists $(\kappa^*; c^*)\in {\mathbb R}^{2}\times {\mathbb R}^{s-1}$ such that $h(x)=0$ has at least one solution $x^*\in {\mathbb R}^{s}$ and ${\rm Jac}_h|_{x=x^*}\neq 0$).
%\end{assumption}
%\begin{remark}\label{rmk:assumption}
%By Lemma \ref{lm:jh}, Assumption \ref{assumption} implies that $\beta_{11}-\alpha_{11}\neq 0$.
%\end{remark}
For the system $h$ \eqref{eq:h}, define
\begin{align}\label{eq:g}
%g(x_1) &~:=~ {\tt res}({\tt res}({\tt res}(h_1, h_2, x_2), h_3, x_3), \ldots, x_{s})
g(x_1) &~:=~ h_1(x_1, \ldots, x_s)|_{x_2=\frac{\beta_{21}-\alpha_{21}}{\beta_{11}-\alpha_{11}}x_1-\frac{c_{1}}{\beta_{11}-\alpha_{11}}, \;\ldots,\; x_s=\frac{\beta_{s1}-\alpha_{s1}}{\beta_{11}-\alpha_{11}}x_1-\frac{c_{s-1}}
{\beta_{11}-\alpha_{11}}}.
\end{align}
%\begin{remark}
%resultant is better?
%$g(x_1)$ in \eqref{eq:g} can be considered as the resultant.
%\end{remark}

\begin{lemma}\label{lm:jac}
For the system $h$ \eqref{eq:h} and the polynomial $g(x_1)$ \eqref{eq:g},
if $x^*$ is a solution to $h_1(x^*)=\ldots =h_s(x^*)=0$, then
\begin{align}\label{eq:jac}
\left(\alpha_{11}-\beta_{11}\right)^{s-1}g'(x^*_1) \;=\;  |{\rm Jac}_h(x^*)|.
%$g'(x^*_1) \;=\; \gamma {\rm Jac}_h(x^*)$, where $\gamma\in {\mathbb R}$.
\end{align}
\end{lemma}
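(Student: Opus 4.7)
The plan is to derive both sides of \eqref{eq:jac} from a single computation of $g'(x_1^*)$ via the chain rule, and then match it to $|{\rm Jac}_h(x^*)|$ using the alternative expression \eqref{eq:jh} for the Jacobian determinant already proved in Lemma \ref{lm:jh}.

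First I would observe that, because $x^*$ satisfies $h_i(x^*)=0$ for $i=2,\ldots,s$, the defining substitutions of $g$ in \eqref{eq:g} recover exactly $x_2^*,\ldots,x_s^*$ from $x_1^*$. Hence the univariate function $g$ factors through the affine map $\phi(x_1)=(x_1,\phi_2(x_1),\ldots,\phi_s(x_1))$ with $\phi_i'(x_1)=(\beta_{i1}-\alpha_{i1})/(\beta_{11}-\alpha_{11})$, and by the chain rule applied at $x_1=x_1^*$,
\begin{equation*}
g'(x_1^*)\;=\;\frac{\partial f_1}{\partial x_1}(x^*)+\sum_{i=2}^{s}\frac{\partial f_1}{\partial x_i}(x^*)\cdot\frac{\beta_{i1}-\alpha_{i1}}{\beta_{11}-\alpha_{11}}\,.
\end{equation*}
Clearing denominators yields the compact identity $(\beta_{11}-\alpha_{11})\,g'(x_1^*)=\sum_{i=1}^{s}(\beta_{i1}-\alpha_{i1})\,\partial_{x_i}f_1(x^*)$.

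Next I would multiply both sides of this identity by $(\alpha_{11}-\beta_{11})^{s-2}$ and use the elementary algebraic relations $(\alpha_{11}-\beta_{11})^{s-2}(\beta_{11}-\alpha_{11})=-(\alpha_{11}-\beta_{11})^{s-1}$ and $(\beta_{i1}-\alpha_{i1})=-(\alpha_{i1}-\beta_{i1})$ to rewrite the equation as
\begin{equation*}
(\alpha_{11}-\beta_{11})^{s-1}\,g'(x_1^*)\;=\;(\alpha_{11}-\beta_{11})^{s-2}\sum_{i=1}^{s}(\alpha_{i1}-\beta_{i1})\,\frac{\partial f_1}{\partial x_i}(x^*)\,.
\end{equation*}
The right-hand side is exactly the expression \eqref{eq:jh} appearing in Lemma \ref{lm:jh} for $|{\rm Jac}_h(x^*)|$, so equation \eqref{eq:jac} follows.

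There is no real obstacle here; the whole argument is a careful chain-rule computation combined with a sign/power bookkeeping step, and the only thing to watch is consistently tracking $(\alpha_{11}-\beta_{11})$ versus $(\beta_{11}-\alpha_{11})$ so that the final sign matches Lemma \ref{lm:jh}. If one preferred to bypass Lemma \ref{lm:jh}, the same identity can be obtained directly by column operations on ${\rm Jac}_h(x^*)$: adding $(\beta_{i1}-\alpha_{i1})/(\beta_{11}-\alpha_{11})$ times column $i$ to column $1$ for $i=2,\ldots,s$ kills the off-diagonal entries of rows $2,\ldots,s$ (thanks to the sparse structure of \eqref{eq:h}), leaving an almost upper-triangular matrix whose determinant factors as $(\beta_{11}-\alpha_{11})\,g'(x_1^*)\cdot(-(\beta_{11}-\alpha_{11}))^{s-1}$; rearranging signs gives \eqref{eq:jac} as well.
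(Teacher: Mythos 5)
Your proof is correct and follows essentially the same route as the paper: both arguments reduce to the identity $(\beta_{11}-\alpha_{11})\,g'(x_1^*)=\sum_{i=1}^{s}(\beta_{i1}-\alpha_{i1})\frac{\partial f_1}{\partial x_i}(x^*)$ and then match the right-hand side with the expression \eqref{eq:jh} from Lemma~\ref{lm:jh}. The only difference is in how that identity is obtained --- you apply the chain rule directly to the affine substitution $\phi$ (after checking $\phi(x_1^*)=x^*$), whereas the paper writes $g=h_1+\tfrac{1}{\beta_{11}-\alpha_{11}}\sum_{i\ge 2}q_ih_i$ and differentiates to solve for the $q_i(x^*)$; your derivation is the more direct of the two.
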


\begin{proof}
By \eqref{eq:g}, and by long division, we have
\begin{align}\label{eq:ldivision}
g(x_1) =  h_1 +\frac{1}{\beta_{11}-\alpha_{11}}\sum_{i=2}^s q_i h_i,
\end{align}
where $q_1, \ldots, q_s$ are polynomials in ${\mathbb R}[x_1, \ldots, x_s]$.
For the both sides of \eqref{eq:ldivision}, we take the derivative with respect to $x_1$:
\begin{align}
g'(x^*_1) &\;= \; \frac{\partial h_1}{\partial x_1}(x^*) +\frac{1}{\beta_{11}-\alpha_{11}}\sum_{i=2}^s q_i(x^*) \frac{\partial h_i}{\partial x_1}(x^*) \notag\\
&\;=\;\frac{\partial h_1}{\partial x_1}(x^*) +\frac{1}{\beta_{11}-\alpha_{11}}\sum_{i=2}^s q_i(x^*)\left(\beta_{i1}-\alpha_{i1}\right).\label{eq:pd1}
\end{align}
For the both sides of \eqref{eq:ldivision}, we take the derivative with respect to $x_i$ ($2\leq i\leq s$):
\begin{align}
0& \;= \; \frac{\partial h_1}{\partial x_i}(x^*) +\frac{1}{\beta_{11}-\alpha_{11}} q_i(x^*) \frac{\partial h_i}{\partial x_i}(x^*),  \notag\\
&\;= \; \frac{\partial h_1}{\partial x_i}(x^*) - q_i(x^*), \; \;i=2, \ldots, s. \label{eq:pdk}
\end{align}
By \eqref{eq:pdk}, we have $q_i(x^*)=\frac{\partial h_1}{\partial x_i}(x^*) $. So, by \eqref{eq:pd1}, we have
\begin{align}\label{eq:pd3}
g'(x^*_1)
\;=\;\frac{\partial h_1}{\partial x_1}(x^*) +\frac{1}{\beta_{11}-\alpha_{11}}\sum_{i=2}^s \frac{\partial h_1}{\partial x_i}(x^*)\left(\beta_{i1}-\alpha_{i1}\right).
\end{align}
Note that $h_1=f_1$ (see \eqref{eq:h}). By Lemma \ref{lm:jh} and \eqref{eq:pd3}, we have \eqref{eq:jac}.
  \end{proof}

  \begin{lemma}\label{lm:interval}
  Given $G\in {\mathcal G}$, suppose the stoichiometric subspace of $G$ is one-dimensional.
If for a rate-constant vector $\kappa^*$ and a total-constant vector $c^*$, $G$ has a positive steady state $x^*$, then the first coordinate
$x_1^*$ is contained in the open interval
\begin{align}\label{eq:interval}
I \;:=\; \cap_{i=2}^s I_i \;\;\; \text{where}\;\;\;
I_i \;:= \;
\begin{cases}
(\frac{c_{i-1}}{\beta_{i1}-\alpha_{i1}}, +\infty)&\text{if}\; \frac{\beta_{i1}-\alpha_{i1}}{\beta_{11}-\alpha_{11}}>0\\
(0, +\infty)&\text{if}\; \frac{\beta_{i1}-\alpha_{i1}}{\beta_{11}-\alpha_{11}}=0\\
(0, \frac{c_{i-1}}{\beta_{i1}-\alpha_{i1}})& \text{if}\;  \frac{\beta_{i1}-\alpha_{i1}}{\beta_{11}-\alpha_{11}}<0
\end{cases}.
\end{align}
\end{lemma}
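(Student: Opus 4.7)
The plan is to exploit the one-dimensional stoichiometric subspace to turn the positivity condition $x_i^* > 0$ into an explicit interval constraint on $x_1^*$, and then intersect over all $i=2,\ldots,s$.

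First, I would argue that the equations $h_i(x^*)=0$ hold for $i=2,\ldots,s$ with an appropriate choice of $c_1,\ldots,c_{s-1}$. By Lemma \ref{lm:dim1} (applied since $cap_{pos}(G)\geq 1$) and \eqref{eq:scalar}, for $i=2,\ldots,s$ the $i$-th row of the ODE system \eqref{eq:sys} is a scalar multiple of the first row, namely $(\beta_{11}-\alpha_{11})\dot{x}_i=(\beta_{i1}-\alpha_{i1})\dot{x}_1$. Integrating in time yields the conservation laws $(\beta_{i1}-\alpha_{i1})x_1-(\beta_{11}-\alpha_{11})x_i=c_{i-1}$ for suitable constants $c_{i-1}$ determined by the total-constant vector $c^*$; this is precisely $h_i(x^*)=0$.

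Second, since Assumption \ref{assumption} gives $\beta_{11}-\alpha_{11}\neq 0$, I can solve $h_i(x^*)=0$ for $x_i^*$:
\begin{equation*}
x_i^*\;=\;\frac{\beta_{i1}-\alpha_{i1}}{\beta_{11}-\alpha_{11}}\,x_1^*\;-\;\frac{c_{i-1}}{\beta_{11}-\alpha_{11}}.
\end{equation*}
The positive steady-state hypothesis $x_i^*>0$ then translates into an inequality on $x_1^*$. I would break into the three cases according to the sign of the ratio $\frac{\beta_{i1}-\alpha_{i1}}{\beta_{11}-\alpha_{11}}$. In the positive case, dividing by the positive quantity $\frac{\beta_{i1}-\alpha_{i1}}{\beta_{11}-\alpha_{11}}$ yields $x_1^*>\frac{c_{i-1}}{\beta_{i1}-\alpha_{i1}}$; combined with $x_1^*>0$ and the fact that this lower bound must itself be non-negative for the interval to be non-empty, this gives $x_1^*\in I_i$. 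In the zero case, the condition becomes vacuous apart from $x_1^*>0$. In the negative case, the inequality flips, giving $x_1^*<\frac{c_{i-1}}{\beta_{i1}-\alpha_{i1}}$, again consistent with $I_i$.

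Finally, since $x^*$ must simultaneously satisfy $x_i^*>0$ for every $i=2,\ldots,s$, we get $x_1^*\in \bigcap_{i=2}^s I_i = I$. The argument is essentially bookkeeping; the only mild obstacle is handling the sign flips carefully, in particular noting that the quotient $\frac{\beta_{i1}-\alpha_{i1}}{\beta_{11}-\alpha_{11}}$ cleanly packages the four sub-cases coming from the individual signs of the numerator and denominator into the three cases appearing in the statement.
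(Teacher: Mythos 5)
Your proposal is correct and follows essentially the same route as the paper: solve $h_i(x^*)=0$ for $x_i^*$ in terms of $x_1^*$ and translate the positivity $x_i^*>0$ into an interval condition on $x_1^*$, splitting on the sign of $\frac{\beta_{i1}-\alpha_{i1}}{\beta_{11}-\alpha_{11}}$ and intersecting over $i$. The only (harmless) quibble is your remark in the positive-ratio case that the lower bound ``must itself be non-negative for the interval to be non-empty''---the half-line $\left(\frac{c_{i-1}}{\beta_{i1}-\alpha_{i1}},+\infty\right)$ is non-empty regardless of that sign, and the containment $x_1^*\in I_i$ already follows from $x_1^*>\frac{c_{i-1}}{\beta_{i1}-\alpha_{i1}}$ alone.
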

\begin{proof}
By the system $h$ \eqref{eq:h}, for any $i$ $(2\leq i\leq s)$,
$x^*_i\;=\;\frac{\beta_{i1}-\alpha_{i1}}{\beta_{11}-\alpha_{11}}x^*_1-\frac{c_{i-1}}{\beta_{11}-\alpha_{11}}.$
Note that $x^*$ is positive. So, for its $i$-th coordinate $x^*_i$, we have $x^*_i>0$.
Hence, $x^*_1$ is contained in the interval $I$ defined in \eqref{eq:interval}.
\end{proof}

\begin{lemma}\label{lm:root}
 Given $G\in {\mathcal G}$, suppose the stoichiometric subspace of $G$ is one-dimensional.
If for a rate-constant vector $\kappa^*$ and a total-constant vector $c^*$, $G$ has exactly $N$ distinct positive steady states $x^{(1)}, \ldots, x^{(N)}$, where $x^{(1)}, \ldots, x^{(N)}$ are ordered  according to their first coordinates (i.e., $x_1^{(1)}<\ldots<x_1^{(N)}$), then all $x_1^{(1)},\ldots, x_1^{(N)}$  are roots to
$g(x_1)=0$, and for any $1\leq i\leq N-1$, there is no other real root to $g(x_1)=0$ between $x_1^{(i)}$ and $x_1^{(i+1)}$.
\end{lemma}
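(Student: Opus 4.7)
The plan is to view the system $h$ as a line-plus-polynomial setup: the $s-1$ affine equations $h_2 = \cdots = h_s = 0$ parameterize an affine line in $\mathbb{R}^s$ by $x_1$, and the univariate polynomial $g(x_1)$ in \eqref{eq:g} is precisely the pullback of $h_1 = f_1$ along this parameterization. So real roots of $g$ lying in the interval $I$ from Lemma \ref{lm:interval} are in bijection with the positive solutions of $h_1 = \cdots = h_s = 0$, and, by \eqref{eq:conf} (a consequence of Lemma \ref{lm:dim1} applied to the one-dimensional stoichiometric subspace), these positive solutions are exactly the positive steady states of $G$ in the stoichiometric compatibility class indexed by $c^*$.

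First I would fix $c_1, \ldots, c_{s-1}$ in \eqref{eq:h} to be the conservation values determined by the common stoichiometric compatibility class containing $x^{(1)}, \ldots, x^{(N)}$; concretely, $c_{j-1} := (\beta_{j1}-\alpha_{j1}) x_1^{(k)} - (\beta_{11}-\alpha_{11}) x_j^{(k)}$, which is independent of $k$ because the linear forms $(\beta_{j1}-\alpha_{j1}) x_1 - (\beta_{11}-\alpha_{11}) x_j$ are conservation laws by \eqref{eq:conf}. With this choice $h_j(x^{(k)}) = 0$ for every $k$ and every $j \in \{2, \ldots, s\}$, so substituting into \eqref{eq:g} yields $g(x_1^{(k)}) = h_1(x^{(k)}) = f_1(x^{(k)}) = 0$, which is the first assertion.

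For the separation claim I would argue by contradiction: assume some real root $x_1^{**}$ of $g$ satisfies $x_1^{(i)} < x_1^{**} < x_1^{(i+1)}$ for some $i$. Define $x_j^{**} := \frac{\beta_{j1}-\alpha_{j1}}{\beta_{11}-\alpha_{11}} x_1^{**} - \frac{c_{j-1}}{\beta_{11}-\alpha_{11}}$ for $j = 2, \ldots, s$, so that $h_j(x^{**}) = 0$ by construction and $h_1(x^{**}) = g(x_1^{**}) = 0$. The crux is the positivity of $x^{**}$: the set $I$ is an intersection of open intervals on the real line, hence itself an open interval (in particular convex). Since Lemma \ref{lm:interval} gives $x_1^{(i)}, x_1^{(i+1)} \in I$ and $x_1^{**}$ lies strictly between them, $x_1^{**} \in I$, and Lemma \ref{lm:interval} forces every $x_j^{**} > 0$. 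Finally, \eqref{eq:conf} promotes $f_1(x^{**}) = 0$ to $f(x^{**}) = 0$, so $x^{**}$ is a positive steady state in the same stoichiometric compatibility class as the $x^{(k)}$, and its first coordinate being equal to none of $x_1^{(1)}, \ldots, x_1^{(N)}$ makes it distinct from every $x^{(k)}$, contradicting the hypothesis of exactly $N$ positive steady states.

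The step I expect to be the main obstacle is precisely this positivity of $x^{**}$: the algebraic vanishing $g(x_1^{**}) = 0$ is free from the hypothesis, but without careful handling of the defining inequalities of $I$, the lifted point might lie outside $\mathbb{R}_{>0}^s$ and fail to be a genuine steady state. The clean resolution is the observation that $I$ is convex, so it inherits any point between two of its members; after that, the remainder is a routine substitution-and-contradiction.
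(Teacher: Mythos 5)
Your proposal is correct and follows essentially the same route as the paper's proof: place the first coordinates in the interval $I$ of Lemma \ref{lm:interval}, use the fact that $I$ is an interval to conclude any intermediate root of $g$ also lies in $I$, lift it linearly to a positive steady state, and contradict the hypothesis of exactly $N$ steady states. Your write-up merely makes explicit two points the paper leaves implicit (the convexity of $I$ and the use of \eqref{eq:conf} to promote $f_1(x^{**})=0$ to $f(x^{**})=0$), which is fine.
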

\begin{proof}
%For any positive steady state $x^*$, we have $h_1(x^*)=\ldots=h_s(x^*)=0$. Clearly, by
%\eqref{eq:g}, $g(x^*_1)=0$.
By the system $h$ \eqref{eq:h} and \eqref{eq:g}, all $x_1^{(1)},\ldots, x_1^{(N)}$  are roots to
$g(x_1)=0$.
By Lemma \ref{lm:interval}, we have $x_1^{(1)},\ldots, x_1^{(N)}\in I$ (see \eqref{eq:interval}).
Hence, if $g(x_1)=0$ has a real solution $x^*_1$   between the two solutions $x_1^{(i)}$ and $x_1^{(i+1)}$, then
$x^*_1\in I$. For $j=2, \ldots, s$, let  $x^*_j\;=\;\frac{\beta_{j1}-\alpha_{j1}}{\beta_{11}-\alpha_{11}}x^*_1-\frac{c_{j-1}}{\beta_{11}-\alpha_{11}}$. Then
$x^*$ is also a positive steady state, and it is different from $x^{(1)}, \ldots, x^{(N)}$, which is a contradiction to the hypothesis that $G$ has exactly $N$ distinct positive steady states.
\end{proof}

\begin{lemma}\label{lm:nonde}
Given $G\in {\mathcal G}$, suppose the stoichiometric subspace of $G$ is one-dimensional. A steady state $x^*$ is nondegenerate if and only if
$|{\rm Jac}_h(x^*)|\neq 0$.
\end{lemma}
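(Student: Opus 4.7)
The plan is to reduce nondegeneracy to a trace condition and then invoke Lemma \ref{lm:jh}. That lemma gives
\begin{align*}
|{\rm Jac}_h(x^*)| \;=\; (\alpha_{11}-\beta_{11})^{s-1}\,\mathrm{tr}({\rm Jac}_f(x^*)),
\end{align*}
and Assumption \ref{assumption} guarantees $\beta_{11}\neq \alpha_{11}$, so $|{\rm Jac}_h(x^*)|\neq 0$ is equivalent to $\mathrm{tr}({\rm Jac}_f(x^*))\neq 0$. The entire task then reduces to showing that $x^*$ is nondegenerate if and only if this trace is nonzero.

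For this step I would exploit the fact that $f(x)=\mathcal{N}\cdot v(x)$, where $v(x)$ is the monomial vector appearing in \eqref{eq:sys}. Then ${\rm Jac}_f(x^*)=\mathcal{N}\cdot {\rm Jac}_v(x^*)$ and hence $\mathrm{im}({\rm Jac}_f(x^*))\subseteq \mathrm{im}(\mathcal{N})=S$. Since $S$ is one-dimensional, ${\rm Jac}_f(x^*)$ has rank at most $1$. Choosing a nonzero $u\in S$, the vector ${\rm Jac}_f(x^*)u$ lies in $S$ and hence equals $\mu u$ for a unique scalar $\mu$, which is the only possibly nonzero eigenvalue (all other eigenvalues vanish because the rank is at most $1$). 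Consequently $\mathrm{tr}({\rm Jac}_f(x^*))=\mu$. By definition $x^*$ is nondegenerate iff $\mathrm{im}({\rm Jac}_f(x^*)|_S)=S$, and since $S$ is one-dimensional this is equivalent to ${\rm Jac}_f(x^*)|_S\ne 0$, i.e.\ to $\mu\ne 0$.

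Chaining the equivalences, $x^*$ nondegenerate $\Longleftrightarrow \mu\ne 0 \Longleftrightarrow \mathrm{tr}({\rm Jac}_f(x^*))\ne 0 \Longleftrightarrow |{\rm Jac}_h(x^*)|\ne 0$, which proves the lemma. The only mildly subtle point I anticipate is the identification $\mathrm{tr}({\rm Jac}_f(x^*))=\mu$; this is handled cleanly by writing ${\rm Jac}_f(x^*)$ in a basis whose first vector spans $S$ and whose remaining vectors form a complement, in which the matrix has a single possibly nonzero diagonal entry (equal to $\mu$) and all other diagonal entries equal to zero because the image is contained in the first coordinate direction.
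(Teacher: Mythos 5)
Your proof is correct and follows essentially the same route as the paper's: both exploit that ${\rm Jac}_f(x^*)$ has rank at most one with image inside the one-dimensional $S$, so nondegeneracy reduces to the nonvanishing of the single possibly-nonzero eigenvalue of the restriction to $S$, which Lemma \ref{lm:jh} ties to $|{\rm Jac}_h(x^*)|$. The only cosmetic difference is that you identify that eigenvalue with ${\rm tr}({\rm Jac}_f(x^*))$ and invoke the form \eqref{eq:cryjh} of Lemma \ref{lm:jh}, whereas the paper writes the explicit rank-one factorization ${\rm Jac}_f(x^*)=\frac{1}{\beta_{11}-\alpha_{11}}{\mathcal N}_1\nabla f_1(x^*)$ and uses the form \eqref{eq:jh}.
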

\begin{proof}
Let ${\mathcal N}_1:=(\beta_{11}-\alpha_{11}, \ldots, \beta_{s1}-\alpha_{s1})^{\top}$. Recall
Assumption \ref{assumption} (we assume $\beta_{11}-\alpha_{11}\neq 0$).
By \eqref{eq:conf}, we have $(f_1, \ldots, f_s)^{\top}=\frac{1}{\beta_{11}-\alpha_{11}}f_1{\mathcal N}_1$, and hence, ${\rm Jac}_f(x^*)=\frac{1}{\beta_{11}-\alpha_{11}}{\mathcal N}_1\nabla f_1(x^*)$, where $\nabla f_1:=(\pd{f_1}{x_1}, \ldots, \pd{f_1}{x_s})$.   Therefore, we have $${\rm Jac}_f(x^*){\mathcal N}_1=\frac{1}{\beta_{11}-\alpha_{11}}({\mathcal N}_1\nabla f_1(x^*)){\mathcal N}_1=\frac{1}{\beta_{11}-\alpha_{11}}{\mathcal N}_1(\nabla f_1(x^*){\mathcal N}_1).$$
Note that by  \eqref{eq:jh} in Lemma \ref{lm:jh}, we have $|{\rm Jac}_h(x^*)|=(\alpha_{11}-\beta_{11})^{s-2}\nabla f_1(x^*){\mathcal N}_1$.
 Note also  the stoichiometric subspace $S$ is spanned by  ${\mathcal N}_1$.
 So, $im({\rm Jac}_f(x^*)|_S)=S$ if and only if  $|{\rm Jac}_h(x^*)|\neq 0$.
%Let ${\mathcal N}_1:=(\beta_{11}-\alpha_{11}, \ldots, \beta_{s1}-\alpha_{s1})^{\top}$ (recall we assume $\beta_{11}-\alpha_{11}\neq 0$).
%If $|{\rm Jac}_h(x^*)|= 0$, then by \eqref{eq:conf} and Lemma \ref{lm:jh}, we have ${\rm Jac}_f (x^*){\mathcal N}_1$ is the zero vector. %By Lemma \ref{eq:scalar},
%Note that the stoichiometric subspace $S$ is spanned by the single vector ${\mathcal N}_1$.
%So ${\rm im}\left({\rm Jac}_f (x^*)|_{S}\right)$ is the subspace spanned by the zero vector,  which is not equal to $S$, and hence, this is  a contradiction to the hypothesis that
%$x^*$ is nondegenerate.
\end{proof}

\begin{remark}\label{rmk:defnondeg}
In \cite[Definition 9.9]{ConradiPantea_Multi}, a nondegenerate steady state is defined as a steady state such that the determinant of the reduced Jacobian matrix is non-zero.
From \cite[Proposition 9.2]{ConradiPantea_Multi} and Lemma \ref{lm:nonde}, one can see
this definition is consistent with our definition.
\end{remark}

\begin{theorem}\cite{Joshi:Shiu:Multistationary}\label{thm:nonde}
%\textcolor{red}{nondegenerate conjecutre}
Suppose $G\in {\mathcal G}_0$, or, suppose $G\in {\mathcal G}_1\cup {\mathcal G}_2$ and $G$ has up to $2$ species.
If $cap_{pos}(G)<\infty$, then $cap_{pos}(G)=cap_{nondeg}(G)$.
\end{theorem}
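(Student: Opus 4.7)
The plan is to reduce to a univariate polynomial via the one-dimensional stoichiometric subspace, and then argue that at an extremal parameter choice every positive steady state must be nondegenerate, since otherwise a local perturbation would create additional positive steady states beyond the supposed maximum.

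First I would dispose of the trivial cases $cap_{pos}(G) \in \{0,1\}$: the zero case is vacuous, and in the one-steady-state case the degeneracy locus $\{|{\rm Jac}_h|=0\}$ is a proper algebraic subset of the semi-algebraic region of $(\kappa,c)$ admitting a positive steady state, so by Lemma \ref{lm:nonde} a generic choice in that region yields a nondegenerate positive steady state. Henceforth set $N := cap_{pos}(G) \geq 2$. By Corollary \ref{cry:dim1} the stoichiometric subspace of $G$ is one-dimensional, so the univariate polynomial $g(x_1)$ from \eqref{eq:g} is well defined, and Lemmas \ref{lm:root}, \ref{lm:jac}, and \ref{lm:nonde} together identify the positive steady states with the real roots of $g$ inside the open interval $I$ of \eqref{eq:interval}, with $x^*$ being nondegenerate iff $x_1^*$ is a simple root.

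Next I would pick $(\kappa^*,c^*)$ realizing $N$ distinct positive steady states $x^{(1)},\ldots,x^{(N)}$, and suppose for contradiction that some $x^{(i_0)}$ is degenerate; then $x_1^{(i_0)}$ is a root of $g$ of multiplicity $k \geq 2$. The core step is to perturb $(\kappa^*,c^*)$ to a nearby $(\kappa',c')$ in a direction that splits the multiple root at $x_1^{(i_0)}$ into at least two distinct real roots of the perturbed polynomial $g'$, while the remaining $N-1$ simple real roots persist near their original locations by standard root-continuity. Since $I$ is open and the perturbation is small, the new roots still lie in the corresponding interval for $(\kappa',c')$, producing at least $N+1$ positive steady states and contradicting the definition of $N$.

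The main obstacle will be realizing the desired ``real split'' using only perturbations of $(\kappa,c)$. Locally near $x_1^{(i_0)}$ one has $g(x_1) \approx q(x_1^{(i_0)})(x_1 - x_1^{(i_0)})^k + (\text{lower-order shift})$, so it suffices to move the constant term of $g$ in an appropriate direction (for $k=2$, an $\varepsilon$ of sign opposite to $q(x_1^{(i_0)})$ produces two distinct real roots; an analogous local analysis handles larger $k$). Inspecting the construction of $g$ in \eqref{eq:h}--\eqref{eq:g}, the coefficients of $g$ depend polynomially on $\kappa$ and affinely on $c$, and for the three families in the theorem one checks directly (using the two-reaction stoichiometry in the ${\mathcal G}_0$ case and the at-most-two-species hypothesis in the ${\mathcal G}_1\cup{\mathcal G}_2$ case) that the parameter-to-coefficient map has sufficient rank at $(\kappa^*,c^*)$ to realize the required shift. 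This closes the contradiction, so $cap_{nondeg}(G) \geq N$; the reverse inequality is immediate.
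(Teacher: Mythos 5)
The paper does not prove Theorem \ref{thm:nonde}; it is imported from \cite{Joshi:Shiu:Multistationary}, where the equality $cap_{pos}=cap_{nondeg}$ is obtained by explicit construction (one prescribes $N$ simple positive roots of the relevant univariate polynomial and solves for rate constants realizing them, with the arrow-diagram/zig-zag machinery doing the bookkeeping), not by a perturbation argument. So your proposal must stand on its own, and as written it has two genuine gaps.

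The decisive gap is the sentence asserting that ``the parameter-to-coefficient map has sufficient rank to realize the required shift'' --- that claim \emph{is} the theorem. The coefficients of $g$ are far from free: for $G\in{\mathcal G}_0$, equation \eqref{eq:g} gives $g=\kappa_1\prod_k\ell_k(x_1)^{\alpha_{k1}}-\lambda\kappa_2\prod_k\ell_k(x_1)^{\alpha_{k2}}$ with only the two scalings $\kappa_1,\kappa_2$ and the $s-1$ translations hidden in the affine forms $\ell_k$ at your disposal, so you cannot move the constant term of $g$ independently of its other coefficients; you must actually exhibit a direction in $(\kappa,c)$-space whose effect near $x_1^{(i_0)}$ is a sign-controlled shift while leaving the other $N-1$ simple roots intact, and this is never done. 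Moreover, even granting such a shift, for a root of odd multiplicity $k\ge 3$ a perturbation of the constant term alone produces exactly \emph{one} nearby real root, so the total count stays at $N$ and no contradiction results; splitting an odd-order root into at least two real roots requires controlling two local coefficients simultaneously, which makes the rank question strictly harder rather than ``analogous.'' The second gap is the base case $cap_{pos}(G)=1$: Corollary \ref{cry:dim1} yields a one-dimensional stoichiometric subspace only when $cap_{pos}(G)\ge 2$, so for, e.g., $G\in{\mathcal G}_2$ with two independent reversible pairs (such as $0\Leftrightarrow X_1$, $0\Leftrightarrow X_2$) the objects $h$, $g$, $I$ and Lemma \ref{lm:nonde} are undefined, and your genericity argument about the degeneracy locus does not apply as stated; that case needs a separate (binomial-system) argument. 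Until the splitting perturbation is exhibited for each family, the proposal is a plausible strategy rather than a proof.
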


\begin{corollary}\label{cry:sign}
%Suppose network $G$ is consistent and has exactly $2$ reactions.
Suppose $G\in {\mathcal G}_0$, or, suppose $G\in {\mathcal G}_1\cup {\mathcal G}_2$ and $G$ has up to $2$ species.
Assume $cap_{pos}(G)<\infty$. For a rate-constant vector $\kappa^*$, if   $|{\rm Jac}_h(x^*)|$ has the same
sign for any steady state $x^*\in {\mathbb R}_{>0}^s$, then $G$ has at most one positive steady state in any stoichiometric compatibility class.
\end{corollary}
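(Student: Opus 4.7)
The plan is to argue by contradiction and reduce directly to the alternating-sign conclusion of Theorem \ref{thm:sign}. Suppose, toward a contradiction, that for the given $\kappa^*$ there is a total-constant vector $c^*$ whose compatibility class ${\mathcal P}_{c^*}$ contains $N \geq 2$ distinct positive steady states $x^{(1)}, \ldots, x^{(N)}$. The goal is to exhibit two consecutive such states at which $|{\rm Jac}_h|$ has opposite signs, contradicting the hypothesis.

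First I would verify the hypotheses of Theorem \ref{thm:sign}. Because $G$ then admits multistationarity, Corollary \ref{cry:dim1} forces the stoichiometric subspace of $G$ to be one-dimensional. The sign hypothesis (read as common \emph{nonzero} sign) gives $|{\rm Jac}_h(x^{(i)})| \neq 0$ for every $i$, so Lemma \ref{lm:nonde} yields that every $x^{(i)}$ is nondegenerate, and the assumption $cap_{pos}(G)<\infty$ makes the list finite. To order the states by first coordinate, I would use that, since the stoichiometric subspace is one-dimensional, the equations $h_i=0$ for $i\geq 2$ in \eqref{eq:h} determine $x_k$ as an affine function of $x_1$; hence $x_1^{(i)} = x_1^{(j)}$ would force $x^{(i)}=x^{(j)}$, contradicting distinctness. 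Relabel so that $x_1^{(1)} < \cdots < x_1^{(N)}$.

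Now Theorem \ref{thm:sign} applies and gives $|{\rm Jac}_h(x^{(1)})|\cdot|{\rm Jac}_h(x^{(2)})|<0$, which directly contradicts the hypothesis that these two determinants share a sign. Hence no compatibility class can contain two distinct positive steady states, which is the claim. I do not anticipate any substantive obstacle: the corollary is essentially a packaging of Theorem \ref{thm:sign} together with the characterization of nondegeneracy via $|{\rm Jac}_h|$ in Lemma \ref{lm:nonde} and the reduction to one-dimensional stoichiometric subspace in Corollary \ref{cry:dim1}. The only subtle point is that ``same sign'' must be interpreted as common \emph{nonzero} sign, since otherwise the step using Lemma \ref{lm:nonde} (and hence the applicability of Theorem \ref{thm:sign}) would fail.
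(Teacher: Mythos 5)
Your argument is correct and follows the same overall route as the paper: reduce to a one-dimensional stoichiometric subspace via Corollary~\ref{cry:dim1}, order the steady states by their first coordinates, and derive a contradiction from the sign alternation in Theorem~\ref{thm:sign}. The one place where you diverge is in how you verify the nondegeneracy hypothesis of Theorem~\ref{thm:sign}: the paper invokes Theorem~\ref{thm:nonde} (the capacity statement $cap_{pos}(G)=cap_{nondeg}(G)$), whereas you extract nondegeneracy directly from the sign hypothesis via Lemma~\ref{lm:nonde}, since a common \emph{nonzero} sign of $|{\rm Jac}_h(x^*)|$ at all positive steady states forces each of them to be nondegenerate. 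Your substitution is arguably the tighter one: Theorem~\ref{thm:nonde} only equates maximal counts over all parameter choices and does not by itself certify that the particular steady states in the fixed class ${\mathcal P}_{c^*}$ at the fixed $\kappa^*$ are nondegenerate, which is exactly what Theorem~\ref{thm:sign} requires. The caveat you flag --- that ``same sign'' must be read as a common nonzero sign --- is the right one, and it is consistent with how the corollary is actually applied in the paper (e.g.\ in Lemma~\ref{lm:nomss}, where the sign of $|{\rm Jac}_h(x^*)|$ is a nonzero constant times a sum of same-signed, not-all-zero terms in positive variables). Your observation that distinct positive steady states in the same class must have distinct first coordinates, because $h_2=\cdots=h_s=0$ in \eqref{eq:h} express $x_2,\ldots,x_s$ as affine functions of $x_1$, is a small but genuine detail that the paper leaves implicit and that is needed to set up the ordering in Theorem~\ref{thm:sign}.
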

\begin{proof}
It directly follows from Corollary \ref{cry:dim1}, Theorem \ref{thm:sign} and Theorem \ref{thm:nonde}.
\end{proof}
\subsection{Relationship between multistationarity and multistability}\label{sec:relation}

\begin{theorem}\label{thm:multistable}
Suppose $G\in {\mathcal G}$. % is consistent and has exactly $2$ reactions.
If $cap_{pos}(G)=N\geq 2$ $(N\in {\mathbb Z}_{\geq 0})$, then
$cap_{stab}(G) \leq \lceil\frac{N}{2}\rceil$.
%Additionally, if $G\in {\mathcal G}_0$, or, $G\in {\mathcal G}_1\cup {\mathcal G}_2$ and $G$ has exactly $2$ species,  then $\lfloor\frac{N}{2}\rfloor\leq cap_{stab}(G)$.
%If for rate-constant vector $\kappa^*$ and total-constant vector $c^*$, $G$ has $N$ distinct positive steady states and all positive steady states are nondegenerate,
 %then
 %the number of
%stable steady state is at least $\lfloor\frac{N}{2}\rfloor$ and at most $\lceil\frac{N}{2}\rceil$.
\end{theorem}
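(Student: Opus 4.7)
The plan is to combine Lemma \ref{lm:stable} with the univariate elimination polynomial $g(x_1)$ from \eqref{eq:g} and run a sign-interleaving argument along the real $x_1$-line. First, since $cap_{pos}(G)\geq 2$, Corollary \ref{cry:dim1} guarantees that $G$ has a one-dimensional stoichiometric subspace, so the system $h$ in \eqref{eq:h} and the elimination polynomial $g$ in \eqref{eq:g} are well defined and I may freely invoke Lemmas \ref{lm:jh}, \ref{lm:jac}, \ref{lm:root}, and \ref{lm:stable}. Combining Lemma \ref{lm:jh} with Lemma \ref{lm:jac} (and cancelling the common factor $(\alpha_{11}-\beta_{11})^{s-1}$, which is nonzero by Assumption \ref{assumption}), I obtain the identity $g'(x_1^*) = \sum_{i=1}^s \pd{f_i}{x_i}|_{x=x^*}$ at every positive steady state $x^*$. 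Hence, by Lemma \ref{lm:stable}, a nondegenerate positive steady state is stable if and only if $g'(x_1^*)<0$.

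Next, I would fix a rate-constant vector $\kappa^*$ and total-constant vector $c^*$ realizing $M:=cap_{stab}(G)$ stable positive steady states, and let $N'\leq N$ denote the total number of positive steady states of the resulting system in the compatibility class ${\mathcal P}_{c^*}$. Because the $s-1$ linear equations in \eqref{eq:h} express $x_2,\ldots,x_s$ as affine functions of $x_1$, two distinct positive steady states must differ in their first coordinate, so I can order them as $x_1^{(1)}<\cdots<x_1^{(N')}$. The core claim is that no two consecutive entries of this list can both be stable. If $x^{(j)}$ and $x^{(j+1)}$ were both stable, then both are nondegenerate with $g'(x_1^{(j)})<0$ and $g'(x_1^{(j+1)})<0$, so $g$ would be strictly negative just to the right of $x_1^{(j)}$ and strictly positive just to the left of $x_1^{(j+1)}$, forcing by continuity an additional real root of $g$ strictly inside $(x_1^{(j)}, x_1^{(j+1)})$; this contradicts Lemma \ref{lm:root}.

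Finally, a subset with no two consecutive elements inside an ordered list of length $N'$ has cardinality at most $\lceil N'/2\rceil$, which together with $N'\leq N$ yields $M\leq \lceil N/2\rceil$, as desired. The only subtle point I anticipate is that some of the $N'$ positive steady states in the chosen compatibility class may be degenerate, so one cannot simply quote the sign-alternation conclusion of Theorem \ref{thm:sign} (whose hypothesis requires every positive steady state to be nondegenerate). Lemma \ref{lm:root}, however, needs no nondegeneracy assumption, and nondegeneracy of the stable ones is automatic from the definition of stability, so the interleaving argument above goes through uniformly for every $G\in{\mathcal G}$ without splitting into cases according to Theorem \ref{thm:nonde}.
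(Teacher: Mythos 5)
Your proposal is correct and rests on the same mechanism as the paper's proof: combining Lemmas \ref{lm:jh} and \ref{lm:jac} to get $g'(x_1^*)=\sum_{i=1}^s\frac{\partial f_i}{\partial x_i}(x^*)$, so that by Lemma \ref{lm:stable} a stable positive steady state is precisely a root of $g$ at which $g'<0$, and then observing that two such roots cannot be adjacent among the ordered positive steady states. Where you differ is in how the interleaving is justified, and your version is actually tighter. The paper's proof routes through Theorem \ref{thm:nonde} and, implicitly, the sign-alternation of Theorem \ref{thm:sign}, whose hypothesis requires \emph{every} positive steady state in the chosen compatibility class to be nondegenerate; as you note, the class realizing $cap_{stab}(G)$ stable states could a priori also contain degenerate positive steady states, and moreover Theorem \ref{thm:nonde} is only stated for $G\in{\mathcal G}_0$ or for networks in ${\mathcal G}_1\cup{\mathcal G}_2$ with at most two species, while Theorem \ref{thm:multistable} is asserted for all of ${\mathcal G}$. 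Your route --- using only Lemma \ref{lm:root} (which needs no nondegeneracy), the automatic nondegeneracy of stable states, and the intermediate value theorem to manufacture an extra root of $g$ strictly between two consecutive stable steady states --- sidesteps both issues and delivers the upper bound $cap_{stab}(G)\le\lceil N/2\rceil$ uniformly over ${\mathcal G}$. The one cosmetic remark is that your final counting step tacitly uses that $\lceil\cdot\rceil$ is monotone so that $\lceil N'/2\rceil\le\lceil N/2\rceil$; that is of course fine. In short: same key idea, but a cleaner and more general execution than the printed argument.
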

\begin{proof}
As $cap_{pos}(G)\geq 2$, it has a one-dimensional stoichiometric subspace (by Corollary  \ref{cry:dim1}), and so,
by Lemma \ref{lm:stable},  a nondegenerate steady state $x^*$ is stable if and only if $\sum_{i=1}^s\frac{\partial f_i}{\partial x_i}(x^*)<0$. By Lemma \ref{lm:jh}, we have $$|{\rm Jac}_h(x^*)|=(\alpha_{11}-\beta_{11})^{s-1}\sum_{i=1}^s\frac{\partial f_i}{\partial x_i}(x^*).$$
So, by Theorem \ref{thm:nonde}, we conclude
\begin{align}\label{eq:multistable}
\lfloor\frac{cap_{nondeg}(G)}{2}\rfloor\leq  cap_{stab}(G) \leq\lceil\frac{cap_{nondeg}(G)}{2}\rceil \leq \lceil\frac{N}{2}\rceil.
\end{align}

%The conclusion directly follows from Corollary \ref{cry:dim1},  Lemma \ref{lm:stable}, Lemma \ref{lm:jh}, Theorem \ref{thm:sign}.% and Theorem \ref{thm:nonde}.
\end{proof}

\begin{theorem}\label{thm:nondegmultistable}
Suppose $G\in {\mathcal G}_0$, or, suppose $G\in {\mathcal G}_1\cup {\mathcal G}_2$ and $G$ has up to $2$ species.
If $cap_{pos}(G)=N\geq 2$ $(N\in {\mathbb Z}_{\geq 0})$, then
 $\lfloor\frac{N}{2}\rfloor\leq cap_{stab}(G) \leq \lceil\frac{N}{2}\rceil$.
%If for rate-constant vector $\kappa^*$ and total-constant vector $c^*$, $G$ has $N$ distinct positive steady states and all positive steady states are nondegenerate,
 %then
 %the number of
%stable steady state is at least $\lfloor\frac{N}{2}\rfloor$ and at most $\lceil\frac{N}{2}\rceil$.
\end{theorem}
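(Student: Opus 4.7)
The upper bound $cap_{stab}(G)\le \lceil N/2\rceil$ is already established in Theorem \ref{thm:multistable}, so I only need the matching lower bound $cap_{stab}(G)\ge \lfloor N/2\rfloor$. The plan is to use the sign-alternation phenomenon from Theorem \ref{thm:sign}, together with the nondegeneracy equivalence (Theorem \ref{thm:nonde}) and the trace criterion for stability (Lemma \ref{lm:stable}).

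First, since $cap_{pos}(G)=N\ge 2$, Corollary \ref{cry:dim1} tells us the stoichiometric subspace is one-dimensional, so Lemma \ref{lm:stable} applies. Next, under the stated hypotheses on $G$, Theorem \ref{thm:nonde} gives $cap_{nondeg}(G)=cap_{pos}(G)=N$. Pick a rate-constant vector $\kappa^*$ and a total-constant vector $c^*$ realizing $N$ nondegenerate positive steady states $x^{(1)},\dots,x^{(N)}$ in a single stoichiometric compatibility class, ordered by first coordinate $x_1^{(1)}<\cdots<x_1^{(N)}$.

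By Theorem \ref{thm:sign}, the determinants $|{\rm Jac}_h(x^{(i)})|$ alternate in sign as $i$ ranges over $\{1,\dots,N\}$. Now combine Lemma \ref{lm:jh} with Lemma \ref{lm:stable}: a nondegenerate steady state $x^*$ is stable iff $\sum_{i=1}^s \tfrac{\partial f_i}{\partial x_i}(x^*)<0$, equivalently iff $(\alpha_{11}-\beta_{11})^{s-1}|{\rm Jac}_h(x^*)|$ has the opposite sign from $(\alpha_{11}-\beta_{11})^{2s-2}$, i.e.\ iff $|{\rm Jac}_h(x^*)|$ has one fixed sign $\varepsilon\in\{+1,-1\}$ (depending only on the parity of $s$ and on $\mathrm{sign}(\alpha_{11}-\beta_{11})$, not on $i$). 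Since all $N$ steady states are nondegenerate (Lemma \ref{lm:nonde}) and their determinants alternate, exactly $\lceil N/2\rceil$ or $\lfloor N/2\rfloor$ of them carry sign $\varepsilon$, so in particular at least $\lfloor N/2\rfloor$ of the $x^{(i)}$ are stable. This shows $cap_{stab}(G)\ge \lfloor N/2\rfloor$, completing the proof.

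Nothing here is delicate; the argument is essentially a bookkeeping exercise built on the ingredients already proved. The only point requiring attention is verifying that ``stability sign'' really is constant along the sequence, which follows directly from the factorization $|{\rm Jac}_h|=(\alpha_{11}-\beta_{11})^{s-1}\sum_i\partial_{x_i}f_i$ in Lemma \ref{lm:jh}: the multiplier $(\alpha_{11}-\beta_{11})^{s-1}$ is a nonzero constant (by Assumption \ref{assumption}), so the sign of $|{\rm Jac}_h(x^*)|$ and the sign of $\sum_i\partial_{x_i}f_i(x^*)$ differ by a fixed factor independent of $x^*$.
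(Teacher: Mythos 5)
Your proposal is correct and follows essentially the same route as the paper: the paper's proof simply invokes Theorem \ref{thm:nonde} to get $cap_{nondeg}(G)=N$ and then cites the two-sided bound \eqref{eq:multistable} established in the proof of Theorem \ref{thm:multistable}, which rests on exactly the ingredients you use (Theorem \ref{thm:sign} for the sign alternation, Lemma \ref{lm:jh}, and Lemma \ref{lm:stable}). Your write-up merely makes explicit the counting step (that a fixed sign occurs at least $\lfloor N/2\rfloor$ times in an alternating sequence of length $N$) that the paper leaves implicit.
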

\begin{proof}
By Theorem \ref{thm:nonde}, for $G\in {\mathcal G}_0$, or  for $G\in {\mathcal G}_1\cup {\mathcal G}_2$ ($G$ has up to $2$ species), we have $cap_{nondeg}(G)=cap_{pos}(G)=N$. So, by \eqref{eq:multistable} in the proof of Theorem \ref{thm:multistable}, we have the conclusion.
%The conclusion directly follows from Corollary \ref{cry:dim1}, Lemma \ref{lm:stable}, Lemma \ref{lm:jh}, Theorem \ref{thm:sign} and Theorem \ref{thm:nonde}.
\end{proof}

\section{Networks in ${\mathcal G}_0$}\label{sec:g0}
\subsection{Boundary steady states and multistationarity}\label{sec:mss}
%\subsection{Newton polytopes} \label{sec:NP}
%\begin{definition}\label{def:NP}
%Consider a
%real, multivariate polynomial
%\begin{align} \label{eq:poly}
%	G ~=~ a_1 x^{N_1} + a_2 x^{N_2} + \dots + a_{\ell} x^{N_{\ell}}
%		~\in~ \mathbb{R}[x_1,x_2,\dots, x_s]~,
%\end{align}
%where
%the $N_i$'s are distinct vectors and, for all $i$,
%we have
%$a_i \neq 0$ and $N_i \in \mathbb{Z}^s$.
%The {\em Newton polytope} of $G$ is the convex hull of its exponent vectors:
%\begin{align*}
%	{\rm NP}(G) ~:=~ {\rm conv} \{N_1,~ N_2,~ \dots~ ,~ N_{\ell}\} ~\subseteq~ \mathbb{R}^s~.
 %\end{align*}
 %\end{definition}
%\begin{lemma} \cite[Lemma 2.2]{DMST2019}\label{lm:NP}
%For a
%real, multivariate polynomial $G$ as in~\eqref{eq:poly},
%if $N_i$ is a vertex of ${\rm NP}(G)$,
%then  there exists $x^* \in \mathbb{R}^s_{>0}$ such that $f(x^*)$ and $a_{i}$ have the same sign.
%\end{lemma}
The main result of this subsection is Theorem \ref{thm:nomss}, which shows
a multistationary network in ${\mathcal G}_0$ must admit boundary steady states.
This result will be used in the proof of Theorem \ref{thm:4-reactant} (see Section \ref{sec:stable}). In order prove Theorem \ref{thm:nomss}, we start with
some useful lemmas.

%\subsection{Multistationarity}
\begin{lemma}\label{lm:jachss}
Given $G\in {\mathcal G}_0$, suppose the stoichiometric subspace of $G$ is one-dimensional.
For the two systems $f(x)$ \eqref{eq:f} and $h(x)$  \eqref{eq:h},
if $x^*\in {\mathbb R}^s$ is a solution to $f_1(x^*)=\ldots=f_s(x^*)=0$, then
\begin{align}\label{eq:jachss}
|{\rm Jac}_h(x^*)|\;=\; \kappa_1(\alpha_{11}-\beta_{11})^{s-1}\Pi^s_{k=1}{x_k^*}^{\alpha_{k1}-1}\sum_{i=1}^s (\beta_{i1}-\alpha_{i1})(\alpha_{i1}-\alpha_{i2})\Pi_{k\neq i}{x_k^*}.
\end{align}
\end{lemma}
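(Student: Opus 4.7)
\medskip

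\noindent\textbf{Proof plan.}
The plan is to reduce \eqref{eq:jachss} to the sum of diagonal partials of $f$ and then compute that sum explicitly, using the steady-state equation to eliminate the second monomial. As a first step I invoke formula \eqref{eq:cryjh} of Lemma \ref{lm:jh}, which gives
\begin{equation*}
|{\rm Jac}_h(x^*)| \;=\; (\alpha_{11}-\beta_{11})^{s-1}\sum_{i=1}^s\frac{\partial f_i}{\partial x_i}(x^*).
\end{equation*}
Since $G\in\mathcal{G}_0$ has a one-dimensional stoichiometric subspace, Lemma \ref{lm:dim1} provides a scalar $\lambda$ for which \eqref{eq:scalar} holds, so each $f_i$ admits the explicit two-term form \eqref{eq:f}. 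Differentiating \eqref{eq:f} termwise yields
\begin{equation*}
\frac{\partial f_i}{\partial x_i}(x^*) \;=\; (\beta_{i1}-\alpha_{i1})\left[\kappa_1\alpha_{i1}\,(x_i^*)^{-1}\Pi_{k=1}^s(x_k^*)^{\alpha_{k1}} \;-\; \lambda\kappa_2\alpha_{i2}\,(x_i^*)^{-1}\Pi_{k=1}^s(x_k^*)^{\alpha_{k2}}\right],
\end{equation*}
where I have factored $(x_i^*)^{-1}$ out of each monomial (no issue, since $x^*>0$ in the intended applications, and in general the identity will be polynomial after clearing denominators).

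Next I use the steady-state condition. From Assumption \ref{assumption} we have $\beta_{11}-\alpha_{11}\neq 0$, so $f_1(x^*)=0$ together with \eqref{eq:f} forces the scalar identity $\kappa_1\Pi_k(x_k^*)^{\alpha_{k1}} = \lambda\kappa_2\Pi_k(x_k^*)^{\alpha_{k2}}$. Substituting the right-hand side for the second monomial in the expression above collapses the bracket to $\kappa_1(\alpha_{i1}-\alpha_{i2})\,(x_i^*)^{-1}\Pi_{k=1}^s(x_k^*)^{\alpha_{k1}}$. Summing over $i$ and pulling out the common factor gives
\begin{equation*}
\sum_{i=1}^s\frac{\partial f_i}{\partial x_i}(x^*) \;=\; \kappa_1\Pi_{k=1}^s(x_k^*)^{\alpha_{k1}}\sum_{i=1}^s\frac{(\beta_{i1}-\alpha_{i1})(\alpha_{i1}-\alpha_{i2})}{x_i^*}.
\end{equation*}

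Finally, I rewrite the prefactor in the target form. Observing that
\begin{equation*}
\frac{\Pi_{k=1}^s(x_k^*)^{\alpha_{k1}}}{x_i^*} \;=\; \Pi_{k=1}^s(x_k^*)^{\alpha_{k1}-1}\cdot\Pi_{k\neq i}x_k^*,
\end{equation*}
(a direct reshuffling of exponents verified by separating the $i$-th factor from the product), I move the $\Pi_k(x_k^*)^{\alpha_{k1}-1}$ piece out of the sum and combine with the $(\alpha_{11}-\beta_{11})^{s-1}$ prefactor inherited from Lemma \ref{lm:jh}. This yields exactly \eqref{eq:jachss}.

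I do not anticipate a real obstacle: the only substantive ingredient is the replacement step powered by $f_1(x^*)=0$, which relies on $\beta_{11}-\alpha_{11}\neq 0$ (Assumption \ref{assumption}) and on the one-dimensional-subspace hypothesis that makes the two-term factorization \eqref{eq:f} available. The remainder is exponent bookkeeping. One small care-point worth flagging in the write-up is that the identity \eqref{eq:jachss} is a polynomial identity once one clears the $x_i^*$ in denominators, so its proof at a positive steady state extends without change to any steady state at which the indicated exponent manipulations make sense; this is consistent with the statement, which takes $x^*\in\mathbb{R}^s$.
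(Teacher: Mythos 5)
Your proposal is correct and follows essentially the same route as the paper: apply Lemma \ref{lm:jh} to reduce $|{\rm Jac}_h(x^*)|$ to $(\alpha_{11}-\beta_{11})^{s-1}\sum_i \partial f_i/\partial x_i(x^*)$, differentiate the two-term form \eqref{eq:f}, use the steady-state identity $\kappa_1\Pi_k (x_k^*)^{\alpha_{k1}}=\lambda\kappa_2\Pi_k (x_k^*)^{\alpha_{k2}}$ to collapse each diagonal partial to $\kappa_1(\beta_{i1}-\alpha_{i1})(\alpha_{i1}-\alpha_{i2})(x_i^*)^{-1}\Pi_k(x_k^*)^{\alpha_{k1}}$, and finish by exponent bookkeeping. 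Your closing remark about the formula being a cleared-denominator polynomial identity is a reasonable clarification, but no new content beyond the paper's argument.
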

\begin{proof}
By \eqref{eq:f}, we have
\begin{align}\label{eq:pdf}
\frac{\partial f_i}{\partial x_i} \;:=\; \kappa_1\alpha_{i1}(\beta_{i1}-\alpha_{i1})x^{-1}_i\Pi^s_{k=1}{x_k}^{\alpha_{k1}}-\lambda\kappa_2\alpha_{i2}(\beta_{i1}-\alpha_{i1})x^{-1}_i\Pi^s_{k=1}{x_k}^{\alpha_{k2}},
\end{align}
and
\begin{align}\label{eq:sf}
 \kappa_1(\beta_{i1}-\alpha_{i1})\Pi^s_{k=1}{x_k^*}^{\alpha_{k1}} \;=\;\lambda\kappa_2(\beta_{i1}-\alpha_{i1})\Pi^s_{k=1}{x_k^*}^{\alpha_{k2}}.
\end{align}
By \eqref{eq:pdf} and \eqref{eq:sf}, we have
\begin{align}\label{eq:spdf}
\frac{\partial f_i}{\partial x_i}(x^*) \;:=\; \kappa_1(\beta_{i1}-\alpha_{i1})(\alpha_{i1}-\alpha_{i2}){x_i^*}^{-1}\Pi^s_{k=1}{x_k^*}^{\alpha_{k1}}.
\end{align}
Hence, by Lemma \ref{lm:jh} and \eqref{eq:spdf}, we have \eqref{eq:jachss}.
\end{proof}

\begin{lemma}\label{lm:nonzero}
%Suppose network $G$ is consistent and has exactly $2$ reactions.
 Given $G\in {\mathcal G}_0$, suppose the stoichiometric subspace of $G$ is one-dimensional.
If $G$ admits a nondegenerate steady state, then
 the numbers in the sequence
\begin{align}\label{eq:mss}
(\beta_{11}-\alpha_{11})(\alpha_{11}-\alpha_{12}), \ldots, (\beta_{s1}-\alpha_{s1})(\alpha_{s1}-\alpha_{s2})
\end{align}
cannot be all zeros.
\end{lemma}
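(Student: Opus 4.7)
My plan is to argue by contradiction, directly leveraging the two lemmas proved immediately above. Suppose every entry of the sequence \eqref{eq:mss} vanishes, i.e., $(\beta_{i1}-\alpha_{i1})(\alpha_{i1}-\alpha_{i2}) = 0$ for every $i \in \{1, \ldots, s\}$, and let $x^*$ denote a nondegenerate steady state of $G$ whose existence is assumed.

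First, I would invoke Lemma \ref{lm:jachss} to express $|{\rm Jac}_h(x^*)|$ in the factorized form given by \eqref{eq:jachss}. The key observation is that every summand of the inner sum
\[
\sum_{i=1}^s (\beta_{i1}-\alpha_{i1})(\alpha_{i1}-\alpha_{i2}) \Pi_{k \neq i} x_k^*
\]
carries the prefactor $(\beta_{i1}-\alpha_{i1})(\alpha_{i1}-\alpha_{i2})$, which is zero by the contradiction hypothesis. Hence the entire sum vanishes, and therefore the right-hand side of \eqref{eq:jachss}, namely $|{\rm Jac}_h(x^*)|$, is zero. Next, Lemma \ref{lm:nonde} asserts that under the one-dimensional stoichiometric subspace assumption, a steady state is nondegenerate if and only if $|{\rm Jac}_h(x^*)| \neq 0$. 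This contradicts the assumed nondegeneracy of $x^*$, completing the proof.

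Since the heavy lifting has already been done in Lemmas \ref{lm:jachss} and \ref{lm:nonde}, no serious obstacle remains; the statement is essentially a one-line consequence. The only point worth verifying is that the common-factor structure of each summand in \eqref{eq:jachss} causes the vanishing of each $(\beta_{i1}-\alpha_{i1})(\alpha_{i1}-\alpha_{i2})$ to propagate to the whole sum independently of the other factors $\kappa_1$, $(\alpha_{11}-\beta_{11})^{s-1}$, and $\Pi_{k=1}^s x_k^{*\alpha_{k1}-1}$, which is transparent. Notably, no positivity assumption on $x^*$ is required; the argument rests solely on the algebraic form of the determinant.
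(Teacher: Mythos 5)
Your proof is correct and is exactly the paper's argument: the paper's own proof consists of the single line ``the conclusion follows from Lemma \ref{lm:nonde} and Lemma \ref{lm:jachss},'' and you have simply spelled out that deduction. (Your closing remark that no positivity of $x^*$ is needed is slightly optimistic, since the factorization \eqref{eq:jachss} involves $\Pi_k {x_k^*}^{\alpha_{k1}-1}$ and is really derived for positive steady states, but this does not affect the argument as it is applied in the paper.)
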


\begin{proof}
The conclusion follows from  Lemma \ref{lm:nonde} and Lemma \ref{lm:jachss}.
\end{proof}

\begin{lemma}\label{lm:nomss}
%Suppose network $G$ is consistent and has exactly $2$ reactions.
Given $G\in {\mathcal G}_0$, suppose the stoichiometric subspace of $G$ is one-dimensional.
If the network $G$ admits multistationarity, then
\begin{align}\label{eq:nomss}
\exists i, j \in \{1, \ldots, s\} \; \text{s.t.}\; (\beta_{i1}-\alpha_{i1})(\alpha_{i1}-\alpha_{i2})(\beta_{j1}-\alpha_{j1})(\alpha_{j1}-\alpha_{j2})<0.
\end{align}
%If the signs of the non-zero numbers in the sequence \eqref{eq:mss}
%are the same, then the network $G$ does not admit multistationarity.
\end{lemma}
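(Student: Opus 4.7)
The plan is a proof by contradiction. Suppose the conclusion \eqref{eq:nomss} fails, so that all the products $p_i := (\beta_{i1}-\alpha_{i1})(\alpha_{i1}-\alpha_{i2})$ share the same weak sign (either every $p_i \geq 0$ or every $p_i \leq 0$). I will drive a contradiction with the multistationarity hypothesis by showing that $|{\rm Jac}_h|$ would then carry a single nonzero sign throughout $\mathbb{R}_{>0}^s$, in conflict with the sign-alternation forced by Theorem \ref{thm:sign}.

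First, since $G \in {\mathcal G}_0$ admits multistationarity, Theorem \ref{thm:nonde} gives $cap_{nondeg}(G)=cap_{pos}(G)\geq 2$, so I may fix $\kappa^*$ and $c^*$ under which there are at least two nondegenerate positive steady states in the same stoichiometric compatibility class. Lemma \ref{lm:nonzero} applied to any one of these nondegenerate states excludes the degenerate possibility that every $p_i$ vanishes; so under the contradiction hypothesis, all $p_i$ are $\geq 0$ with at least one strictly positive (or, symmetrically, all $\leq 0$ with at least one strictly negative).

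Next, I feed this into formula \eqref{eq:jachss} of Lemma \ref{lm:jachss}. The prefactor $\kappa_1(\alpha_{11}-\beta_{11})^{s-1}\prod_k {x_k}^{\alpha_{k1}-1}$ has a fixed sign on $\mathbb{R}_{>0}^s$, and the sum $\sum_{i=1}^s p_i \prod_{k\neq i} x_k$ is strictly positive (respectively strictly negative) for every $x \in \mathbb{R}_{>0}^s$ since all its terms are nonnegative (respectively nonpositive) with at least one strict. Therefore $|{\rm Jac}_h(x)|$ takes a single nonzero sign on all of $\mathbb{R}_{>0}^s$. Applying Theorem \ref{thm:sign} to the $N \geq 2$ nondegenerate positive steady states in ${\mathcal P}_{c^*}$, ordered by first coordinate, yields $|{\rm Jac}_h(x^{(i)})||{\rm Jac}_h(x^{(i+1)})|<0$ for consecutive $i$, the desired contradiction. (Equivalently, one can invoke Corollary \ref{cry:sign} directly to forbid multistationarity under the constant-sign conclusion.)

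The principal subtlety I expect is the clean invocation of Theorem \ref{thm:nonde}, which requires $cap_{pos}(G)<\infty$ so that multistationarity is upgraded to \emph{nondegenerate} multistationarity. For $G\in{\mathcal G}_0$ this is transparent: after the substitution in \eqref{eq:g} the steady-state problem on any compatibility class reduces to the univariate equation $g(x_1)=0$ on the interval $I$ of Lemma \ref{lm:interval}, which has finitely many roots unless $g\equiv 0$, and the latter degenerate situation is itself incompatible with the existence of a nondegenerate steady state. Once this finiteness is in hand, the argument is a direct sign-count via Lemmas \ref{lm:nonzero} and \ref{lm:jachss} together with Theorem \ref{thm:sign}.
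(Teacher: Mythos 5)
Your proof is correct and follows essentially the same route as the paper's: Lemma \ref{lm:nonzero} rules out the all-zero case, Lemma \ref{lm:jachss} then forces $|{\rm Jac}_h|$ to have a constant nonzero sign on the positive orthant when \eqref{eq:nomss} fails, and Corollary \ref{cry:sign} (equivalently Theorem \ref{thm:sign} plus Theorem \ref{thm:nonde}) gives the contradiction. Your closing remark on verifying the finiteness hypothesis of Theorem \ref{thm:nonde} is a detail the paper leaves implicit, but it does not change the argument.
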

\begin{proof}
By Theorem \ref{thm:nonde} and Lemma \ref{lm:nonzero}, if $cap_{pos}(G)\geq 1$, then the
numbers in the sequence \eqref{eq:mss} cannot be all zeros.
Then, the conclusion follows from  Lemma \ref{lm:jachss} and Corollary \ref{cry:sign}.
\end{proof}

\begin{remark}
The condition \eqref{eq:nomss} is exactly the same with the condition (2) stated in \cite[Theorem 5.2]{Joshi:Shiu:Multistationary}.
\end{remark}

\begin{lemma}\label{lm:bss}
%Suppose network $G$ is consistent and has exactly $2$ reactions.
Given $G\in {\mathcal G}_0$, %suppose the stoichiometric subspace of $G$ is one-dimensional.
if $G$ has no boundary steady state in any stoichiometric compatibility class, then
for any $k$ $(1\leq k\leq s)$, we have either $\alpha_{k1}=0$ or $\alpha_{k2}=0$ (i.e., the two monomials $\Pi^s_{k=1}{x_k}^{\alpha_{k1}}$ and $\Pi^s_{k=1}{x_k}^{\alpha_{k2}}$ have no common variables).
\end{lemma}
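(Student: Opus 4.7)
The plan is to prove the contrapositive: assume that there exists some index $k \in \{1,\ldots,s\}$ with both $\alpha_{k1} \geq 1$ and $\alpha_{k2} \geq 1$, and exhibit a choice of rate constants and total-constant vector for which $G$ admits a boundary steady state. This reduces the lemma to a direct structural observation about the shape of the mass-action ODEs.

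Concretely, the key observation is that under our assumption, the two monomials appearing in \eqref{eq:sys},
\[
\Pi_{k=1}^s x_k^{\alpha_{k1}} \quad \text{and} \quad \Pi_{k=1}^s x_k^{\alpha_{k2}},
\]
share the variable $x_k$ with positive exponent in both factors. Hence, on the coordinate hyperplane $\{x_k = 0\}$, both monomials vanish identically. Because $G \in \mathcal{G}_0$ has exactly two reactions, each component of $f(x)$ is a $\Z$-linear combination of exactly these two monomials, so $f$ vanishes on $\{x_k=0\} \cap \R^s_{\geq 0}$ for every choice of positive rate constants $\kappa_1,\kappa_2$.

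With this in hand, I would pick any concrete point $x^\ast \in \R^s_{\geq 0}$ with $x^\ast_k = 0$ and, say, $x^\ast_j = 1$ for $j \neq k$. By the preceding observation, $f(x^\ast) = 0$ for every positive rate-constant vector, so $x^\ast$ is a steady state, and it is a boundary steady state by construction (since $x^\ast_k = 0$). Setting $c^\ast := W x^\ast$ places $x^\ast$ in the stoichiometric compatibility class $\mathcal{P}_{c^\ast}$ defined in \eqref{eq:pc}. Thus, for any positive $\kappa$, the network admits a boundary steady state in some stoichiometric compatibility class, contradicting the hypothesis.

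There is essentially no serious obstacle here: the lemma is a structural consequence of the two-reaction hypothesis combined with the fact that mass-action right-hand sides are monomial-supported. The only point that requires a moment of care is verifying that the candidate $x^\ast$ is truly legal, i.e., that the specific nonnegative vector we choose produces a valid stoichiometric compatibility class via $c^\ast = W x^\ast$, which is automatic from the definition. No algebraic manipulation or case analysis on the $\beta_{ij}$'s is needed, since the cancellation producing $f(x^\ast)=0$ happens before any stoichiometric coefficients enter.
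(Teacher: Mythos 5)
Your proof is correct and follows essentially the same route as the paper: both arguments observe that a shared variable $x_k$ in the two reactant monomials forces $f$ to vanish identically on $\{x_k=0\}\cap\R^s_{\ge 0}$ for every choice of rate constants, and then exhibit an explicit boundary steady state in its own compatibility class (the paper takes the origin with $c=0$, you take a point with a single zero coordinate, which is an immaterial difference).
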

\begin{proof}
%\textcolor{red}{Hao Xu. I'm not sure... I don't think 2 statements are equivalent.}
Note that for $G\in {\mathcal G}_0$, we know $h_1$ defined in \eqref{eq:h} is
\begin{align}\label{eq:h0}
h_1\;=\;\left(\beta_{11}-\alpha_{11}\right)
\left(\kappa_1\Pi_{k=1}^s x_k^{\alpha_{k1}}-\lambda\kappa_2\Pi_{k=1}^s x_k^{\alpha_{k2}}\right).
\end{align}
%First, if G has no boundary steady state in any stoichiometric compatibility class.
Clearly, if there exists $k\in \{1, \ldots, s\}$ such that $\alpha_{k1}>0$ and $\alpha_{k2}>0$,
%we can assum that $\alpha_{k_{0}2} \geq \alpha_{k_{0}1}>0$ (the other situation $\alpha_{k_{0}1} > \alpha_{k_{0}2}>0$ is symmetrical).
 %Then \eqref{eq:f} becomes $ f_i =(\beta_{i1}-\alpha_{i1}) x_{k_{0}}^{\alpha_{k_{0}2}}(\kappa_1 x_{k_{0}}^{\alpha_{k_{0}1}-\alpha_{k_{0}2}} \Pi_{k=1}^{k_0-1} x_k^{\alpha_{k1}}\Pi_{k=k_0+1}^{s} x_k^{\alpha_{k1}} -\lambda\kappa_2\Pi_{k=1}^{k_0-1} x_k^{\alpha_{k2}}\Pi_{k=k_0+1}^{s} x_k^{\alpha_{k2}}), i=1, \ldots, s $.
  then there exists at least one boundary  steady state $(x_1=0, \ldots, x_s=0)$ in the
  stoichiometric compatibility class ${\mathcal P}_c$ defined by $c=(0, \ldots, 0) \in {\mathbb R}^{s-1}$, which is a contradiction to the hypothesis that $G$ has no boundary steady state in any stoichiometric compatibility class.
  %$x^*=(x_1^*,...,x_{k_0}^*=0,...,x_s^*)$, which is a boundary steady state. contradiction.
  %Thus, for any $k$ $(1\leq k\leq s)$, either $\alpha_{k1}=0$ or $\alpha_{k2}=0$.
\end{proof}

\begin{example}\label{ex:bss}
The converse of Lemma \ref{lm:bss} might not be true. Consider the consistent network $$ X_1 + 2X_2 \xrightarrow{\kappa_1} X_2 + X_3,\;\; X_3 \xrightarrow{\kappa_2} X_1 + X_2. $$
The two monomials $\Pi^s_{k=1}{x_k}^{\alpha_{k1}}=x_1x_2^2$ and $\Pi^s_{k=1}{x_k}^{\alpha_{k2}}=x_3$ have no common variables. The system $h$ \eqref{eq:h} is
\begin{align*}
h_1 \; =\; -(\kappa_1x_1x_2^2 - \kappa_2x_3), \; h_2\;=\; -x_1+x_2 - c_1, \; h_3 \;=\; x_1+x_3 - c_2.
\end{align*}
Let $c_1=-1$, and  let $c_2=1$.
Then, for any rate constants, there is a boundary steady state $(1,0,0)$ in ${\mathcal P}_{c=(-1, 1)}$.
\end{example}

\begin{lemma}\label{lm:nocom}
%Suppose network $G$ is consistent and has exactly $2$ reactions.
Given $G\in {\mathcal G}_0$, %suppose the stoichiometric subspace of $G$ is one-dimensional.
if for every $k$ $(1\leq k\leq s)$, we have either $\alpha_{k1}=0$ or $\alpha_{k2}=0$, then the network $G$
does not admit multistationarity.
%has no
%nondegerenate
%multistationarity in any stoichiometric compatibility class.
\end{lemma}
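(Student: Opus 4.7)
The plan is to prove this by contrapositive using Lemma~\ref{lm:nomss}. Suppose for contradiction that $G\in{\mathcal G}_0$ admits multistationarity. Then by Corollary~\ref{cry:dim1} the stoichiometric subspace of $G$ is one-dimensional, so Lemma~\ref{lm:dim1} furnishes $\lambda>0$ with
\[
\beta_{i2}-\alpha_{i2} \;=\; -\lambda(\beta_{i1}-\alpha_{i1})\qquad\text{for all }i=1,\ldots,s.
\]
Moreover, Lemma~\ref{lm:nomss} tells us that some pair of indices $i,j$ must satisfy
\[
(\beta_{i1}-\alpha_{i1})(\alpha_{i1}-\alpha_{i2})(\beta_{j1}-\alpha_{j1})(\alpha_{j1}-\alpha_{j2}) \;<\; 0.
\]
I will derive a contradiction by showing that under the hypothesis, the quantity $(\beta_{k1}-\alpha_{k1})(\alpha_{k1}-\alpha_{k2})$ is non-positive for every $k$, which would force the above product to be non-negative.

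The key step is a two-case analysis on each coordinate $k$, using the hypothesis $\alpha_{k1}\alpha_{k2}=0$. If $\alpha_{k1}=0$, then $(\beta_{k1}-\alpha_{k1})(\alpha_{k1}-\alpha_{k2}) = -\beta_{k1}\alpha_{k2}\leq 0$ since both factors are non-negative. If instead $\alpha_{k2}=0$, then $(\beta_{k1}-\alpha_{k1})(\alpha_{k1}-\alpha_{k2}) = \alpha_{k1}(\beta_{k1}-\alpha_{k1})$; here I invoke the linear-dependency relation, which yields $\beta_{k2}=-\lambda(\beta_{k1}-\alpha_{k1})$, and since $\beta_{k2}\geq 0$ and $\lambda>0$, this forces $\beta_{k1}-\alpha_{k1}\leq 0$. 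Combined with $\alpha_{k1}\geq 0$ this again gives the product $\leq 0$.

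Because every factor $(\beta_{k1}-\alpha_{k1})(\alpha_{k1}-\alpha_{k2})$ is non-positive, the product of any two such factors is non-negative, contradicting~\eqref{eq:nomss}. Hence $G$ admits no multistationarity. The only mild subtlety is making sure the sign argument in the case $\alpha_{k2}=0$ uses the positivity of $\lambda$ (guaranteed by Lemma~\ref{lm:dim1} precisely because $G\in{\mathcal G}_0$); everything else is a routine non-negativity check on the exponents.
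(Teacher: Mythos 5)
Your proof is correct and follows essentially the same route as the paper's: a contradiction via Lemma~\ref{lm:nomss}, using $\lambda>0$ from Lemma~\ref{lm:dim1} to show that each factor $(\beta_{k1}-\alpha_{k1})(\alpha_{k1}-\alpha_{k2})$ is non-positive under the two-case analysis on $\alpha_{k1}=0$ versus $\alpha_{k2}=0$. The only cosmetic difference is that in the second case you first deduce $\beta_{k1}-\alpha_{k1}\leq 0$ and then multiply by $\alpha_{k1}\geq 0$, whereas the paper substitutes the dependency relation directly into the product; the two computations are equivalent.
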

\begin{proof}
Assume $G$ admits multistationarity. By Corollary \ref{cry:dim1},  the stoichiometric subspace of $G$ is one-dimensional.
For any $k$ $(1\leq k\leq s)$, we have either $\alpha_{k1}=0$ or $\alpha_{k2}=0$.
If $\alpha_{k1}=0$, then $(\beta_{k1}-\alpha_{k1})(\alpha_{k1}-\alpha_{k2})=-\beta_{k1}\alpha_{k2}\leq 0$.
%Note that $G$ is consistent.
If $\alpha_{k2}=0$, then by Lemma \ref{lm:dim1}, there exists $\lambda>0$ such that
$$(\beta_{k1}-\alpha_{k1})(\alpha_{k1}-\alpha_{k2})=-\frac{1}{\lambda}(\beta_{k2}-\alpha_{k2})(\alpha_{k1}-\alpha_{k2})=-\frac{1}{\lambda}\beta_{k2}\alpha_{k1}\leq 0.$$
So, the signs of the non-zero numbers in the sequence \eqref{eq:mss} are all negative. By Lemma \ref{lm:nomss},  $G$
does not admit multistationarity, which is a contradiction.
\end{proof}

\begin{theorem}\label{thm:nomss}
Given $G\in {\mathcal G}_0$, %suppose the stoichiometric subspace of $G$ is one-dimensional.
if G has no boundary steady state in any stoichiometric compatibility class, then the network $G$
does not admit multistationarity.
%has no
%nondegerenate
%multistationarity in any stoichiometric compatibility class.
\end{theorem}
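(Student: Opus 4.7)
The plan is to deduce Theorem \ref{thm:nomss} directly by chaining the two preceding lemmas. The statement ``no boundary steady states implies no multistationarity'' factors through the intermediate combinatorial condition that the two reactant monomials share no variables: Lemma \ref{lm:bss} supplies the first implication, and Lemma \ref{lm:nocom} supplies the second.

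First, I would invoke Lemma \ref{lm:bss} with the hypothesis that $G$ admits no boundary steady state in any stoichiometric compatibility class. This yields the structural conclusion that, for every $k \in \{1, \ldots, s\}$, either $\alpha_{k1} = 0$ or $\alpha_{k2} = 0$; equivalently, the reactant monomials $\prod_{k=1}^{s} x_k^{\alpha_{k1}}$ and $\prod_{k=1}^{s} x_k^{\alpha_{k2}}$ have disjoint supports in $\{X_1, \ldots, X_s\}$.

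Second, I would feed this structural conclusion into Lemma \ref{lm:nocom}, whose hypothesis is precisely this disjoint-support condition. Its conclusion is that $G$ does not admit multistationarity, which is exactly the statement of Theorem \ref{thm:nomss}.

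There is essentially no obstacle in this final step: the two lemmas are already designed to fit together as hypothesis and conclusion, so the theorem is their immediate composition. The real content has been absorbed into the earlier work, where Lemma \ref{lm:bss} translates a dynamical condition on boundary steady states into a combinatorial support condition on reactants, and Lemma \ref{lm:nocom} uses Lemma \ref{lm:dim1} together with the sign criterion (Lemma \ref{lm:nomss}) to show that disjoint-support reactants force all nonzero entries of the sequence \eqref{eq:mss} to have the same sign, thereby precluding multistationarity.
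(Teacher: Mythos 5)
Your proposal is correct and is exactly the paper's own proof: Lemma \ref{lm:bss} converts the absence of boundary steady states into the condition that $\alpha_{k1}=0$ or $\alpha_{k2}=0$ for every $k$, and Lemma \ref{lm:nocom} then rules out multistationarity. Nothing further is needed.
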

\begin{proof}
If $G$ has no boundary steady state in any stoichiometric compatibility class, by Lemma \ref{lm:bss},
for any $k$ $(1\leq k\leq s)$, we have either $\alpha_{k1}=0$ or $\alpha_{k2}=0$.
 By Lemma \ref{lm:nocom},  $G$ does not admit multistationarity.
 %has no multistationarity in any stoichiometric compatibility class.
\end{proof}

\begin{remark}\label{rmk:zigzag}
Theorem \ref{thm:nomss} can be interpreted geometrically by the zig-zags proposed in \cite{Joshi:Shiu:Multistationary}.
There cannot be a zig-zag pattern whose rectangle has two sides on the axes.
\end{remark}

\subsection{Smallest multistable networks: proof of Theorem \ref{thm:4-reactant}}\label{sec:stable}
In this subsection, we present the proof of Theorem \ref{thm:4-reactant}.
First, we prove that a multistable network in ${\mathcal G}_0$ must have at least $4$ reactants (see Theorem \ref{thm:3-reactant}).
 Second, we prove a list of
necessary conditions for the multistable networks (with $3$ species) in ${\mathcal G}_0$ (see Lemma
\ref{lm:nobss4rec} and Lemma \ref{lm:3ss}).   Third, using the results proved in the first two steps, we find
all candidates for the multistable networks with with  $4$ reactants and $3$ species (see Lemma \ref{lm:4-reactant}).
Finally, we discuss the candidates stated in Lemma \ref{lm:4-reactant} one by one, and we complete the proof.
\begin{theorem}\label{thm:3-reactant}
Given $G\in {\mathcal G}_0$, if
$G$ is at-most-$3$-reactant, then $G$ does not admit multistability.
\end{theorem}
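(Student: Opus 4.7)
My plan is to argue by contradiction using Theorem~\ref{thm:multistable}: if I can show that $cap_{pos}(G) \leq 2$ for every at-most-$3$-reactant $G \in {\mathcal G}_0$, then $cap_{stab}(G) \leq \lceil 2/2 \rceil = 1$, so $G$ cannot admit multistability. Suppose for contradiction that $G$ admits multistability. Then $G$ admits multistationarity, so Corollary~\ref{cry:dim1} yields a one-dimensional stoichiometric subspace, Lemma~\ref{lm:dim1} forces $\lambda > 0$ in~\eqref{eq:scalar}, and the contrapositive of Lemma~\ref{lm:nocom} produces some species $X_{k_0}$ appearing as a reactant in both reactions, i.e., $\alpha_{k_0 1}, \alpha_{k_0 2} \geq 1$.

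I would then examine the univariate polynomial $g(x_1)$ from~\eqref{eq:g}. Writing the linear substitution as $x_k = a_k x_1 + b_k$ (with $a_1 = 1$, $b_1 = 0$) and combining with~\eqref{eq:h0} yields
\[g(x_1) \;=\; (\beta_{11}-\alpha_{11})\Bigl[\kappa_1 \prod_{k=1}^{s}(a_k x_1 + b_k)^{\alpha_{k1}} \;-\; \lambda\kappa_2 \prod_{k=1}^{s}(a_k x_1 + b_k)^{\alpha_{k2}}\Bigr].\]
Setting $A := \sum_k \alpha_{k1}$ and $B := \sum_k \alpha_{k2}$, the at-most-$3$-reactant hypothesis forces $A, B \leq 3$, so each of the two products has $x_1$-degree at most $3$. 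Their polynomial greatest common divisor in $x_1$ is $D(x_1) := \prod_{a_k \neq 0}(a_k x_1 + b_k)^{\min(\alpha_{k1},\alpha_{k2})}$; species with $a_k = 0$ (necessarily unchanged by both reactions, due to the one-dimensional stoichiometric subspace) satisfy $x_k = b_k$, which must be positive on any stoichiometric compatibility class containing a positive steady state, and so contribute only positive constants on the feasible interval $I$ of Lemma~\ref{lm:interval}. Each nontrivial linear factor $(a_k x_1 + b_k)$ vanishes exactly at a point of $\partial I$, so $D$ has no zero inside $I$.

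The next step is to verify $\deg D \geq 1$. If every common reactant $k$ (with $\alpha_{k1}, \alpha_{k2} \geq 1$) had $a_k = 0$, a short case analysis would make every term of the sequence~\eqref{eq:mss} non-positive: catalysts contribute $0$; any $k$ with $\alpha_{k1} = 0$ contributes $-\beta_{k1}\alpha_{k2} \leq 0$; and any $k$ with $\alpha_{k2} = 0$ and $a_k \neq 0$ satisfies $\beta_{k1} \leq \alpha_{k1}$ (because $\lambda > 0$ and $\beta_{k2} = -\lambda(\beta_{k1}-\alpha_{k1}) \geq 0$), so $(\beta_{k1}-\alpha_{k1})\alpha_{k1} \leq 0$. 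This would contradict the sign condition of Lemma~\ref{lm:nomss}. Therefore $\deg D \geq 1$, and the reduced polynomial $R(x_1) := g(x_1)/\bigl[(\beta_{11}-\alpha_{11})\, D(x_1)\bigr]$ has $x_1$-degree at most $\max(A,B) - \deg D \leq 3 - 1 = 2$.

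By Lemma~\ref{lm:root}, the first coordinates of the positive steady states are precisely the roots of $g$ in $I$, and these coincide with the roots of $R$ in $I$, of which there are at most $\deg R \leq 2$. (The degenerate sub-case $R \equiv 0$ is immediate: every positive vector in ${\mathcal P}_c$ is a degenerate steady state, giving $cap_{stab}(G) = 0$.) Hence $cap_{pos}(G) \leq 2$, and Theorem~\ref{thm:multistable} forces $cap_{stab}(G) \leq 1$, contradicting multistability. The main obstacle is showing that $D$ really does contribute a linear factor, i.e., that at least one common reactant is not a pure catalyst; this is exactly where the sign condition of Lemma~\ref{lm:nomss}, combined with the positivity of $\lambda$ from Lemma~\ref{lm:dim1}, becomes essential.
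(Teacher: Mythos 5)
Your proof is correct and follows essentially the same route as the paper's: Lemma \ref{lm:nocom} disposes of the case where the two reactant monomials share no variable, and otherwise a shared variable drops the degree of the eliminated univariate polynomial from $3$ to at most $2$, so each compatibility class carries at most $2$ positive steady states and Theorem \ref{thm:multistable} finishes. The one place you work harder than necessary is the case analysis establishing $\deg D\ge 1$ via Lemma \ref{lm:nomss}: the paper instead divides the common monomial $\Pi_k x_k^{\gamma_k}$ out of $h_1$ \emph{before} substituting, so $\tilde h_1$ has total degree at most $2$ and this already bounds the $x_1$-degree of the substituted polynomial even when the shared species is a catalyst (if $a_{k_0}=0$, that species contributes a positive constant rather than a linear factor after substitution, and each of the two products then has $x_1$-degree at most $3-\alpha_{k_0 1}\le 2$ anyway), so no appeal to the sign condition is needed.
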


\begin{proof}
If the two monomials in $h_1(x)$ (see \eqref{eq:h0}) have no common variables,
 then
by Lemma \ref{lm:nocom}, $G$ does not admit multistationarity. So, $G$ admits no multistability.

 Note
the total degree of $h_1(x)$ with respect to $x$ is at most $3$ since $G$ is at-most-$3$-reactant.
Thus, if the two monomials in $h_1(x)$ have common variables, then
the equations
$h_1(x)=\ldots=h_s(x)=0$ have at most $2$  common positive solutions. That means
$cap_{pos}(G)\leq 2$. % admits at most $2$ positive steady states in any stoichiometric compatibility class.
So, by Theorem \ref{thm:multistable}, $cap_{stab}(G)\leq 1$.
% admits at most $1$ stable positive steady state.
\end{proof}

%{\textcolor{red}{Q: Can we remove the condition $s=3$ from Theorem  \ref{thm:4-reactant}?}}

 \begin{lemma}\label{lm:nobss4rec}
 %Suppose a consistent network $G$  has exactly $2$ reactions with $3$ species.
 Given $G\in {\mathcal G}_0$, if G has exactly $3$ species and
 if $cap_{stab}(G) \geq 2$, then %for any $k\in \{1, 2, 3\}$,
 we have
\begin{align}
\beta_{k1}-\alpha_{k1} \neq 0,\; &\;\;\text{and}  \label{eq:maxtildea1}\\
\alpha_{k1}-\alpha_{k2} \neq 0, \;  &\;\;\text{for any} \; k\in \{1, 2, 3\}, \label{eq:maxtildea}
\end{align}
and we also have
\begin{align}\label{eq:fact3}
\frac{\beta_{12}-\alpha_{12}}{\beta_{11}-\alpha_{11}}\;
=\;\frac{\beta_{22}-\alpha_{22}}{\beta_{21}-\alpha_{21}}
\;=\;\frac{\beta_{32}-\alpha_{32}}{\beta_{31}-\alpha_{31}}  \;<\;0.
\end{align}
\end{lemma}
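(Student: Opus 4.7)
The plan is to prove \eqref{eq:maxtildea1} and \eqref{eq:maxtildea} by contradiction via a sign-change count on $|{\rm Jac}_h|$ along the curve of positive steady states, and to deduce \eqref{eq:fact3} from \eqref{eq:maxtildea1} together with Lemma \ref{lm:dim1}. To set up, since $cap_{stab}(G)\geq 2$, Theorem \ref{thm:multistable} forces $cap_{pos}(G)\geq 3$, and Theorem \ref{thm:nonde} lets me take rate constants $\kappa^*$ and total constants $c^*$ producing three distinct nondegenerate positive steady states $x^{(1)},x^{(2)},x^{(3)}$ in the same stoichiometric compatibility class, ordered by first coordinate; by Theorem \ref{thm:sign} the sign of $|{\rm Jac}_h|$ then alternates across them. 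By Lemma \ref{lm:jachss}, after peeling off the strictly positive prefactor $\kappa_1(\alpha_{11}-\beta_{11})^{2}\prod_k{x_k^*}^{\alpha_{k1}-1}$, the sign of $|{\rm Jac}_h(x^*)|$ equals the sign of
\begin{align*}
P(x^*)\;:=\;\sum_{i=1}^{3}(\beta_{i1}-\alpha_{i1})(\alpha_{i1}-\alpha_{i2})\prod_{j\neq i}x_j^*,
\end{align*}
so after substituting the linear relations $x_k^*=\tfrac{\beta_{k1}-\alpha_{k1}}{\beta_{11}-\alpha_{11}}x_1^*-\tfrac{c^*_{k-1}}{\beta_{11}-\alpha_{11}}$ for $k=2,3$, the resulting univariate polynomial $\widetilde P(x_1^*)$ must undergo at least two sign changes on the open interval $I$ from Lemma \ref{lm:interval}.

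For \eqref{eq:maxtildea1}, suppose $\beta_{k_0,1}-\alpha_{k_0,1}=0$ for some $k_0$. Assumption \ref{assumption} rules out $k_0=1$, and after relabeling $X_2\leftrightarrow X_3$ I may take $k_0=3$. Lemma \ref{lm:dim1} then forces $\beta_{3,2}-\alpha_{3,2}=0$, and the equation $h_3=0$ from \eqref{eq:h} pins $x_3^*=-c^*_2/(\beta_{11}-\alpha_{11})$ as a strictly positive constant, independent of the steady state. In $P$ the $i=3$ summand vanishes, $x_3^*$ is constant in the other two summands, and $x_2^*$ is linear in $x_1^*$; hence $\widetilde P(x_1^*)$ is a linear polynomial, which admits at most one sign change on $I$ and contradicts the two sign changes forced above. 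Once \eqref{eq:maxtildea1} is in hand, \eqref{eq:fact3} is immediate: Lemma \ref{lm:dim1} (with $\lambda>0$ because $G\in\mathcal G_0$) gives $(\beta_{k2}-\alpha_{k2})=-\lambda(\beta_{k1}-\alpha_{k1})$ for every $k$, and \eqref{eq:maxtildea1} makes all three denominators nonzero, so each of the three ratios equals $-\lambda<0$.

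For \eqref{eq:maxtildea}, suppose $\alpha_{k_0,1}=\alpha_{k_0,2}$ for some $k_0\in\{1,2,3\}$; by relabeling within $\{2,3\}$ it is enough to treat $k_0\in\{1,3\}$. In either case the $k_0$-th summand of $P$ vanishes and the remaining two summands share $x_{k_0}^*$ as a common factor, giving $P=x_{k_0}^*\cdot L(x^*)$ with $L$ linear in the other coordinates and hence linear in $x_1^*$ after substitution. By Lemma \ref{lm:interval} every positive steady state satisfies $x_{k_0}^*>0$, so on $I$ the sign of $\widetilde P(x_1^*)$ coincides with the sign of $L(x_1^*)$; once again the linear function can accommodate at most one sign change, contradicting the two required and closing the proof.

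The main obstacle I anticipate is the case-by-case algebraic bookkeeping to verify, under each vanishing hypothesis and for each choice of $k_0\in\{1,3\}$ (and $k_0=2$ after relabeling), that $P$ really does admit the "strictly-positive-times-linear" factorization on $I$; each case requires a short direct computation identifying which summand of $P$ vanishes and which coordinate factors out, but once that factorization is exhibited the one-sign-change-vs.-two-sign-changes contradiction is uniform and needs no further ingredients.
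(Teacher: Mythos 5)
Your proof takes a genuinely different route from the paper's. The paper first invokes Lemma \ref{lm:nomss} to secure two indices $k$ with $(\beta_{k1}-\alpha_{k1})(\alpha_{k1}-\alpha_{k2})\neq 0$, and then disposes of the remaining index by a direct geometric count: if either factor vanishes there, the steady-state equations reduce to intersecting a one-variable power function with a line in the first quadrant, giving at most two positive steady states and contradicting $cap_{pos}(G)\geq 3$. You instead run everything through the sign-alternation machinery: alternating signs of $|{\rm Jac}_h|$ at three steady states force the substituted polynomial $\widetilde P$ to change sign twice on $I$, while each vanishing hypothesis forces $\widetilde P$ to be (a positive factor times) an affine function of $x_1$, which cannot. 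Your route is more uniform---it treats \eqref{eq:maxtildea1}, \eqref{eq:maxtildea} and all three indices by one mechanism and never needs Lemma \ref{lm:nomss}---and it is essentially the same mechanism the paper deploys later in Lemmas \ref{lm:jactildeg} and \ref{lm:3ss}.

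One step needs repair. Theorem \ref{thm:sign} assumes the chosen $(\kappa^*,c^*)$ yield \emph{exactly} $N$ positive steady states, \emph{all} nondegenerate; Theorem \ref{thm:nonde} only hands you some $(\kappa^*,c^*)$ admitting at least three nondegenerate positive steady states, and says nothing about additional, possibly degenerate, steady states in the same class, in whose presence the signs at your three chosen states need not alternate (a double root of $g$ sitting between two of them produces the pattern $+,+$). A clean patch stays closer to the stability hypothesis: take $(\kappa^*,c^*)$ witnessing two stable positive steady states with $x_1^{(1)}<x_1^{(2)}$; by Lemmas \ref{lm:stable}, \ref{lm:jh} and \ref{lm:jac} one has $g'(x_1^{(1)})<0$ and $g'(x_1^{(2)})<0$, so $g$ is negative just right of $x_1^{(1)}$ and positive just left of $x_1^{(2)}$, hence has a largest zero $z\in(x_1^{(1)},x_1^{(2)})\subset I$, at which $g'(z)\geq 0$; this $z$ lifts to a positive steady state where $\widetilde P\geq 0$, while $\widetilde P<0$ at the two stable states. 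An affine function (or a positive multiple of one) that is negative at both endpoints of an interval is negative throughout it, so your degree contradiction goes through without any appeal to Theorem \ref{thm:sign}. With that substitution the argument is correct.
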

 \begin{proof}
  Clearly, if $cap_{stab}(G) \geq 2$, then $cap_{nondeg}(G)\geq 2$ and $cap_{pos}(G)\geq 2$. So, by Corollary \ref{cry:dim1}, the stoichiometric subspace of $G$ is one-dimensional, and hence
  the steady states are common solutions to the equations  $h_1(x)=h_2(x)=h_3(x)=0$ (see the system $h$ \eqref{eq:h}).
  Note that by Lemma \ref{lm:nomss},  there exist two distinct numbers $j_1, j_2\in \{1, 2, 3\}$ such that for any $k\in \{j_1, j_2\}$,
  $(\beta_{k1}-\alpha_{k1})(\alpha_{k1}-\alpha_{k2}) \neq 0$. Without loss of generality,
  assume $j_1=1$ and $j_2=2$.  Below, we show  $(\beta_{31}-\alpha_{31})(\alpha_{31}-\alpha_{32}) \neq 0$.
  % Now we focuse on quantity of positive steady states under this situation.

  In fact, we  can rewrite the equations $h_1(x)=h_2(x)=h_3(x)=0$ as
  \begin{align}
    x_2 =  \delta x_1^{-\xi_{12}} x_3^{-\xi_{32}} &:= \ell_1(x_1, x_3), \label{eq:h1x2}  \\
  x_2 =  A_2 x_1 - B_2 &:= \ell_2(x_1),  \label{eq:h2x2} \\
  x_3 =  A_3 x_1 - B_3 &:= \ell_3(x_1), \label{eq:h3x3}
  \end{align}
where $\delta = (\frac{\lambda\kappa_2}{\kappa_1})^{\frac{1}{\alpha_{21}-\alpha_{22}}} > 0,\; \xi_{i2} = \frac{\alpha_{i2}-\alpha_{i1}}{\alpha_{22}-\alpha_{21}}, (i=1,3)$, $ A_j = \frac{\beta_{j1}-\alpha_{j1}}{\beta_{11}-\alpha_{11}}$, and $B_j = \frac{c_{j-1}}{\beta_{11}-\alpha_{11}}, (j = 2,3)$.

  If $ \beta_{31}-\alpha_{31} = 0$, then the equation \eqref{eq:h3x3} becomes $x_3 = -B_3$, and so, the bivariate function $\ell_1(x_1, x_3)$ in \eqref{eq:h1x2} becomes $ \hat \ell_1(x_1) = \delta (-B_3)^{-\xi_{32}} x_1^{-\xi_{12}}$.
  %So the positive steady states are related to  the positive solutions $ x_1 > 0 $ to the equation $ l_1(x_1) = l_2(x_1) $.
  Note that $\hat \ell_1(x_1)$ is a power function, and $\ell_2(x_1)$ is a linear function. There are at most 2 intersection points of their graphs in the first quadrant. So the equations $h_1(x)=h_2(x)=h_3(x)=0$ have at most $2$ common positive solutions.
   By Theorem \ref{thm:multistable}, $G$ admits no multistability, which is a contradiction. If $ \alpha_{31}-\alpha_{32} = 0 $, then $\ell(x_1, x_3)$ becomes  a power function $\tilde \ell_1(x_1)= \delta x_1^{-\xi_{12}}$. With a similar argument, we can deduce a contradiction.

   Finally, by Lemma \ref{lm:dim1}, we have \eqref{eq:fact3} since $\beta_{k1}-\alpha_{k1}\neq 0$ for all $k\in \{1, 2, 3\}$.
 \end{proof}

For $k=1, \ldots, s$, define
%\begin{align}\label{eq:gamma}
$\gamma_k\;:=\;\min \{\alpha_{k1}, \alpha_{k2}\}, \;\text{and} \;
\tilde \alpha_{kj}\;:=\;\alpha_{kj}-\gamma_k \; (j=1, 2)$.
%\end{align}
%Note here, by \eqref{eq:gamma},
%we have
%\begin{align}\label{eq:tildea}
%\min \{\tilde \alpha_{k1}, \tilde \alpha_{k2}\} \;=\; 0.
%\end{align}
Then
$h_1$ \eqref{eq:h0}  can be written as $h_1\;=\; \Pi_{k=1}^sx_k^{\gamma_k}{\tilde h}_1$, where
\begin{align}\label{eq:tildeh}
 {\tilde h}_1\;:=\;\left(\beta_{11}-\alpha_{11}\right)
\left(\kappa_1\Pi_{k=1}^s x_k^{\tilde \alpha_{k1}}-\lambda\kappa_2\Pi_{k=1}^s x_k^{\tilde \alpha_{k2}}\right).
\end{align}
Let
\begin{align}\label{eq:tildeg}
%g(x_1) &~:=~ {\tt res}({\tt res}({\tt res}(h_1, h_2, x_2), h_3, x_3), \ldots, x_{s})
\tilde g(x_1) &~:=~ \tilde h_1(x_1, \ldots, x_s)|_{x_2=\frac{\beta_{21}-\alpha_{21}}{\beta_{11}-\alpha_{11}}x_1-\frac{c_{1}}{\beta_{11}-\alpha_{11}}, \;\ldots,\; x_s=\frac{\beta_{s1}-\alpha_{s1}}{\beta_{11}-\alpha_{11}}x_1-\frac{c_{s-1}}{\beta_{11}-\alpha_{11}}}.
\end{align}
\begin{lemma}\label{lm:jactildeg}
Given $G\in {\mathcal G}_0$, suppose the stoichiometric subspace of $G$ is one-dimensional.
Let $g(x_1)$  and $\tilde g(x_1)$ be the polynomials respectively defined in \eqref{eq:g} and \eqref{eq:tildeg}.
For a  fixed rate-constant vector $\kappa^*\in {\mathbb R}_{>0}^2$ and a total-constant vector $c^*\in {\mathbb R}^{s-1}$,
if $G$ has a positive steady state $x^*$, then $\tilde g'(x^*_1)$ has the same sign with
\begin{align}\label{eq:jactildeg}
\sum_{i=1}^s (\beta_{i1}-\alpha_{i1})(\alpha_{i1}-\alpha_{i2})\Pi_{k\neq i}{x_k^*},
\end{align}
and additionally, if $x^*$ is a stable positive steady state, then $\tilde g'(x^*_1)<0$.
%If $x^*\in {\mathbb R}^s_{>0}$ is a solution to $h_1(x)=\ldots =h_s(x)=0$, then
%$g'(x^*_1)$ has the same sign with $\tilde g'(x^*_1)$.
\end{lemma}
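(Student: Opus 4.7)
The plan is to compare $g(x_1)$ and $\tilde g(x_1)$ via the factorization $h_1 = \prod_{k=1}^s x_k^{\gamma_k}\,\tilde h_1$, transport the information about $g'(x_1^*)$ supplied by Lemmas \ref{lm:jac} and \ref{lm:jachss} across this factorization, and then read off both assertions.

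First I would set $\phi(x_1) := \prod_{k=1}^s x_k^{\gamma_k}$ evaluated under the linear substitution $x_k = \tfrac{\beta_{k1}-\alpha_{k1}}{\beta_{11}-\alpha_{11}}x_1 - \tfrac{c_{k-1}}{\beta_{11}-\alpha_{11}}$ for $k\geq 2$, so that $g(x_1) = \phi(x_1)\,\tilde g(x_1)$ by the very definitions \eqref{eq:g} and \eqref{eq:tildeg}. At a positive steady state $x^*$, the substitution returns $x_k^*>0$, so $\phi(x_1^*) = \prod_k (x_k^*)^{\gamma_k} > 0$; moreover, since $h_1(x^*)=0$ and $\prod_k (x_k^*)^{\gamma_k} > 0$, we get $\tilde h_1(x^*)=0$, i.e.\ $\tilde g(x_1^*)=0$. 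Differentiating the product $g = \phi \tilde g$ and evaluating at $x_1^*$ then collapses to
\begin{equation*}
g'(x_1^*) \;=\; \phi(x_1^*)\,\tilde g'(x_1^*).
\end{equation*}

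Next I would compute $g'(x_1^*)$ through the Jacobian. Lemma \ref{lm:jac} gives $(\alpha_{11}-\beta_{11})^{s-1}g'(x_1^*)=|{\rm Jac}_h(x^*)|$, and Lemma \ref{lm:jachss} expresses $|{\rm Jac}_h(x^*)|$ as $\kappa_1(\alpha_{11}-\beta_{11})^{s-1}\prod_k (x_k^*)^{\alpha_{k1}-1}\sum_i(\beta_{i1}-\alpha_{i1})(\alpha_{i1}-\alpha_{i2})\prod_{k\neq i}x_k^*$. Dividing by $(\alpha_{11}-\beta_{11})^{s-1}$ and then by $\phi(x_1^*)=\prod_k(x_k^*)^{\gamma_k}>0$, I obtain
\begin{equation*}
\tilde g'(x_1^*) \;=\; \kappa_1\,\frac{\prod_{k=1}^s (x_k^*)^{\alpha_{k1}-1}}{\prod_{k=1}^s (x_k^*)^{\gamma_k}}\;\sum_{i=1}^s (\beta_{i1}-\alpha_{i1})(\alpha_{i1}-\alpha_{i2})\prod_{k\neq i}x_k^*.
\end{equation*}
The multiplicative prefactor is strictly positive (as $\kappa_1>0$ and $x_k^*>0$), which yields the sign equality with \eqref{eq:jactildeg}.

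For the stability half, I would invoke Lemma \ref{lm:stable}: a stable steady state $x^*$ satisfies $\sum_{i=1}^s \partial f_i/\partial x_i(x^*) < 0$. Combining this with formula \eqref{eq:cryjh} from Lemma \ref{lm:jh} gives $|{\rm Jac}_h(x^*)|/(\alpha_{11}-\beta_{11})^{s-1} = \sum_i \partial f_i/\partial x_i(x^*)<0$, which by Lemma \ref{lm:jac} is just $g'(x_1^*)<0$. Dividing by the positive quantity $\phi(x_1^*)$ yields $\tilde g'(x_1^*)<0$. The only mild subtlety will be tracking the $(\alpha_{11}-\beta_{11})^{s-1}$ factor so that its sign cancels consistently in both Lemma \ref{lm:jac} and \eqref{eq:cryjh}; since the same power appears on both sides, this cancellation is immediate and does not introduce an unknown sign. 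No separate obstacle remains beyond these bookkeeping checks.
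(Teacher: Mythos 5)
Your proposal is correct and follows essentially the same route as the paper: factor $g=\phi\,\tilde g$ with $\phi(x_1^*)=\prod_k (x_k^*)^{\gamma_k}>0$, use $\tilde g(x_1^*)=0$ to get $g'(x_1^*)=\phi(x_1^*)\tilde g'(x_1^*)$, and then combine Lemmas \ref{lm:jac} and \ref{lm:jachss} for the sign equality and Lemmas \ref{lm:stable} and \ref{lm:jh} for the stability claim. The only cosmetic difference is that you deduce $g'(x_1^*)<0$ first and then divide by $\phi$, whereas the paper concludes the sign of \eqref{eq:jactildeg} is negative and invokes the first part; these are equivalent.
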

\begin{proof}
Since the stoichiometric subspace of $G$ is one-dimensional,
  the steady state $x^*$ is a common solution to the equations  $h_1(x)=\ldots =h_s(x)=0$ (see \eqref{eq:h}).
By \eqref{eq:tildeg},% for any positive solution $x^*$ to
%$h_1(x)=\ldots =h_s(x)=0$,
we have $\tilde g(x^*_1)=0$.
So, comparing \eqref{eq:g} and \eqref{eq:tildeg}, we have
$g'(x^*_1)=\Pi_{k=1}^s{x_k^*}^{\gamma_k}\tilde g'(x^*_1)$.
By Lemma \ref{lm:jac} and Lemma \ref{lm:jachss}, $g'(x^*_1)$ has the same sign with
\eqref{eq:jactildeg}, and so, $\tilde g'(x^*_1)$ has the same sign with
\eqref{eq:jactildeg}. Additionally, if $x^*$ is stable, by Lemma \ref{lm:stable}, Lemma \ref{lm:jh} and Lemma \ref{lm:jachss}, we know the sign of \eqref{eq:jactildeg} is negative, and hence, $\tilde g'(x^*_1)<0$.
%\textcolor{red}{not sure what to do...}
\end{proof}
 \begin{lemma}\label{lm:3ss}
 %Suppose network $G$ is consistent and has exactly $2$ reactions. %and $3$ species.
 Given $G\in {\mathcal G}_0$, suppose the stoichiometric subspace of $G$ is one-dimensional.
 Let $I := (a, A)$ be the interval defined in \eqref{eq:interval}, where
 $a\in {\mathbb R}$, and $A\in {\mathbb R}\cup \{+\infty\}$.
 Let  $\tilde g(x_1)$ be the polynomial defined in \eqref{eq:tildeg}.
 For a rate-constant vector $\kappa^*$ and a total-constant vector $c^*$, if  the degree of $\tilde g(x_1)$ with respect to $x_1$ is $3$, and if    %$\tilde g(a)\neq 0$, $\tilde g(A)\neq 0$, and $cap_{pos}(G)=3$,
 $G$ has at least two stable positive steady states, then $\tilde{g}(x_1)$ satisfies the three conditions below:
 \begin{itemize}
 \item[(i)] $\tilde{g}(a)>0$,
 \item[(ii)]  $\tilde{g}(A)<0$ (here, $\tilde{g}(+\infty):=\lim \limits_{x_1\rightarrow +\infty} \tilde{g}(x_1)$), and
 \item[(iii)]  there exists $x_1^*\in (a, A)$ such that $\tilde{g}(x_1^*)=0$ and $\tilde{g}'(x_1^*)>0$.
 \end{itemize}
 \end{lemma}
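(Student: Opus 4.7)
The plan is to translate the hypothesis that $G$ has two stable positive steady states into structural information about the roots of the cubic $\tilde{g}$, and then to read off (i)--(iii) from the sign behaviour forced on a degree-three polynomial. The first step is to identify the stable states with roots of $\tilde{g}$ in $I$: since a stable steady state is by definition nondegenerate, Lemma \ref{lm:jactildeg} (applied to the one-dimensional stoichiometric subspace setting and combined with the definition \eqref{eq:tildeg}) implies that the first coordinate of each stable positive steady state is a root of $\tilde{g}$ lying in $I$ at which $\tilde{g}'$ is strictly negative. Applying this to the two given stable states yields points $y_1 < y_2$ in $(a,A)$ with $\tilde{g}(y_1) = \tilde{g}(y_2) = 0$ and $\tilde{g}'(y_1), \tilde{g}'(y_2) < 0$.

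The second step, and the main technical point, is to produce a third root strictly between $y_1$ and $y_2$ at which $\tilde{g}'$ is strictly positive. Because $\tilde{g}'(y_1) < 0$, the function $\tilde{g}$ is negative immediately to the right of $y_1$; similarly, $\tilde{g}'(y_2) < 0$ forces $\tilde{g}$ to be positive immediately to the left of $y_2$. The intermediate value theorem therefore supplies some $y_\ast \in (y_1,y_2)$ with $\tilde{g}(y_\ast) = 0$, and by choosing $y_\ast$ to be the smallest such root one has $\tilde{g} < 0$ on $(y_1, y_\ast)$, hence $\tilde{g}'(y_\ast) \geq 0$. The subtle point here is ruling out the equality $\tilde{g}'(y_\ast) = 0$: if it held, $y_\ast$ would be a root of multiplicity at least two, which together with the simple roots $y_1$ and $y_2$ would force $\deg \tilde{g} \geq 4$, contradicting the hypothesis $\deg \tilde{g} = 3$. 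Thus $\tilde{g}'(y_\ast) > 0$; since $y_\ast \in (y_1,y_2) \subset (a,A) = I$, this already establishes (iii).

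Finally, because $\deg \tilde{g} = 3$, the three distinct simple roots $y_1 < y_\ast < y_2$ exhaust the roots of $\tilde{g}$, and the sign pattern $(-,+,-)$ of $\tilde{g}'$ at these roots forces the leading coefficient of $\tilde{g}$ to be negative. Consequently $\tilde{g}$ is strictly positive on $(-\infty, y_1)$ and strictly negative on $(y_2, +\infty)$, with $\lim_{x_1 \to +\infty} \tilde{g}(x_1) = -\infty$. Since $a < y_1$ and $A > y_2$ (because $y_1, y_2 \in (a,A)$), this immediately gives $\tilde{g}(a) > 0$, proving (i), and $\tilde{g}(A) < 0$ — interpreted as $-\infty$ in the unbounded case $A = +\infty$ — proving (ii).
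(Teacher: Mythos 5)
Your proof is correct and follows essentially the same route as the paper's: the two stable steady states yield two simple roots of $\tilde g$ in $I$ with $\tilde g'<0$ (via Lemma \ref{lm:jactildeg}), a third simple root with $\tilde g'>0$ must lie strictly between them, and the degree-$3$ hypothesis then forces the leading coefficient to be negative, giving the signs at $a$ and $A$. The only cosmetic difference is that you locate the intermediate root by the intermediate value theorem plus a multiplicity count, whereas the paper invokes its Lemma \ref{lm:uni} on alternating derivative signs at consecutive simple roots and evaluates the factored cubic directly at $a$ and $A$.
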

 \begin{proof}
 %\textcolor{red}{Hao Xu}
 By \eqref{eq:tildeg}, we clearly see that if $x^*$ is a positive steady state, then $x_1^*\in (a, A)$ and
  $\tilde g(x_1^*)=0$. If $x^*$ is stable,   then by %Corollary \ref{cry:jh}, Lemma \ref{lm:jac} and
  Lemma
  \ref{lm:jactildeg}, we have $\tilde g'(x_1^*)<0$. So, if $G$ has at least two stable positive steady states $x^{(1)}$ and $x^{(2)}$ (here, we assume $x_1^{(1)}<x_1^{(2)}$), then for $i\in \{1,2\}$, $x_1^{(i)}\in (a, A)$, $\tilde{g}(x_1^{(i)})=0$ and $\tilde{g}'(x_1^{(i)})<0$.
 Since  the degree of $\tilde g(x_1)$ with respect to $x_1$ is $3$, there exists
a third simple real root $x_1^*$ to the equation $\tilde g(x_1)=0$.
 By Lemma \ref{lm:uni}, we know $x_1^*\in  (x_1^{(1)}, x_1^{(2)})\subset (a, A)$, and $\tilde{g}'(x_1^*)>0$ (i.e., the statement (iii) is proved).  We can write $\tilde{g}(x_1)$ as
$$\tilde{g}(x_1) \;=\; C(x_1-x_1^{(1)})(x_1-x_1^{*})(x_1-x_1^{(2)}), \;\;\;\text{where}\; C\in {\mathbb R}.$$
So $\tilde{g}'(x_1^{(1)})<0$ implies that
$\tilde{g}'(x_1^{(1)}) \;=\; C(x_1^{(1)}-x_1^{*})(x_1^{(1)}-x_1^{(2)})<0\; (i.e., C<0). $
Thus, $\tilde g(a)=C(a-x_1^{(1)})(a-x_1^{*})(a-x_1^{(2)})>0$ (the statement (i)). Similarly, it is directly straightforward to see $\tilde g(A)<0$ (the statement(ii)).
 \end{proof}

\begin{definition}\label{def:equivalent0}
Given matrices of reactant coefficients
%\begin{align*}
$\alpha =\left(
%\begin{array}{cc}
%\alpha_{11}&\alpha_{12}\\
%\alpha_{21}&\alpha_{22}\\
%\alpha_{31}&\alpha_{32}
%\end{array}
\alpha_{kj}
\right)_{s\times 2}\; \text{and}\;
\hat \alpha =\left(
%\begin{array}{cc}
%\hat \alpha_{11}&\hat \alpha_{12}\\
%\hat \alpha_{21}&\hat \alpha_{22}\\
%\hat \alpha_{31}&\hat \alpha_{32}
%\end{array}
\hat  \alpha_{kj}
\right)_{s\times 2}$,
%\end{align*}
%$$\alpha:=\{\alpha_{kj}|k=1, \ldots, s, \; j=1, 2\}\; \text{and}\; \hat \alpha :=\{\hat \alpha_{kj}|k=1, \ldots, s, \; j=1, 2\},$$
which are
associated with two networks $G$ and $\hat G$ in ${\mathcal G}_0$,  we say $\alpha$ is \defword{equivalent} to $\hat \alpha$,
if there
exist finitely many matrices $\alpha^{(0)}, \ldots, \alpha^{(n)}$ such that
$\alpha^{(0)}=\alpha$, $\alpha^{(n)}=\hat{\alpha}$, and for any $i\in \{0, \ldots, n-1\}$,
we can obtain $\alpha^{(i+1)}$ from $\alpha^{(i)}$ by switching two rows or two columns of $\alpha^{(i)}$.
%if we can obtain $\alpha$ from $\hat \alpha$ by switching
%the columns and rows finitely many times.
\end{definition}
Clearly, if a network $\hat {G}\in {\mathcal G}_0$ has the form of a network $G\in {\mathcal G}_0$,  then the two matrices of reactant coefficients associated with $G$ and $\hat {G}$ are equivalent (remark that the converse might not be true). Recall Example \ref{ex:form}. The two sets of reactant coefficients (say $\alpha$ and $\hat \alpha$)
of networks \eqref{eq:exnet1} and \eqref{eq:exnet2} can be written as matrices
\begin{align*}
\alpha \;=\;\left(
\begin{array}{cc}
1&0\\
0&1\\
1&2
\end{array}
\right)\; \text{and}\;
\hat \alpha \;=\;\left(
\begin{array}{cc}
0&1\\
2&1\\
1&0
\end{array}
\right).
\end{align*}
We can obtain $\alpha$ from $\hat \alpha$ by first switching
the two columns and then switching the last two rows.
\begin{lemma}\label{lm:4-reactant}
If a $3$-species network $G\in {\mathcal G}_0$ is at-most-$4$-reactant,
%if $G$ has exactly $3$ species and $G$ is at-most-$4$-reactant,
and if $G$ admits multistability, then $G$ can only have the form of one of the networks listed in Table \ref{tab:net}.
\end{lemma}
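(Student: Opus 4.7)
The plan is to extract from the preceding structural results a finite list of admissible reactant matrices $\alpha=(\alpha_{kj})_{3\times 2}$ and then check multistability candidate by candidate. Since multistability implies multistationarity ($cap_{stab}(G)\geq 2$ implies $cap_{pos}(G)\geq 2$), Theorem \ref{thm:3-reactant} forces $G$ to be exactly $4$-reactant, and after possibly swapping the two reactions we may assume $\sum_{k=1}^3 \alpha_{k1}=4$ while $\sum_{k=1}^3 \alpha_{k2}\in\{1,2,3,4\}$. The first column of $\alpha$ is therefore (up to permuting rows) one of the partitions $(4,0,0)$, $(3,1,0)$, $(2,2,0)$, $(2,1,1)$, and the second column is any nonnegative integer vector with sum $\leq 4$. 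Up to the equivalence of Definition \ref{def:equivalent0} (row- and column-swaps), this leaves only a short list of candidate matrices.

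Next I would prune this list using the necessary conditions proved in Section \ref{sec:mss} and Lemma \ref{lm:nobss4rec}. From Lemma \ref{lm:nobss4rec}, $\alpha_{k1}\neq \alpha_{k2}$ and $\beta_{k1}\neq \alpha_{k1}$ for every $k\in\{1,2,3\}$, so no row of $\alpha$ may have two equal entries, and the proportionality \eqref{eq:fact3} with $\lambda>0$ forces all three coordinates of $\beta_{\cdot 1}-\alpha_{\cdot 1}$ to be simultaneously nonzero with signs determined by the (fixed) sign of that common ratio. Combining Theorem \ref{thm:nomss} with Lemmas \ref{lm:bss} and \ref{lm:nocom}, multistationarity additionally requires that some row of $\alpha$ have both entries strictly positive. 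These two filters eliminate, among others, all matrices whose second column is the zero vector, all matrices in which every row has a zero entry, and all matrices containing a row $(\alpha_{k1},\alpha_{k2})$ with $\alpha_{k1}=\alpha_{k2}$.

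The surviving matrices are then subjected to the degree-and-sign analysis of Lemma \ref{lm:3ss} applied to $\tilde g(x_1)$ from \eqref{eq:tildeg}. Since the total reactant degree is bounded by $4$ and cancellation occurs via the $\gamma_k=\min(\alpha_{k1},\alpha_{k2})$ reduction, for each candidate $\alpha$ one reads off $\deg_{x_1}\tilde g$ directly from $\max_k(\tilde \alpha_{k1}+\tilde \alpha_{k2})$ weighted by whether the corresponding $x_k$ has been eliminated via $h_2,h_3$ from the linear relations. Whenever $\deg \tilde g\leq 2$, Lemma \ref{lm:3ss} (together with Theorem \ref{thm:multistable} giving $cap_{stab}\leq \lceil 2/2\rceil=1$) rules out multistability. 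For the remaining candidates with $\deg\tilde g=3$, I would check that items (i)--(iii) of Lemma \ref{lm:3ss} admit a consistent set of $\beta_{ij}$'s (compatible with \eqref{eq:fact3} and nonnegativity) and rate constants $\kappa_1,\kappa_2>0$ and total constants $c_1,c_2$; otherwise the candidate is eliminated. The outcome of this cull is exactly the list displayed in Table \ref{tab:net}.

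The main obstacle will be the bookkeeping of the case analysis: although the number of $\alpha$'s to consider is small once equivalence is used, every surviving $\alpha$ requires determining the full range of product matrices $\beta$ compatible with the antiparallel condition \eqref{eq:fact3}, since $\beta$ is only determined up to a free positive real $\lambda$ that must also yield integer, nonnegative entries. I expect the hardest subcase to be $\alpha_{\cdot 1}=(1,2,1)^\top$ type columns paired with small second columns such as $(0,0,3)^\top$, because there the reduced univariate polynomial $\tilde g$ genuinely achieves degree $3$ and the sign conditions (i)--(iii) of Lemma \ref{lm:3ss} have to be verified (or contradicted) by explicit manipulation of the linear substitutions \eqref{eq:h1x2}--\eqref{eq:h3x3} rather than by a pure degree count.
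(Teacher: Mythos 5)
Your proposal follows essentially the same route as the paper: force $G$ to be exactly $4$-reactant via Theorem \ref{thm:3-reactant}, impose the necessary conditions of Lemmas \ref{lm:nomss} and \ref{lm:nobss4rec} together with the degree-$3$ constraint $\sum_{k}\min\{\alpha_{k1},\alpha_{k2}\}=1$ forced by $cap_{pos}(G)\ge 3$ (Theorem \ref{thm:multistable}), enumerate the reactant matrices up to row/column permutation, and then solve for the admissible $\beta$'s from \eqref{eq:fact3} and \eqref{eq:nomss} over the nonnegative integers. Two small caveats: the sign analysis of Lemma \ref{lm:3ss} that you fold into the final cull is not part of this lemma --- the paper uses it afterwards, in the proof of Theorem \ref{thm:4-reactant}, to eliminate ten of the twelve rows of Table \ref{tab:net}, so the list asserted here is obtained \emph{before} that step --- and the degree of $\tilde g$ should be read off as $\max_j\sum_k\tilde\alpha_{kj}$ (maximum over the two reactions of the reduced reactant sums, all linear substitutions having nonzero slope by \eqref{eq:maxtildea1}), not as $\max_k(\tilde\alpha_{k1}+\tilde\alpha_{k2})$.
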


\begin{proof}
If $G$ admits multistability, then by Theorem \ref{thm:3-reactant}, $G$ must be $4$-reactant, and so, the degree of %$h_1$
$$h_1\;=\;\left(\beta_{11}-\alpha_{11}\right)
\left(\kappa_1\Pi_{k=1}^3 x_k^{ \alpha_{k1}}-\lambda\kappa_2\Pi_{k=1}^3 x_k^{\alpha_{k2}}\right)\; (\text{recall} \;\eqref{eq:h0})$$
 with respect to $x$ is exactly $4$, i.e.,
\begin{align} \label{eq:4-reactant1}
\max \{\sum_{k=1}^3 \alpha_{k1}, \sum_{k=1}^3 \alpha_{k2}\} \;=\; 4.
\end{align}
By
Lemma \ref{lm:nocom}, the two monomials in $h_1$ have common variables. Recall that $\tilde h_1$  \eqref{eq:tildeh} is the polynomial such that
$h_1=\Pi_{k=1}^3x_k^{\gamma_k}{\tilde h}_1$, where $\gamma_k=\min \{\alpha_{k1}, \alpha_{k2}\}$.
So,
the degree of $\tilde h_1$ \eqref{eq:tildeh} with respect to $x$ is at most $3$. On the other hand,
Theorem \ref{thm:multistable} implies that if $G$ admits multistability, then
$cap_{pos}(G)\geq 3$. Note that all positive steady states of $G$ are common solutions to the equations $\tilde h_1(x)=h_2(x)=\ldots=h_s(x)=0$. So, the degree of $\tilde h_1$ with respect to $x$ is at least $3$.
Overall, the degree of $\tilde h_1$ with respect to $x$ is exactly $3$.
So, by the definition of $\tilde h_1$,  we have
\begin{align}
\sum_{k=1}^3 \min \{\alpha_{k1}, \alpha_{k2}\} \;=\; 1. \label{eq:4-reactant2}
\end{align}
%Lemma \ref{lm:nobss4rec} implies that
%if $G$ admits multistability, then for any
%$k\in \{1, 2, 3\}$, $\alpha_{k1}-\alpha_{k2}\neq 0$. So
%we have
%\begin{align}\label{eq:maxtildea}
%\max \{\tilde \alpha_{k1}, \tilde \alpha_{k2}\} \;>\; 0.
%\end{align}
Therefore, by Lemma \ref{lm:nomss} and Lemma \ref{lm:nobss4rec}, we know that the matrix of reactant coefficients $\alpha := (\alpha_{kj})_{3\times 2}$ and the matrix of product coefficients $\beta := (\beta_{kj})_{3\times 2}$ associated with $G$ belong to the set
{\footnotesize
\begin{align}\label{eq:mathcalC}
{\mathcal B} \;:=\; \{(\alpha, \beta)\in {\mathbb Z}^{3\times 2}_{\geq 0}\times  {\mathbb Z}^{3\times 2}_{\geq 0}\;\text{s.t.  \eqref{eq:nomss}, \eqref{eq:maxtildea}, \eqref{eq:fact3}, \eqref{eq:4-reactant1} and \eqref{eq:4-reactant2} hold}\}.
\end{align}
}
Here, we recall that the condition \eqref{eq:nomss} stated in Lemma \ref{lm:nomss} is
\begin{align*}
\exists i, j \in \{1, \ldots, 3\} \; \text{s.t.}\; (\beta_{i1}-\alpha_{i1})(\alpha_{i1}-\alpha_{i2})(\beta_{j1}-\alpha_{j1})(\alpha_{j1}-\alpha_{j2})<0,
\end{align*}
and
the conditions \eqref{eq:maxtildea}--\eqref{eq:fact3} stated  in Lemma \ref{lm:nobss4rec} are
$$\alpha_{k1}-\alpha_{k2} \neq 0 \; \text{for any} \; k\in \{1, 2, 3\},$$
and
$$\frac{\beta_{12}-\alpha_{12}}{\beta_{11}-\alpha_{11}}\;
=\;\frac{\beta_{22}-\alpha_{22}}{\beta_{21}-\alpha_{21}}
\;=\;\frac{\beta_{32}-\alpha_{32}}{\beta_{31}-\alpha_{31}}  \;<\;0.$$
Define a map $\pi: {\mathbb Z}^{3\times 2}_{\geq 0}\times  {\mathbb Z}^{3\times 2}_{\geq 0}\rightarrow {\mathbb Z}^{3\times 2}_{\geq 0}$  such that
for any $(\alpha, \beta)\in {\mathbb Z}^{3\times 2}_{\geq 0}\times  {\mathbb Z}^{3\times 2}_{\geq 0}$, $\pi(\alpha, \beta) = \alpha$.
Let
{\footnotesize
\begin{align}\label{eq:Ca}
{\mathcal B}_{\alpha} \;:=\; \{\alpha\in  {\mathbb Z}^{3\times 2}_{\geq 0}\;\text{s.t.  \eqref{eq:maxtildea}, \eqref{eq:4-reactant1} and \eqref{eq:4-reactant2} hold}\}.
\end{align}
}
Notice that ${\mathcal B}_{\alpha}$ is a finite set.
For each $\alpha\in {\mathcal B}_{\alpha}$, define its equivalence class in ${\mathcal B}_{\alpha}$ as $[\alpha] := \{\hat \alpha\in {\mathcal B}_{\alpha}| \hat \alpha  \; \text{and}\; \alpha \;\text{are equivalent matrices}\}$.
We explicitly compute the set ${\mathcal B}_{\alpha}$ by {\tt Maple2020} \cite{maple}, and
%we find that it consists of \textcolor{red}{$72$} elements.
it is straightforward to check by a computer program that there are $12$ equivalence classes in ${\mathcal B}_{\alpha}$ (see
the supporting file \#7 in Table \ref {tab:sup}).
We pick a representative from each equivalence class, and we present them in Table \ref{tab:a}.

Note
\begin{align}
{\mathcal B}  \;=\;  \pi^{-1}({\mathcal B}_{\alpha})\cap {\mathcal B}
                    \;=\;  \pi^{-1}(\cup_{\alpha \in {\mathcal B}_{\alpha}} [\alpha])\cap {\mathcal B}
                     \;=\;  \cup_{\alpha \in {\mathcal B}_{\alpha}}\cup_{\hat \alpha\in [\alpha]}
                     \left(\pi^{-1}(\hat \alpha)\cap {\mathcal B} \right). \label{eq:c}
\end{align}
By Definition \ref{def:equivalent0}, if $\hat \alpha \in [\alpha]$, then
there exist two permutation matrices $P$ and $Q$ such that
$\hat \alpha=P\alpha Q$.  Thus, there exists a bijection
%\begin{align}\label{eq:bijection}
$\phi : \pi^{-1}(\alpha)\cap {\mathcal B} \rightarrow \pi^{-1}(\hat \alpha) \cap {\mathcal B}$
%\end{align}
such that
for any $(\alpha, \beta)\in \pi^{-1}(\alpha)\cap {\mathcal B}$, $\phi(\alpha, \beta) := (\hat \alpha, P\beta Q)$.
By Definition
\ref{def:form}, the two networks associated with $(\alpha, \beta)$ and $\phi(\alpha, \beta)$ have the same form. Thus,
by \eqref{eq:c}, the multistable network $G$ has the form of a network associated with an element in  $\pi^{-1} (\alpha)\cap {\mathcal B}$ for a representative $\alpha$ in ${\mathcal B}_{\alpha}$.
%So, by Lemma \ref{lm:equivalence}, we only need to compute the set $\pi^{-1} (\alpha)\cap {\mathcal B}$ for a representative $\alpha$ in each equivalence class.
In the rest of the proof, we explain how to compute $\pi^{-1} (\alpha)\cap {\mathcal B}$ for each representative recorded in Table \ref{tab:a}.

For the values of $\alpha_{kj}$ recorded in Table \ref{tab:a}-Row (1), %\eqref{},
the condition \eqref{eq:fact3} implies
\begin{align}
(\beta_{12}-1)/(\beta_{11}-2)<0,\label{eq:sfact21} \\
\beta_{22}/(\beta_{21}-1)<0, \label{eq:sfact22}\\
\beta_{32}/(\beta_{31}-1)<0. \label{eq:sfact23}
\end{align}
Note that $\beta_{kj}\in {\mathbb Z}_{\geq 0}$. So by \eqref{eq:sfact22} and \eqref{eq:sfact23}, we have $\beta_{21}=\beta_{31}=0$.
Also, note that the sequence \eqref{eq:mss} is now
 \begin{align}\label{eq:sfact1}
\beta_{11}-2, \beta_{21}-1, \;\text{and}\; \beta_{31}-1.
\end{align}
Since both $\beta_{21}-1$ and $\beta_{31}-1$ are negative, %by \eqref{eq:sfact22} and \eqref{eq:sfact23}, and
by \eqref{eq:nomss},  we have $\beta_{11}-2>0$. So, by \eqref{eq:sfact21}, we have $\beta_{12}=0$.
We substitute $\beta_{21}=\beta_{31}=\beta_{12}=0$ and the values of $\alpha_{kj}$ recorded in Table \ref{tab:a}-Row (1)
%\ref{}
into \eqref{eq:fact3}, and we get
$$(\beta_{11}-2)\beta_{22}=1, \;\text{and}\; (\beta_{11}-2)\beta_{32}=1.$$
We solve $\beta_{kj}$ from these two equations over ${\mathbb Z}_{\geq 0}$, and we get
$\beta_{11}=3$, $\beta_{22}=1$ and $\beta_{32}=1$.
Above all, we conclude that
\begin{align*}
\pi^{-1}(\alpha)\cap {\mathcal B}=
\{\left(\left(
\begin{array}{cc}
2&1\\
1&0\\
1&0
\end{array}
\right),
\left(
\begin{array}{cc}
3&0\\
0&1\\
0&1
\end{array}
\right)\right)
\}.
\end{align*}
Similarly, from  each $\alpha$ recorded in  each row of Table \ref{tab:a}, we can solve the corresponding  $\pi^{-1}(\alpha)\cap {\mathcal B}$, and we
record the corresponding network in Table \ref{tab:net}.
\end{proof}

\newcounter{rowc}
\newcommand{\newrowc}{\refstepcounter{rowc}\arabic{rowc}}

\begin{table}[]
  \renewcommand\arraystretch{1.25}
  \tiny
  \centering
  \begin{tabular}{ccccccc}
  \hline
  \cellcolor[HTML]{FFCE93}& \cellcolor[HTML]{FFCE93}$\alpha_{11}$ &\cellcolor[HTML]{FFCE93} $\alpha_{21}$ &\cellcolor[HTML]{FFCE93} $\alpha_{31}$ & \cellcolor[HTML]{FFCE93}$\alpha_{12}$ &\cellcolor[HTML]{FFCE93} $\alpha_{22}$ &\cellcolor[HTML]{FFCE93} $\alpha_{32}$ \\ \hline
  (\newrowc\label{row:c1})   & $2$ & $1$ & $1$ & $1$ & $0$ & $0$  \\ \hline
  (\newrowc\label{row:c2.1}) & $2$ & $2$ & $0$ & $1$ & $0$ & $1$  \\ \hline
  (\newrowc\label{row:c2.2}) & $1$ & $3$ & $0$ & $0$ & $1$ & $1$  \\ \hline
  (\newrowc\label{row:c2.3}) & $1$ & $2$ & $1$ & $0$ & $0$ & $2$  \\ \hline
  (\newrowc\label{row:c3.1}) & $2$ & $2$ & $0$ & $1$ & $0$ & $2$  \\ \hline
  (\newrowc\label{row:c3.2}) & $1$ & $3$ & $0$ & $0$ & $1$ & $2$  \\ \hline
  (\newrowc\label{row:c3.3}) & $1$ & $2$ & $1$ & $0$ & $0$ & $3$  \\ \hline
  (\newrowc\label{row:c4.1}) & $2$ & $2$ & $0$ & $1$ & $0$ & $3$  \\ \hline
  (\newrowc\label{row:c4.2}) & $1$ & $3$ & $0$ & $0$ & $1$ & $3$  \\ \hline
  (\newrowc\label{row:c4.3}) & $1$ & $2$ & $1$ & $0$ & $0$ & $4$  \\ \hline
  (\newrowc\label{row:c5.1}) & $4$ & $0$ & $0$ & $1$ & $1$ & $1$  \\ \hline
  (\newrowc\label{row:c5.2}) & $3$ & $1$ & $0$ & $0$ & $2$ & $1$  \\ \hline
  \end{tabular}
  %\caption{Possible reactant coefficients of a multistable network in ${\mathcal G}_0$ with 4 reactants and 3 species}
  \caption{Representatives of equivalence classes in ${\mathcal B}_{\alpha}$ \eqref{eq:Ca}}
  \label{tab:a}
  \end{table}

{\bf Proof of Theorem \ref{thm:4-reactant}.}
%\begin{proof}
``$\Leftarrow$": %\textcolor{red}{Xu Hao}
%Without loss of generality, we assume for both networks \eqref{eq:net1} and \eqref{eq:net2},  the first reaction is labeled a rate constant $\kappa_1$, and the second one is labeled $\kappa_2$.
For the network \eqref{eq:net1}, it is straightforward to check that
the equality \eqref{eq:scalar} holds for $\lambda=1$.
 Let $\kappa_1=9$, $\kappa_2=50$, $c_1=6$, and $c_2=\frac{59}{10}$.
By solving the equations $h_1(x)=h_2(x)=h_3(x)=0$ (see \eqref{eq:h} and \eqref{eq:h0}), we see that the network has three nondegenerate positive steady states:
{\scriptsize
\[x^{(1)}=(\frac{7}{2}-\frac{\sqrt{205}}{6},
    \frac{5}{2}+\frac{\sqrt{205}}{6},
    \frac{12}{5}+\frac{\sqrt{205}}{6}), x^{(2)}=(5,1,\frac{9}{10}),
x^{(3)}=(\frac{7}{2}+\frac{\sqrt{205}}{6},
    \frac{5}{2}-\frac{\sqrt{205}}{6},
    \frac{12}{5}-\frac{\sqrt{205}}{6}). \]
    }
It is straightforward to check by Lemma \ref{lm:stable} that
$x^{(1)}$ and $x^{(3)}$ are stable.

For the network \eqref{eq:net2}, if $\beta_{21}=0$, then for any $\beta_{12}\in {\mathbb Z}_{>0}$, $\beta_{22}=2\beta_{12}$ and
$\beta_{32}=\beta_{12}+2$.
It is straightforward to check that
the equality \eqref{eq:scalar} holds for $\lambda=\beta_{12}>0$.
  Let $\kappa_1=1$, $\kappa_2=\frac{48}{\beta_{12}}$, $c_1=\frac{13}{2}$, and $c_2=\frac{1}{4}$.
Then we have
\begin{align*}
h_1&\;=\;\left(\beta_{11}-\alpha_{11}\right)
\left(\kappa_1\Pi_{k=1}^s x_k^{\alpha_{k1}}-\lambda\kappa_2\Pi_{k=1}^s x_k^{\alpha_{k2}}\right)\;=\;-\left(x_1x^2_2x_3-48x_3^3\right), \\
h_2& \;=\; (\beta_{21}-\alpha_{21})x_1 - (\beta_{11}-\alpha_{11})x_2 - c_{1}\;=\;-2x_1 +x_2 -\frac{13}{2}, \;\;\;\text{and}\\
h_3 &\;=\; (\beta_{31}-\alpha_{31})x_1 - (\beta_{11}-\alpha_{11})x_3 - c_{2}\;=\;-x_1 +x_3 - \frac{1}{4}.
\end{align*}
By solving the equations $h_1(x)=h_2(x)=h_3(x)=0$, the network has three nondegenerate positive steady states:
{\scriptsize
\[x^{(1)}=(\frac{19}{8}-\frac{3\sqrt{33}}{8},
    \frac{45}{4}-\frac{3\sqrt{33}}{4},
    \frac{21}{8}-\frac{3\sqrt{33}}{8}),
x^{(2)}=(\frac{3}{4},8,1),
x^{(3)}=(\frac{19}{8}+\frac{3\sqrt{33}}{8},
    \frac{45}{4}+\frac{3\sqrt{33}}{4},
    \frac{21}{8}+\frac{3\sqrt{33}}{8}). \]
    }
It is straightforward to check by Lemma \ref{lm:stable} that
$x^{(1)}$ and $x^{(3)}$ are stable. Similarly, if $\beta_{21}=1$, then for any $\beta_{12}\in {\mathbb Z}_{>0}$, $\beta_{22}=\beta_{12}$ and
$\beta_{32}=\beta_{12}+2$.
  Let $\kappa_1=1$, $\kappa_2=
  \frac{12}{\beta_{12}}$, $c_1=\frac{13}{4}$, and $c_2=\frac{1}{4}$.
Then the network has three nondegenerate positive steady states:
{\scriptsize
\[x^{(1)}=(\frac{19}{8}-\frac{3\sqrt{33}}{8},
    \frac{45}{8}-\frac{3\sqrt{33}}{8},
    \frac{21}{8}-\frac{3\sqrt{33}}{8}),
x^{(2)}=(\frac{3}{4},4,1),
x^{(3)}=(\frac{19}{8}+\frac{3\sqrt{33}}{8},
    \frac{45}{8}+\frac{3\sqrt{33}}{8},
    \frac{21}{8}+\frac{3\sqrt{33}}{8}). \]
}
It is straightforward to check by Lemma \ref{lm:stable} that
$x^{(1)}$ and $x^{(3)}$ are stable.
Here, we compute these steady states by {\tt Maple2020} \cite{maple}, see the supporting file \#6 in Table \ref{tab:sup}.

%\textcolor{red}{edit from here}
``$\Rightarrow$":
By Theorem \ref{thm:nondegmultistable} and \cite[Theorem 3.6 2(b), Theorem 4.8]{Joshi:Shiu:Multistationary},
if $G\in {\mathcal G}_0$ and $G$ has up to $2$ species, then $G$ admits no multistability.
The networks \eqref{eq:net1} and \eqref{eq:net2} are listed in Row (\ref{row:2r2.2}) and Row (\ref{row:2r3.3}) of Table \ref{tab:net}, respectively.
%From Table \ref{tab:net}, it is directly seen that the networks \eqref{eq:net1} and \eqref{eq:net2}  are respectively listed in Row (\ref{row:2r2.2}) and Row (\ref{row:2r3.3}).
By Lemma \ref{lm:4-reactant}, we only need to show none of the other networks listed in Table \ref{tab:net}  admits multistability.

For the network in Table \ref{tab:net}-Row (\ref{row:2r1}), the polynomial $\tilde g(x_1)$ defined in \eqref{eq:tildeg} is
$$\tilde g(x_1)\; =\; \kappa_1x_1x_2x_3 - \lambda \kappa_2\;|_{x_2=-x_1-c_1, x_3=-x_1-c_2},$$
where $\lambda:=-\frac{\beta_{12}-\alpha_{12}}{\beta_{11}-\alpha_{11}}>0$, and the interval $I$ defined in \eqref{eq:interval} is $(0, \min\{-c_1, -c_2\})$. Note that $\tilde g(0) = -\lambda \kappa_2<0$ for any $\kappa_2\in {\mathbb R}_{>0}$.  So, by Lemma \ref{lm:3ss},
 this network in Row (\ref{row:2r1}) does not admit multistability.

For the network in Table \ref{tab:net}-Row (\ref{row:2r2.1}), the polynomial $\tilde g(x_1)$ is
$$\tilde g(x_1)\; =\; \kappa_1x_1x_2^2 - \lambda \kappa_2x_3\;|_{x_2=\frac{\beta_{21}-2}{\beta_{11}-2}x_1-\frac{c_1}{\beta_{11}-2}, x_3=\frac{\beta_{31}}{\beta_{11}-2}x_1-\frac{c_2}{\beta_{11}-2}},$$
where $\lambda:=-\frac{\beta_{12}-\alpha_{12}}{\beta_{11}-\alpha_{11}}>0$,
and the interval $I$ is $(\max\{0, \frac{c_2}{\beta_{31}}\}, \frac{c_1}{\beta_{21}-2})$.
From the second column of Row (\ref{row:2r2.1}), we see that
$\beta_{11}-2>0$,  $\beta_{21}-2=-\beta_{22}(\beta_{11}-2)<0$, and $\beta_{31}=\beta_{11}-2>0$. If $\frac{c_2}{\beta_{31}}<0$, then $\tilde g(0)=\lambda\kappa_2\frac{c_2}{\beta_{11}-2}<0$, and so,
by Lemma \ref{lm:3ss} (i),  the network in Row (\ref{row:2r2.1}) does not admit multistability.
If $\frac{c_2}{\beta_{31}}
\geq 0$, then by Lemma \ref{lm:jactildeg}, for any positive steady state $x^*$ of $G$, $\tilde g'(x^*_1)$ has the same sign with
\begin{align}\label{eq:4-reactant3}
&\sum_{i=1}^3 (\beta_{i1}-\alpha_{i1})(\alpha_{i1}-\alpha_{i2})\Pi_{k\neq i}{x_k^*} \notag\\
=& \; (\beta_{11}-2)x_2^*x_3^*+2(\beta_{21}-2)x_1^*x_3^*-\beta_{31}x_1^*x_2^* \notag \\
=& \;\left((\beta_{11}-2)x_3^*-\beta_{31}x_1^*\right)x_2^* + 2(\beta_{21}-2)x_1^*x_3^* \notag\\
=& \;-c_2x_2^* + 2(\beta_{21}-2)x_1^*x_3^*,  \notag
\end{align}
which is negative (Note $\beta_{21}-2<0$).
So by Lemma \ref{lm:3ss} (iii),  the network in Row (\ref{row:2r2.1}) does not admit multistability. Similarly, we can prove the networks in Rows (\ref{row:2r3.1}), (\ref{row:2r4.1}), and (\ref{row:2r5.2}) do not admit multistability.

For the network in Table \ref{tab:net}-Row (\ref{row:2r2.3}), the polynomial $\tilde g(x_1)$ is
$$\tilde g(x_1)\; =\; -\left(\kappa_1x_1x_2^2 - \lambda \kappa_2x_3\right)\;|_{x_2=-(\beta_{21}-2)x_1+c_1, x_3=x_1+c_2},$$
where $\lambda:=-\frac{\beta_{12}-\alpha_{12}}{\beta_{11}-\alpha_{11}}>0$,
and the interval $I$ is $(\max\{0, \frac{c_1}{\beta_{21}-2},-c_2\}, +\infty)$.
From the second column of Row (\ref{row:2r2.3}), we see that $\beta_{21}-2<0$.
If $-c_2>0$ and $-c_2>\frac{c_1}{\beta_{21}-2}$, then by the fact that $\tilde g(-c_2)=\kappa_1c_2\left(\left(\beta_{21}-2\right)c_2+c_1\right)^2\leq 0$ and
by Lemma \ref{lm:3ss} (i),  the network in Row (\ref{row:2r2.3}) does not admit multistability.
If $-c_2\leq 0$, then by Lemma \ref{lm:jactildeg}, for any positive steady state $x^*$ of $G$, $\tilde g'(x^*_1)$ has the same sign with
\begin{align}
&\sum_{i=1}^3 (\beta_{i1}-\alpha_{i1})(\alpha_{i1}-\alpha_{i2})\Pi_{k\neq i}{x_k^*} \notag\\
=& \; -x_2^*x_3^*+2(\beta_{21}-2)x_1^*x_3^*+x_1^*x_2^* \;\;\;\;\;\;\;\;\;\;\;\;\;\;\;\;\;\;\;\;\;\;\;\;\;\;\;\;\;\;\notag \\
=& \;\left(x_1^*-x_3^*\right)x_2^* + 2(\beta_{21}-2)x_1^*x_3^* \notag\\
=& \;-c_2x_2^* + 2(\beta_{21}-2)x_1^*x_3^*\; <\; 0 \notag
\end{align}
Similarly, if $-c_2\leq \frac{c_1}{\beta_{21}-2}$ (i.e., $c_1+c_2(\beta_{21}-2)\leq 0$), then we also have
\begin{align}
&\sum_{i=1}^3 (\beta_{i1}-\alpha_{i1})(\alpha_{i1}-\alpha_{i2})\Pi_{k\neq i}{x_k^*}\notag\\
=&-x_2^*x_3^*+2(\beta_{21}-2)x_1^*x_3^*+x_1^*x_2^* \notag \\
=& \;-x_2^*x_3^*+(\beta_{21}-2)x_1^*x_3^*+x_1^*\left((\beta_{21}-2)x_3^*+x_2^*\right) \notag\\
=& \;-x_2^*x_3^*+(\beta_{21}-2)x_1^*x_3^*+x_1^*\left(c_1+c_2(\beta_{21}-2)\right) <0 \notag
%&<\;0 \notag
\end{align}
(note the last equality $(\beta_{21}-2)x_3^*+x_2^*=c_1+c_2(\beta_{21}-2)$ above is deduced by eliminating $x^*_1$ from the two conservation law equations $(\beta_{21}-2)x^*_1+x^*_2-c_1=0$ and $-x^*_1+x^*_3-c_2=0$).
So by Lemma \ref{lm:3ss} (iii),  the network in Row (\ref{row:2r2.3}) does not admit multistability. Similarly, we can prove
the network in Row (\ref{row:2r4.3}) does not admit multistability.

For the network in Table \ref{tab:net}-Row (\ref{row:2r3.2}), the polynomial $\tilde g(x_1)$ is
$$\tilde g(x_1)\; =\; -(\kappa_1x_1x_2^2 - \lambda \kappa_2x_3^2)\;|_{x_2=-x_1+c_1, x_3=-\beta_{31}x_1+c_2},$$
where $\lambda:=-\frac{\beta_{12}-\alpha_{12}}{\beta_{11}-\alpha_{11}}>0$,
and the interval $I$ is $(0, \min \{c_1, \frac{c_2}{\beta_{31}}\})$
(from the second column of Row (\ref{row:2r2.3}), we see that $\beta_{31}>0$).
If $c_1<\frac{c_2}{\beta_{31}}$, then by the fact that $\tilde g(c_1)=\lambda\kappa_2(-\beta_{31}c_1+c_2)^2\geq 0$ and
by Lemma \ref{lm:3ss} (ii),  the network in Row (\ref{row:2r2.1}) does not admit multistability.
If $\frac{c_2}{\beta_{31}}\leq c_1$ (i.e., $-\beta_{31}c_1+c_2\leq 0$), then by Lemma \ref{lm:jactildeg}, for any positive steady state $x^*$ of $G$, $\tilde g'(x^*_1)$ has the same sign with
\begin{align}
&\sum_{i=1}^3 (\beta_{i1}-\alpha_{i1})(\alpha_{i1}-\alpha_{i2})\Pi_{k\neq i}{x_k^*}\notag \\
=& \; -x_2^*x_3^*+2x_1^*x_3^*-2\beta_{31}x_1^*x_2^* \;\;\; \;\;\;\;\;\;\;\;\;\;\;\;\;\;\;\;\;\;\;\;\;\;\;\;\;\;\;\;\;\;\notag \\
=& \;-x_2^*x_3^*+2x_1^*\left(x_3^*-\beta_{31}x_2^*\right) \notag\\
=& \;-x_2^*x_3^*+2x_1^*\left(-\beta_{31}c_1+c_2\right)\; <\; 0 \notag
\end{align}
(note the last equality $x_3^*-\beta_{31}x_2^*=-\beta_{31}c_1+c_2$ above is deduced by eliminating $x^*_1$ from the two conservation law equations $x^*_1+x^*_2-c_1=0$ and $\beta_{31}x^*_1+x^*_3-c_2=0$).
So by Lemma \ref{lm:3ss} (iii),  the network in Row (\ref{row:2r3.2}) does not admit multistability. Similarly, the network in Row (\ref{row:2r4.2}) does not admit multistability.

For the network in Table \ref{tab:net}-Row (\ref{row:2r5.1}), the polynomial $\tilde g(x_1)$ is
$$\tilde g(x_1)\; =\; (\beta_{11}-4)\left(\kappa_1x_1^3 - \lambda \kappa_2x_2x_3\right)\;|_{x_2=\frac{\beta_{21}}{\beta_{11}-4}x_1-\frac{c_1}{\beta_{11}-4}, x_3=\frac{\beta_{31}}{\beta_{11}-4}x_1-\frac{c_2}{\beta_{11}-4}},$$
where $\lambda:=-\frac{\beta_{12}-\alpha_{12}}{\beta_{11}-\alpha_{11}}>0$,
and the interval $I$ is $(\max\{0, \frac{c_1}{\beta_{21}},\frac{c_2}{\beta_{31}}\}, +\infty)$. Note that $\tilde g(+\infty)$ has a positive sign for any $\kappa_2\in {\mathbb R}_{>0}$.  So, by Lemma \ref{lm:3ss} (ii),
the network in Row (\ref{row:2r5.1}) does not admit multistability. $\Box$
%\end{proof}
\newcounter{rowa}
\newcommand{\newrowa}{\refstepcounter{rowa}\arabic{rowa}}
\begin{table}[] % edit table
  \renewcommand\arraystretch{1.75}
  \tiny
  \centering
  \caption{All candidates for multistable networks in ${\mathcal G}_0$ with $4$ reactants and $3$ species}
  (only networks in Row (\ref{row:2r2.2}) and Row (\ref{row:2r3.3}) are multistable)
  \label{tab:net}
  \begin{tabular}{cc|r}
  \hline
\cellcolor[HTML]{FFCE93}  &\cellcolor[HTML]{FFCE93}  Network
  &\cellcolor[HTML]{FFCE93} $ (\beta_{11}, \beta_{21}, \beta_{31}, \beta_{12}, \beta_{22}, \beta_{32})\in {\mathbb Z}^6_{\geq 0} $
	\\ \hline

  \begin{tabular}[c]{@{}c@{}}
  	(\newrowa\label{row:2r1}) \\ \quad
  \end{tabular} &
  \begin{tabular}[c]{@{}c@{}}
  	$ 2X_1 + X_2 + X_3  \rightarrow 3X_1 $   \\
    $ X_1   \rightarrow X_2 + X_3 $
  \end{tabular} &
  \begin{tabular}[r]{@{}r@{}}
  \end{tabular} \\  \hline

  \begin{tabular}[c]{@{}c@{}}
  	(\newrowa\label{row:2r2.1}) \\ \quad
  \end{tabular} &
  \begin{tabular}[c]{@{}c@{}}
	$ 2X_1 + 2X_2 \rightarrow \beta_{11}X_1 + \beta_{21}X_2 + \beta_{31}X_3 $   \\
	$  X_1 +  X_3 \rightarrow \beta_{22}X_2 $
  \end{tabular} &
  \begin{tabular}[r]{@{}r@{}}
	$ (\beta_{11}, -\beta_{22}(\beta_{11}-2)+2, \beta_{11}-2, 0, \beta_{22}, 0) $ \\
	$ \beta_{11} \in\{3,4\}; \;\;\;\;0<\beta_{22}{\le}\frac{2}{\beta_{11}-2} $
  \end{tabular} \\ \hline

  \begin{tabular}[c]{@{}c@{}}
  \cellcolor[HTML]{6CDEFF}(\newrowa\label{row:2r2.2}) \\ \quad
  \end{tabular} &
  \begin{tabular}[c]{@{}c@{}}
	$  X_1 + 3X_2 \rightarrow 4X_2 + X_3$  \\
	$  X_2 +  X_3 \rightarrow  X_1 $
  \end{tabular}  &
  \begin{tabular}[r]{@{}r@{}}
  \end{tabular}     \\ \hline

  \begin{tabular}[c]{@{}c@{}}
  	(\newrowa\label{row:2r2.3}) \\ \quad
  \end{tabular} &
  \begin{tabular}[c]{@{}c@{}}
	$  X_1 + 2X_2 +  X_3 \rightarrow \beta_{21}X_2 $    \\
	$ 2X_3 \rightarrow \beta_{12}X_1 + \beta_{22}X_2 + \beta_{32}X_3 $
  \end{tabular} &
  \begin{tabular}[r]{@{}r@{}}
	$ (0, \beta_{21}, 0, \beta_{12}, \beta_{12}(2-\beta_{21}), \beta_{12}+2 ) $ \\
	$ \beta_{21} \in \{ 0,1\}; \;\;\;\;\beta_{12}>0 $
  \end{tabular} \\ \hline

  \begin{tabular}[c]{@{}c@{}}
  	(\newrowa\label{row:2r3.1}) \\ \quad
  \end{tabular} &
  \begin{tabular}[c]{@{}c@{}}
	$ 2X_1 + 2X_2 \rightarrow \beta_{11}X_1 + \beta_{21}X_2 + \beta_{31}X_3 $   \\
	$  X_1 + 2X_3 \rightarrow \beta_{22}X_2 + \beta_{32}X_3 $
  \end{tabular} &
  \begin{tabular}[r]{@{}r@{}}
	$ (\beta_{11}, -\beta_{22}(\beta_{11}-2)+2, (2-\beta_{32})(\beta_{11}-2), 0, \beta_{22}, \beta_{32} )$ \\
	$ \beta_{11} \in \{ 3,4\};\;\;\;\;0<\beta_{22}{\le}\frac{2}{\beta_{11}-2};\;\;\;\; \beta_{32}\in \{0, 1\} $
  \end{tabular} \\ \hline

  \begin{tabular}[c]{@{}c@{}}
  	(\newrowa\label{row:2r3.2}) \\ \quad
  \end{tabular} &
  \begin{tabular}[c]{@{}c@{}}
	$  X_1 + 3X_2 \rightarrow 4X_2 + \beta_{31}X_3 $   \\
	$  X_2 + 2X_3 \rightarrow  X_1 + \beta_{32}X_3 $
  \end{tabular} &
  \begin{tabular}[r]{@{}r@{}}
	$ (0, 4, \beta_{31}, 1, 0, 2-\beta_{31} )$ \\
	$ \beta_{31} \in \{1,2\} $
  \end{tabular} \\ \hline

  \begin{tabular}[c]{@{}c@{}}
  \cellcolor[HTML]{6CDEFF}(\newrowa\label{row:2r3.3}) \\ \quad
  \end{tabular} &
  \begin{tabular}[c]{@{}c@{}}
	$  X_1 + 2X_2 + X_3 \rightarrow \beta_{21}X_2 $ \\
    $ 3X_3  \rightarrow \beta_{12}X_1 + \beta_{22}X_2 + \beta_{32}X_3 $
  \end{tabular} &
  \begin{tabular}[r]{@{}r@{}}
	$(0, \beta_{21}, 0, \beta_{12}, \beta_{12}(2-\beta_{21}), \beta_{12}+3 )$ \\
	$ \beta_{21} \in  \{0,1\};\;\;\;\; \beta_{12}>0 $
  \end{tabular} \\ \hline

  \begin{tabular}[c]{@{}c@{}}
  	(\newrowa\label{row:2r4.1}) \\ \quad
  \end{tabular} &
  \begin{tabular}[c]{@{}c@{}}
	$ 2X_1 + 2X_2 \rightarrow \beta_{11}X_1 + \beta_{21}X_2 + \beta_{31}X_3 $ \\
	$  X_1 + 3X_3 \rightarrow \beta_{22}X_2 + \beta_{32}X_3 $
  \end{tabular} &
  \begin{tabular}[r]{@{}r@{}}
	$ (\beta_{11}, -\beta_{22}(\beta_{11}-2)+2, (3-\beta_{32})(\beta_{11}-2), 0, \beta_{22}, \beta_{32} )$ \\
	$ \beta_{11} \in \{3,4\};\;\;\;\;0<\beta_{22}{\le}\frac{2}{\beta_{11}-2};\;\;\;\; \beta_{32}\in \{0, 1, 2\} $
  \end{tabular} \\ \hline

  \begin{tabular}[c]{@{}c@{}}
  	(\newrowa\label{row:2r4.2}) \\ \quad
  \end{tabular} &
  \begin{tabular}[c]{@{}c@{}}
	$  X_1 + 3X_2 \rightarrow 4X_2 + \beta_{31}X_3 $   \\
	$  X_2 + 3X_3 \rightarrow  X_1 + \beta_{32}X_3 $
  \end{tabular} &
  \begin{tabular}[r]{@{}r@{}}
	$ (0, 4, \beta_{31}, 1, 0, 3-\beta_{31} )$ \\
	$ \beta_{31} \in \{1,2,3\} $
  \end{tabular} \\ \hline

  \begin{tabular}[c]{@{}c@{}}
  	(\newrowa\label{row:2r4.3}) \\ \quad
  \end{tabular} &
  \begin{tabular}[c]{@{}c@{}}
	$  X_1 + 2X_2 +  X_3  \rightarrow \beta_{21}X_2 $   \\
	$ 4X_3 \rightarrow \beta_{12}X_1 + \beta_{22}X_2 + \beta_{32}X_3 $
  \end{tabular} &
  \begin{tabular}[r]{@{}r@{}}
	$(0, \beta_{21}, 0, \beta_{12}, \beta_{12}(2-\beta_{21}), \beta_{12}+4 )$ \\
	$\beta_{21} \in \{0,1\};\;\;\;\; \beta_{12}>0 $
  \end{tabular} \\ \hline

  \begin{tabular}[c]{@{}c@{}}
  	(\newrowa\label{row:2r5.1}) \\ \quad
  \end{tabular} &
  \begin{tabular}[c]{@{}c@{}}
	$ 4X_1 \rightarrow \beta_{11}X_1 + \beta_{21}X_2 + \beta_{31}X_3 $ \\
	$  X_1 + X_2 + X_3 \rightarrow 0 $
  \end{tabular} &
  \begin{tabular}[r]{@{}r@{}}
	$ (\beta_{21}+4, \beta_{21}, \beta_{21}, 0, 0, 0 )$ \\
	$\beta_{21}>0$
  \end{tabular} \\ \hline

  \begin{tabular}[c]{@{}c@{}}
  	(\newrowa\label{row:2r5.2}) \\ \quad
  \end{tabular} &
  \begin{tabular}[c]{@{}c@{}}
  	$ 3X_1 + X_2 \rightarrow \beta_{11}X_1 +  X_3 $  \\
  	$ 2X_2 + X_3 \rightarrow \beta_{12}X_1 + 3X_2 $
  \end{tabular} &
  \begin{tabular}[r]{@{}r@{}}
	$(\beta_{11}, 0, 1, 3-\beta_{11}, 3, 0 ) $ \\
	$ \beta_{11} \in \{0,1,2\} $
  \end{tabular} \\ \hline

  \end{tabular}

  %\label{tab:table_1}

\end{table}

\begin{remark}\label{rmk:p3101}
It is stated in \cite[Remark 5.4]{Joshi:Shiu:Multistationary} that the network
recorded in Table \ref{tab:net}--Row \eqref{row:c1} admits multistability. But here, in the proof of Theorem \ref{thm:4-reactant}, we proved it does not.
\end{remark}

\section{Networks in ${\mathcal G}_1$: proof of Theorem \ref{thm:g1}}\label{sec:ext}
%\subsection{Proof of Theorem \ref{thm:g1}}\label{sec:ext1}
In this section, the goal is to prove Theorem
\ref{thm:g1}. We first recall the well-known  Descartes' rule of signs (see Theorem \ref{thm:des}).
The idea of the proof is similar to the proof of Theorem \ref{thm:4-reactant}. In Lemma \ref{lm:g1}, we find
all candidates for the multistable networks with $2$ species and $3$ reactants in ${\mathcal G}_1$ by \cite[Theorem 3.5]{shiu-dewolff} (i.e., Lemma \ref{lm:JSthm58})  and a list of necessary conditions (see Lemma \ref{lm:nec1}).
Then, we discuss these candidates one by one, and the Descartes' rule of signs plays a key role in this discussion.
\begin{definition}
	The \defword{sign} of a real number $a\in\R$ is
	\begin{equation*}
		{\rm{sign}}(a) :=  \left\{
		\begin{array}{cl}
			+ \;\; & \text{if}\;\;  a > 0  \\
			0 \;\; & \text{if}\;\;  a = 0  \\
			- \;\; & \text{if}\;\;  a < 0
		\end{array}
		\right.
	\end{equation*}
	We define the \defword{sign} of a vector $x\in\R^{n}$ as: $${\rm {sign}}(x):=({\rm{sign}}(x_1),..., {\rm{sign}}(x_n))\in\{ +,0,- \}^n .$$ The \defword{number of sign changes} in such a vector of signs $v\in\{ +,0,- \}^n$ is obtained by first removing all $0$'s from $v$ and then counting the number of times in the resulting vector a coordinate switches from $+$ to $-$ or from $-$ to $+$.
	%For instance, the vector $(+,-,-,0,-,0,+)$ has two sign changes.
\end{definition}

\begin{theorem}(Descartes' rule of signs)\label{thm:des}\cite{Descartes}
	Given a nonzero univariate real polynomial $g(z)=a_0+a_{1}z+...+a_{n}z^{n}$, the number of positive real roots of $g$, counted with multiplicity, is bounded above by the number of sign changes in the ordered sequence of the coefficients ${\rm{sign}}(a_0),...,{\rm{sign}}(a_n)$, i.e., discard the $0$'s in this sequence and then count the number of times two consecutive signs differ.
\end{theorem}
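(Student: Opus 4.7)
I will prove the bound $Z_+(g) \le V(g)$, where $Z_+(g)$ counts positive real roots with multiplicity and $V(g)$ counts sign changes in $(\mathrm{sign}(a_0),\ldots,\mathrm{sign}(a_n))$ after discarding zeros, by induction on $n = \deg g$. The induction combines Rolle's theorem, which bounds $Z_+(g)$ in terms of $Z_+(g')$, with a direct comparison of the coefficient sign sequences of $g$ and $g'$. First I reduce to the case $a_0 \neq 0$: if $g(z) = z^{k}\tilde g(z)$ with $\tilde g(0) \neq 0$, then positive roots are unaffected ($Z_+(g) = Z_+(\tilde g)$) and leading zeros of the sign sequence are discarded anyway ($V(g) = V(\tilde g)$). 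The base case $n=0$ is immediate, since a nonzero constant has no positive roots and no sign changes.

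\textbf{Inductive step.} Let $p(z) := g'(z) = a_1 + 2 a_2 z + \cdots + n a_n z^{n-1}$. Because the positive integer factors $k$ do not affect signs, the sign sequence of $p$ is exactly $(\mathrm{sign}(a_1),\ldots,\mathrm{sign}(a_n))$, obtained from that of $g$ by dropping the leading entry $\mathrm{sign}(a_0)$. Let $j \ge 1$ be the smallest index with $a_j \neq 0$; then $V(g) = V(p)$ if $\mathrm{sign}(a_0) = \mathrm{sign}(a_j)$, and $V(g) = V(p)+1$ otherwise. For the root count, if $g$ has distinct positive roots $r_1 < \cdots < r_k$ of multiplicities $m_1,\ldots,m_k$, then $p$ vanishes to order $m_i - 1$ at each $r_i$, contributing $\sum(m_i-1) = Z_+(g) - k$ roots, and Rolle's theorem supplies at least one extra zero of $p$ in each of the $k-1$ intervals $(r_i, r_{i+1})$, giving $Z_+(p) \ge Z_+(g) - 1$. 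Combined with the induction hypothesis $Z_+(p) \le V(p)$, this yields $Z_+(g) \le V(p) + 1$, which immediately closes the case $\mathrm{sign}(a_0) \neq \mathrm{sign}(a_j)$, where $V(g) = V(p)+1$.

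\textbf{Main obstacle and its resolution.} The delicate case is $\mathrm{sign}(a_0) = \mathrm{sign}(a_j)$, where $V(g) = V(p)$ and the Rolle bound must be sharpened to $Z_+(p) \ge Z_+(g)$ by producing one additional zero of $p$. Here the key point is that $g(z)$ and $g'(z)$ have the same sign, namely that of $a_0$, for small $z > 0$: indeed, $g(z) - a_0$ and $g'(z)$ both have leading behavior governed by $a_j$. Hence, taking WLOG $a_0 > 0$, there is a small $\varepsilon > 0$ with $g(\varepsilon) > a_0$; applying the intermediate value theorem to $g(z) - a_0$ on $[\varepsilon, r_1]$ (where $r_1$ is the smallest positive root of $g$) produces some $c \in (\varepsilon, r_1)$ with $g(c) = a_0 = g(0)$, and Rolle's theorem on $[0,c]$ yields an additional zero of $g'$ in $(0, c) \subset (0, r_1)$, disjoint from the zeros already counted via multiplicities at the $r_i$ and via Rolle between consecutive $r_i$'s. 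This upgrades the bound to $Z_+(p) \ge Z_+(g)$, giving $Z_+(g) \le V(p) = V(g)$ and completing the induction. I expect this matched-sign step to be the only one requiring real care; the rest is routine bookkeeping.
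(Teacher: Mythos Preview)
Your argument is correct. Note, however, that the paper does not supply its own proof of this theorem: it is stated with a citation to \cite{Descartes} and used as a black box, so there is no ``paper's proof'' to compare against.

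Your approach is the classical inductive proof via Rolle's theorem, and the bookkeeping is handled cleanly. A couple of minor points worth making explicit in a final write-up: (i) in the matched-sign case you tacitly assume $Z_+(g)\ge 1$ so that $r_1$ exists; when $Z_+(g)=0$ the bound $Z_+(g)\le V(g)$ is trivial, so this is harmless but should be said. (ii) You should note that $p=g'$ is nonzero (its leading coefficient is $na_n\neq 0$), so the induction hypothesis applies. With these cosmetic additions the proof is complete and standard.
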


\begin{lemma}\label{lm:JSthm58}\cite[Theorem 3.5]{shiu-dewolff}
%Suppose network $G$ %is consistent and
%has  one irreversible and one reversible reaction.
Given $G\in {\mathcal G}_1$, if $G$ has exactly $2$ species,
then $G$ admits nondegenerate  multistationarity if and only if
there exists $\lambda \in {\mathbb R}\backslash \{0\}$ such that the equality \eqref{eq:scalar} holds for $s=2$, and
\begin{align}\label{eq:js}
\exists k\in \{1, 2\}\; \text{s.t.}\;
 \max \{\alpha_{k1}, \beta_{k1}\} < \alpha_{k2} <\beta_{k2}\; or
 \;
 \min \{\alpha_{k1}, \beta_{k1}\} > \alpha_{k2} > \beta_{k2}.
\end{align}
%\begin{itemize}
%\item[(i)] there exists $\lambda \in {\mathbb R}\backslash \{0\}$ such that the equality \eqref{eq:scalar} holds for $s=2$, and
%\item[(ii)] there exists
%$k\in \{1, 2\}$ such that
%\begin{align}\label{eq:js1}
%\max \{\alpha_{k1}, \beta_{k1}\} < \alpha_{k2} <\beta_{k2},
%\end{align}
%or
%\begin{align}\label{eq:js2}
%\min \{\alpha_{k1}, \beta_{k1}\} > \alpha_{k2} > \beta_{k2}.
%\end{align}
%\end{itemize}
\end{lemma}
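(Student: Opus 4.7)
The plan is to reduce the equivalence to a univariate polynomial root-counting problem and apply Descartes' rule of signs (Theorem \ref{thm:des}) as the main counting tool. Throughout I write $\kappa_1,\kappa_{-1},\kappa_2$ for the forward, reverse and irreversible rate constants, so that $f_1$ is a linear combination of the three monomials $x^{\alpha_{\cdot 1}}$, $x^{\beta_{\cdot 1}}$ and $x^{\alpha_{\cdot 2}}$.

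For the ``only if'' direction, Lemma \ref{lm:dim1} already yields the scalar relation \eqref{eq:scalar} with some $\lambda\neq 0$, and by Corollary \ref{cry:dim1} the stoichiometric subspace is one-dimensional. With $s=2$ and Assumption \ref{assumption}, the conservation law gives $x_2=Ax_1-B$ where $A:=(\beta_{21}-\alpha_{21})/(\beta_{11}-\alpha_{11})$ and $B:=c_1/(\beta_{11}-\alpha_{11})$. Substituting into $f_1=0$ (as in \eqref{eq:g}) and clearing negative powers of $x_1$, I obtain a polynomial $P(x_1)$ whose positive roots in the interval $I$ of Lemma \ref{lm:interval} are in bijection with positive steady states of $G$. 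By Theorem \ref{thm:nonde}, the nondegenerate-multistationarity hypothesis forces $P$ to have at least two simple positive roots, so by Descartes' rule $P$ must exhibit at least two sign changes. The crux is then a sign-pattern case analysis: expanding each binomial $(Ax_1-B)^{\alpha_{2j}}$, the coefficients of $P$ are determined by the signs of $A$, $-B$ and $\lambda$ together with the relative orderings of the exponents $\alpha_{i1},\beta_{i1},\alpha_{i2}$. Splitting on the sign of $(\beta_{11}-\alpha_{11})(\beta_{21}-\alpha_{21})$ (which fixes the sign of $A$) and on these orderings, one checks that $P$ attains two sign changes only when in some coordinate $k$ the pair $(\alpha_{k2},\beta_{k2})$ strictly dominates $\{\alpha_{k1},\beta_{k1}\}$ from above or from below, which is exactly condition \eqref{eq:js}.

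For the ``if'' direction, assume the scalar relation and \eqref{eq:js} hold, say with $\max\{\alpha_{k1},\beta_{k1}\}<\alpha_{k2}<\beta_{k2}$. I would construct $(\kappa_1,\kappa_{-1},\kappa_2,c_1)$ realizing a maximal sign alternation in $P$: the strict dominance in \eqref{eq:js} forces the $\alpha_{k2}$-monomial to contribute the extreme-degree term of $P$ after clearing denominators, while $x^{\alpha_{\cdot 1}}$ and $x^{\beta_{\cdot 1}}$ contribute intermediate-degree terms and an appropriate choice of $c_1$ (hence of $B$) determines the sign of the low-order terms. Tuning $\kappa_2/\kappa_1$ and $\kappa_{-1}$ so that the coefficient sequence of $P$ alternates at least twice, the intermediate value theorem inside $I$ produces two distinct positive roots of $P$; nondegeneracy at these roots is recovered from $P'(x_1^*)\neq 0$ via Lemma \ref{lm:jac} and Lemma \ref{lm:nonde}. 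The main obstacle is the forward-direction case enumeration: unlike the $\mathcal{G}_0$ setting of Section \ref{sec:g0}, $\lambda$ need not be positive and the ``middle'' monomial $x^{\beta_{\cdot 1}}$ contributed by the reverse reaction can neutralize potential sign changes, so a careful exhaustion of all sign patterns is what pins down \eqref{eq:js} as the precise characterization.
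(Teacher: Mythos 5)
Note first that the paper does not prove this lemma at all: it is quoted from \cite[Theorem 3.5]{shiu-dewolff}, so your attempt is a reconstruction of that external proof rather than of anything in Section \ref{sec:ext}. As such, the central step of your ``only if'' direction has a genuine gap. Your chain is: two nondegenerate positive steady states $\Rightarrow$ the expanded univariate polynomial $P$ (essentially $g(x_1)$ of \eqref{eq:g}) has at least two sign changes $\Rightarrow$ condition \eqref{eq:js}. The second implication is false. Descartes' rule counts positive roots of $g$, but positive steady states correspond only to roots in the interval $I$ of Lemma \ref{lm:interval}, where $x_2=Ax_1-B$ is also positive; the binomial expansion of $(Ax_1-B)^{\alpha_{2j}}$ can manufacture sign changes that have nothing to do with steady states. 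Concretely, take the network $X_1+2X_2 \Leftrightarrow 0$, $2X_2 \rightarrow X_1+4X_2$: the scalar relation \eqref{eq:scalar} holds with $\lambda=1$ and \eqref{eq:js} fails (so the lemma asserts no nondegenerate multistationarity), yet with $c_1=-1$ one gets $g(x_1)=-4\kappa_1x_1^3+(4\kappa_1+4\kappa_3)x_1^2-(\kappa_1+4\kappa_3)x_1+(\kappa_3+\kappa_2)$, whose coefficient sequence has three sign changes for every choice of rate constants (the extra changes come from roots with $x_1<\tfrac12$, where $x_2=2x_1-1<0$). So ``two sign changes only when \eqref{eq:js} holds'' cannot be the pivot; you must control the number of roots \emph{inside} $I$, via endpoint signs and monotonicity/derivative arguments of the kind the paper itself deploys in the proof of Theorem \ref{thm:g1} (Lemma \ref{lm:3ss}, Lemma \ref{lm:jactildeg}, and the ad hoc estimates for Rows \eqref{row:1r1i5}--\eqref{row:1r1i9}), not raw Descartes on the expansion.

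The ``if'' direction is likewise only a plan: two sign changes never guarantee two positive roots (Descartes is an upper bound, and parity allows zero), and the explicit choice of $\kappa$ and $c_1$ producing two roots inside $I$ --- which is exactly the substantive content of the cited theorem --- is deferred to ``tuning.'' If you do exhibit two positive steady states, nondegeneracy is best recovered wholesale from Theorem \ref{thm:nonde} (as the paper does) rather than by checking $P'\neq 0$ root by root, since an IVT argument alone does not exclude odd-order multiple roots. In short, the reduction to a univariate problem and the use of Lemma \ref{lm:dim1} are sound, but the two implications that would make the argument a proof are precisely the ones left unproved, and one of them is carried by a false intermediate claim.
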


%For $k=1, \ldots, s$, define
%\begin{align}\label{eq:newgamma}
%\hat \gamma_k\;:=\;\min \{\alpha_{k1}, \beta_{k1}, \alpha_{k2}\}, \;
%\hat \alpha_{kj}\;:=\;\alpha_{kj}-\hat \gamma_k \; (j=1, 2), \; \text{and} \;
%\hat \beta_{k1}\;:=\;\beta_{k1}-\hat \gamma_k
%\end{align}
%Note here, by \eqref{eq:newgamma},
%we have
%\begin{align}\label{eq:hata}
%\min \{\hat \alpha_{k1}, \hat \beta_{k1}, \hat \alpha_{k2}\} \;=\; 0.
%\end{align}
%Then
%$h_1$ defined in \eqref{eq:h} can be written as $h_1\;=\; \Pi_{k=1}^sx_k^{\gamma_k}{\hat h}_1$, where
%\begin{align}\label{eq:hath}
 %{\hat h}_1\;:=\;\left(\beta_{11}-\alpha_{11}\right)
%\left(\kappa_1\Pi_{k=1}^s x_k^{\hat \alpha_{k1}}-\kappa_2\Pi_{k=1}^s x_k^{\hat \beta_{k1}}+\lambda\kappa_3\Pi_{k=1}^s x_k^{\hat \alpha_{k2}}\right).
%\end{align}
%Let
%\begin{align}\label{eq:hatg}
%g(x_1) &~:=~ {\tt res}({\tt res}({\tt res}(h_1, h_2, x_2), h_3, x_3), \ldots, x_{s})
%\hat g(x_1) &~:=~ \hat h_1(x_1, \ldots, x_s)|_{x_2=\frac{\beta_{21}-\alpha_{21}}{\beta_{11}-\alpha_{11}}x_1-\frac{c_{1}}{\beta_{11}-\alpha_{11}}, \;\ldots,\; x_s=\frac{\beta_{s1}-\alpha_{s1}}{\beta_{11}-\alpha_{11}}x_1-\frac{c_{s-1}}{\beta_{11}-\alpha_{11}}}.
%\end{align}

%\textcolor{red}{edit from here}

\begin{lemma}\label{lm:nec1}
%Suppose network $G$ %is consistent and
%has  one irreversible and one reversible reaction,  and $G$ has exactly $2$ species.
Suppose $G\in {\mathcal G}_1$, and suppose $G$ has exactly $2$ species.
If $G$ admits multistability, then % under the assumption $\beta_{11}-\alpha_{11}\neq 0$,
we have
\begin{align}\label{eq:nec1}
(\beta_{11}-\alpha_{11})(\beta_{21}-\alpha_{21})\neq 0, \;\;\;\text{and}
\end{align}
\begin{align}\label{eq:nec12}
\frac{\beta_{12}-\alpha_{12}}{\beta_{11}-\alpha_{11}}\;=\; \frac{\beta_{22}-\alpha_{22}}{\beta_{21}-\alpha_{21}} \;\neq \; 0.
\end{align}
\end{lemma}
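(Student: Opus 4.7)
The plan is to deduce both the nonvanishing \eqref{eq:nec1} and the equalities in \eqref{eq:nec12} from the scalar-parallelism relation supplied by Lemma \ref{lm:dim1}, using Descartes' rule of signs (Theorem \ref{thm:des}) together with the multistability bound of Theorem \ref{thm:multistable} to rule out the one remaining degenerate possibility.

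First I would observe that multistability, i.e.\ $cap_{stab}(G) \geq 2$, forces $cap_{pos}(G) \geq 2$, since every stable positive steady state is in particular a positive steady state. Hence by Lemma \ref{lm:dim1}, there exists $\lambda \in \R\setminus\{0\}$ such that \eqref{eq:scalar} holds with $s=2$. Assumption \ref{assumption} guarantees $\beta_{11}-\alpha_{11} \neq 0$, so \eqref{eq:scalar} already delivers $(\beta_{12}-\alpha_{12})/(\beta_{11}-\alpha_{11}) = -\lambda \neq 0$.

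The crux is to show $\beta_{21}-\alpha_{21}\neq 0$. I would argue by contradiction: if $\beta_{21}-\alpha_{21}=0$, then $\beta_{22}-\alpha_{22} = -\lambda \cdot 0 = 0$ by \eqref{eq:scalar}, so every reaction contributes zero to $\dot x_2$ in \eqref{eq:sys}. Thus $x_2$ is conserved at some constant $c>0$, and the dynamics of $x_1$ reduces to $\dot x_1 = g(x_1)$ with
\[
g(x_1) \;=\; c^{\alpha_{21}}(\beta_{11}-\alpha_{11})\bigl[\kappa_1^{+} x_1^{\alpha_{11}} - \kappa_1^{-} x_1^{\beta_{11}}\bigr] + (\beta_{12}-\alpha_{12})\kappa_2\, c^{\alpha_{22}}\, x_1^{\alpha_{12}}.
\]
This polynomial has at most three distinct monomials in $x_1$, so Descartes' rule of signs gives at most two positive real roots (counted with multiplicity) for every choice of the rate constants. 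Therefore $cap_{pos}(G)\leq 2$, and Theorem \ref{thm:multistable} then forces $cap_{stab}(G)\leq \lceil 2/2 \rceil = 1$, contradicting multistability.

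With both denominators in \eqref{eq:nec12} now known to be nonzero, \eqref{eq:scalar} immediately yields $(\beta_{12}-\alpha_{12})/(\beta_{11}-\alpha_{11}) = (\beta_{22}-\alpha_{22})/(\beta_{21}-\alpha_{21}) = -\lambda \neq 0$, which is exactly \eqref{eq:nec12}. The only subtlety I anticipate is confirming that the three-monomial reduction of $g$ remains valid when $\alpha_{12}$ coincides with $\alpha_{11}$ or $\beta_{11}$; in those cases $g$ has only two distinct monomials, and the Descartes bound only improves, so the argument goes through uniformly.
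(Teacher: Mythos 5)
Your proof is correct and takes essentially the same route as the paper: both handle the case $\beta_{21}-\alpha_{21}=0$ by reducing to a univariate polynomial in $x_1$ with at most three monomials (yours via the conservation of $x_2$, the paper's via the substitution defining $g$ in \eqref{eq:g}), then apply Descartes' rule of signs to get $cap_{pos}(G)\leq 2$ and invoke Theorem \ref{thm:multistable} for the contradiction. The remaining claim \eqref{eq:nec12} follows from Lemma \ref{lm:dim1} exactly as you state.
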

\begin{proof}
Recall that we have $\beta_{11}-\alpha_{11}\neq 0$ by Assumption \ref{assumption}.
If $\beta_{21}-\alpha_{21}= 0$, then by \eqref{eq:g}, we have
$g(x_1)=\left(\beta_{11}-\alpha_{11}\right)
\left(\kappa_1\Gamma^{\alpha_{21}}x_1^{\alpha_{11}}-\kappa_2\Gamma^{ \beta_{21}}x_1^{ \beta_{11}}+\lambda\kappa_3 \Gamma^{ \alpha_{12}}x_1^{\alpha_{12}}\right), $
where $\Gamma=-\frac{c_{1}}{\beta_{11}-\alpha_{11}}$.
So,  $g(x_1)$ has at most $3$ terms, and hence, the number of sign changes of the coefficients is at most $2$. By Descartes' rule of signs (Theorem \ref{thm:des}),
$g(x_1)=0$ has at most $2$ positive roots. So, the network $G$ admits at most $2$ positive steady states. By Theorem \ref{thm:multistable}, the network does not admit multistability, which is a contradiction. So, the inequality \eqref{eq:nec1} holds.
Finally, by Lemma \ref{lm:dim1}, we have \eqref{eq:nec12}.
\end{proof}

\begin{definition}\label{def:equivalent1}
Given two matrices of reactant coefficients
\begin{align*}
\sigma \;=\;\left(
\begin{array}{ccc}
\alpha_{11}&\beta_{11}&\alpha_{12}\\
\alpha_{21}&\beta_{21}&\alpha_{22}
\end{array}
\right)\; \text{and}\;
\hat \sigma \;=\;\left(
\begin{array}{ccc}
\hat \alpha_{11}&\hat \beta_{11}&\hat \alpha_{12}\\
\hat \alpha_{21}&\hat \beta_{21}&\hat \alpha_{22}
\end{array}
\right),
\end{align*}
which are
associated with two 2-species networks $G$ and $\hat G$ in ${\mathcal G}_1$, we say $\sigma$ is \defword{strongly equivalent} to $\hat \sigma$,
if there
exist finitely many matrices $\sigma^{(0)}, \ldots, \sigma^{(n)}$ such that
$\sigma^{(0)}=\sigma$, $\sigma^{(n)}=\hat{\sigma}$, and for any $i\in \{0, \ldots, n-1\}$,
we can obtain $\sigma^{(i+1)}$ from $\sigma^{(i)}$ by switching the two rows or the first two columns of $\sigma^{(i)}$.
% if we can obtain $\sigma$ from $\hat \sigma$ by switching
%the two rows and the first two columns finitely many times. \textcolor{red}{need to edit here}
\end{definition}
%If a network $\hat {G}\in {\mathcal G}_1$ has the form of a network $G\in {\mathcal G}_1$,  then the two matrices of reactant coefficients associated with $G$ and $\hat {G}$ are equivalent.
\begin{example}\label{ex:equivalent1}
Consider the two networks below.
\begin{align}\label{eq:exnet3}
X_1 + 2X_2~ \xLeftrightarrow[]{}~  0, \;\;\;\;  2X_1 \xrightarrow[]{} 3X_1 + 2X_2.
\end{align}
\begin{align}\label{eq:exnet4}
   0~ \xLeftrightarrow[]{}~ 2X_1 + X_2 , \;\;\;\; 2X_2 \xrightarrow[]{}  2X_1 +3X_2.
  \end{align}
 The two matrices of reactant coefficients %(say $\sigma$ and $\hat \sigma$)
of networks \eqref{eq:exnet3} and \eqref{eq:exnet4} can be rewritten as
%\begin{align*}
$\sigma \;=\;\left(
\begin{array}{ccc}
1&0&0\\
2&0&2
\end{array}
\right)\; \text{and}\;
\hat \sigma \;=\;\left(
\begin{array}{ccc}
0&2&2\\
0&1&0
\end{array}
\right)$.
%\end{align*}
We can obtain $\sigma$ from $\hat \sigma$ by first switching
the first two columns and then switching the two rows.
So $\sigma$ is strongly equivalent to $\hat \sigma$.
\end{example}
\begin{table}[]
  \tiny
  \centering
  %\caption{Possible reactant coefficients for a multistable network in ${\mathcal G}_1$ with $3$ reactants and $2$ species}
  \caption{Representatives of equivalence classes in ${\mathcal C}_{\sigma}$ \eqref{eq:Csigma}}
  \label{tab:sigma}
  \begin{tabular}{c|c|c|c}
  \hline
%  & $\left(\alpha_{11},\alpha_{21},\beta_{11},\beta_{21},\alpha_{12},\alpha_{22}\right)$ &  Network  & $\beta_{i2}$ possible value  \\ \hline

\cellcolor[HTML]{FFCE93}
\begin{tabular}[r]{@{}r@{}}
%      $\alpha_{11}+\alpha_{21}=3 $ \\
  %     $\beta_{11}+\beta_{21}<3 $ \\
    %    $\alpha_{21}+\alpha_{22}=3 $\\
 see ``reverse1.mw"
 \end{tabular}
&
\cellcolor[HTML]{FFCE93}
 \begin{tabular}[r]{@{}r@{}}
%      $\alpha_{11}+\alpha_{21}<3 $ \\
  %     $\beta_{11}+\beta_{21}<3 $ \\
    %    $\alpha_{21}+\alpha_{22}=3 $\\
 see ``reverse2.mw"
 \end{tabular}
 &
 \cellcolor[HTML]{FFCE93}
  \begin{tabular}[r]{@{}r@{}}
      %$\alpha_{11}+\alpha_{21}=3 $ \\
       %$\beta_{11}+\beta_{21}<3 $ \\
        %$\alpha_{21}+\alpha_{22}<3 $\\
 see ``reverse3.mw"
 \end{tabular}
   &
   \cellcolor[HTML]{FFCE93}
   \begin{tabular}[r]{@{}r@{}}
      %$\alpha_{11}+\alpha_{21}=3 $ \\
       %$\beta_{11}+\beta_{21}=3 $ \\
        %$\alpha_{21}+\alpha_{22}\leq 3 $\\
 see ``reverse4.mw"
 \end{tabular}
  \\ \hline

$\left(
\begin{array}{ccc}
0&1&3\\
3&0&0
\end{array}
\right)$&
   \cellcolor[HTML]{FFE7FD}$\left(
\begin{array}{ccc}
{\mathbf
0}&{\mathbf
1}&{\mathbf
3}\\
{\mathbf
0}&{\mathbf
1}&{\mathbf
0}
\end{array}
\right)$

&
$\left(
\begin{array}{ccc}
0&1&2\\
3&0&0
\end{array}
\right)$
&
  $\left(
\begin{array}{ccc}
0&1&2\\
3&2&0
\end{array}
\right)$
  \\ \hline

  $\left(
\begin{array}{ccc}
0&1&2\\
3&0&1
\end{array}
\right)$&
$  \cellcolor[HTML]{FFE7FD}\left(
\begin{array}{ccc}
{\mathbf
0}&{\mathbf
1}&{\mathbf
2}\\
{\mathbf
0}&{\mathbf
1}&{\mathbf
1}
\end{array}
\right)$
&
 $\left(
\begin{array}{ccc}
0&1&2\\
3&1&0
\end{array}
\right)$
&
 $\left(
\begin{array}{ccc}
0&1&3\\
3&2&0
\end{array}
\right)$

\\\hline

$\left(
\begin{array}{ccc}
0&1&3\\
3&1&0
\end{array}
\right)$ &
$\left(
\begin{array}{ccc}
0&1&3\\
1&0&0
\end{array}
\right)$
&
  \cellcolor[HTML]{FFE7FD}
 $\left(
\begin{array}{ccc}
{\mathbf
1}&{\mathbf
0}&{\mathbf
2}\\
{\mathbf
2}&{\mathbf
0}&{\mathbf
0}
\end{array}
\right)$
&
 $\left(
\begin{array}{ccc}
0&2&3\\
3&1&0
\end{array}
\right)$

  \\ \hline

$\left(
\begin{array}{ccc}
0&2&3\\
2&0&0
\end{array}
\right)$&
  \cellcolor[HTML]{FFE7FD}
$\left(
\begin{array}{ccc}
{\mathbf
0}&{\mathbf
1}&{\mathbf
2}\\
{\mathbf
1}&{\mathbf
0}&{\mathbf
1}
\end{array}
\right)$
&
 \cellcolor[HTML]{FFE7FD}$\left(
\begin{array}{ccc}
{\mathbf
1}&{\mathbf
0}&{\mathbf
2}\\
{\mathbf
2}&{\mathbf
1}&{\mathbf
0}
\end{array}
\right)$
&

\\\hline

$  \cellcolor[HTML]{FFE7FD}\left(
\begin{array}{ccc}
{\mathbf
2}&{\mathbf
0}&{\mathbf
3}\\
{\mathbf
1}&{\mathbf
0}&{\mathbf
0}
\end{array}
\right)$&
$\left(
\begin{array}{ccc}
0&2&3\\
1&0&0
\end{array}
\right)$
&&
\\ \hline

$  \cellcolor[HTML]{FFE7FD}\left(
\begin{array}{ccc}
{\mathbf
1}&{\mathbf
0}&{\mathbf
3}\\
{\mathbf
2}&{\mathbf
0}&{\mathbf
0}
\end{array}
\right)$&
$\left(
\begin{array}{ccc}
1&0&3\\
0&2&0
\end{array}
\right)$&&
\\ \hline

$  \cellcolor[HTML]{FFE7FD}\left(
\begin{array}{ccc}
{\mathbf
1}&{\mathbf
0}&{\mathbf
2}\\
{\mathbf
2}&{\mathbf
0}&{\mathbf
1}
\end{array}
\right)$&
$  \cellcolor[HTML]{FFE7FD}\left(
\begin{array}{ccc}
{\mathbf
1}&{\mathbf
0}&{\mathbf
2}\\
{\mathbf
0}&{\mathbf
2}&{\mathbf
1}
\end{array}
\right)$
&&
\\ \hline

$  \cellcolor[HTML]{FFE7FD}\left(
\begin{array}{ccc}
{\mathbf
1}&{\mathbf
0}&{\mathbf
3}\\
{\mathbf
2}&{\mathbf
1}&{\mathbf
0}
\end{array}
\right)$&
$\left(
\begin{array}{ccc}
1&0&3\\
1&2&0
\end{array}
\right)$
&&

\\ \hline

$\left(
\begin{array}{ccc}
2&0&3\\
1&2&0
\end{array}
\right)$&
$\left(
\begin{array}{ccc}
2&0&3\\
0&2&0
\end{array}
\right)$
&&

\\ \hline
 \end{tabular}

  %\label{tab:table_1}

\end{table}

\begin{lemma}\label{lm:g1}
If a $2$-species network $G\in {\mathcal G}_1$ is at-most-$3$-reactant,
%if $G$ has exactly $3$ species and $G$ is at-most-$4$-reactant,
and if $G$ admits multistability, then $G$ can only have the form of one of the networks listed in Table \ref{tab:net2}.
\end{lemma}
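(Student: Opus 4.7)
The plan is to mirror the strategy used to prove Lemma \ref{lm:4-reactant}: translate the necessary conditions into combinatorial constraints on the reactant data, enumerate equivalence classes under strong equivalence, and then solve for admissible product coefficients.

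First, since a stable positive steady state is by definition nondegenerate, multistability of $G$ implies $cap_{nondeg}(G) \geq 2$, and hence Lemma \ref{lm:JSthm58} yields the sign-pattern condition \eqref{eq:js}. Simultaneously, Theorem \ref{thm:multistable} forces $cap_{pos}(G) \geq 3$, and Lemma \ref{lm:nec1} supplies $(\beta_{11}-\alpha_{11})(\beta_{21}-\alpha_{21}) \neq 0$ together with the one-dimensional scalar relation \eqref{eq:nec12}. In particular, both $\beta_{k1}-\alpha_{k1}$ are nonzero and $(\beta_{12}-\alpha_{12},\beta_{22}-\alpha_{22})$ is a nonzero scalar multiple of $(\beta_{11}-\alpha_{11},\beta_{21}-\alpha_{21})$.

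Second, I introduce the finite set of admissible reactant tuples
\[{\mathcal C}_{\sigma} \;:=\; \{\sigma \in {\mathbb Z}_{\geq 0}^{2\times 3} : \sigma \text{ is at-most-$3$-reactant and satisfies \eqref{eq:js}, \eqref{eq:nec1}}\},\]
where $\sigma$ records the columns $(\alpha_{k1}, \beta_{k1}, \alpha_{k2})$. Arguing exactly as in the reduction \eqref{eq:c} in the proof of Lemma \ref{lm:4-reactant}, if two reactant tuples $\sigma$ and $\hat\sigma$ are strongly equivalent (Definition \ref{def:equivalent1}), then the associated networks have the same form, so $cap_{stab}$ is preserved. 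It therefore suffices to pick one representative from each equivalence class of ${\mathcal C}_\sigma$, and a direct {\tt Maple} enumeration produces the list displayed in Table \ref{tab:sigma}.

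Third, for each such representative I solve for the admissible product coefficients $(\beta_{12},\beta_{22}) \in {\mathbb Z}_{\geq 0}^2$ using the scalar equation \eqref{eq:nec12} together with the at-most-$3$-reactant bound $\beta_{12}+\beta_{22} \leq 3$. Since \eqref{eq:nec12} pins down the direction of $(\beta_{12}-\alpha_{12},\beta_{22}-\alpha_{22})$ in terms of the already-fixed $(\beta_{11}-\alpha_{11},\beta_{21}-\alpha_{21})$, there are only finitely many nonnegative integer solutions per representative. Assembling these candidate networks across all representatives and comparing to Table \ref{tab:net2}, I obtain exactly the list in that table.

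The main obstacle is not conceptual but organizational: the enumeration of equivalence classes in ${\mathcal C}_\sigma$ and the subsequent case-by-case solution of \eqref{eq:nec12} over ${\mathbb Z}_{\geq 0}^2$ are lengthy. In particular, one must carefully separate the sign-pattern constraint \eqref{eq:js} according to which species $k \in \{1,2\}$ realizes it and according to which of the two alternatives in \eqref{eq:js} occurs, which multiplies the number of sub-cases. As in the proof of Lemma \ref{lm:4-reactant}, this step is naturally carried out by a short symbolic computation script that certifies the completeness of Table \ref{tab:sigma} and then matches each candidate against Table \ref{tab:net2}.
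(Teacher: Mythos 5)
Your overall strategy matches the paper's: collect necessary conditions, enumerate reactant tuples up to strong equivalence, then solve for the remaining product coefficients. However, there are two concrete problems. First, your definition of ${\mathcal C}_{\sigma}$ omits the conditions that actually make the enumeration come out right. The paper uses $cap_{pos}(G)\geq 3$ (from Theorem \ref{thm:multistable}) to force the steady-state polynomial $h_1$ to have degree \emph{exactly} $3$, i.e.\ $\max\{\sum_k\alpha_{k1},\sum_k\beta_{k1},\sum_k\alpha_{k2}\}=3$, and to force the three monomials of $h_1$ to share no common variable, i.e.\ $\min\{\alpha_{11},\beta_{11},\alpha_{12}\}=0$ and $\min\{\alpha_{21},\beta_{21},\alpha_{22}\}=0$ (conditions \eqref{eq:3-reactant1} and \eqref{eq:3-reactant2}); without these, a common factor $x_k$ or a lower degree would cap the number of positive roots at $2$. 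Your ${\mathcal C}_\sigma$ only imposes ``at-most-$3$-reactant'' together with \eqref{eq:js} and \eqref{eq:nec1}, so it is strictly larger than the paper's set, and the claimed match with Table \ref{tab:sigma} and Table \ref{tab:net2} would fail. (Also, \eqref{eq:js} involves $\beta_{k2}$, which is not recorded in $\sigma$, so it cannot be used as a defining condition on $\sigma$; the paper replaces it by the projected condition \eqref{eq:maxa2}.)

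Second, and more seriously, the bound $\beta_{12}+\beta_{22}\leq 3$ you impose in the third step is wrong. Definition \ref{def:km} constrains only \emph{reactant} complexes; for $G\in{\mathcal G}_1$ these are $\sum_k\alpha_{k1}X_k$, $\sum_k\beta_{k1}X_k$ (the reactant of the reverse direction of the reversible pair), and $\sum_k\alpha_{k2}X_k$ --- but not the product $\sum_k\beta_{k2}X_k$ of the irreversible reaction. Indeed, every multistable family in Table \ref{tab:net2} has $\beta_{12}$ ranging over an infinite set (e.g.\ all $\beta_{12}>3$ in Row \eqref{row:1r1i4}), so your constraint would discard almost all of the networks the lemma is supposed to produce. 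The correct step, as in the paper, is to solve \eqref{eq:nec12} together with \eqref{eq:js} over ${\mathbb Z}_{\geq 0}$ with no a priori bound on $(\beta_{12},\beta_{22})$; for many representatives this system has no nonnegative integer solution (those rows of Table \ref{tab:sigma} are discarded), and for the rest it yields a one-parameter infinite family.
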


\begin{proof}
 If $G$ admits multistability, then by Theorem \ref{thm:multistable}, we have
$cap_{pos}(G)\geq 3$. Note that all positive steady states of $G$ are common solutions to the equations $h_1(x)=\ldots=h_s(x)=0$ (see \eqref{eq:h}). So, the degree of $ h_1$ with respect to $x$ is at least $3$. Since $G$ is at most 3-reactant, the degree of $h_1$ with respect to $x$ is at most $3$. Overall, the degree $h_1$ with respect to $x$ is exactly $3$, i.e.,
\begin{align}
\max \{\sum_{k=1}^2 \alpha_{k1},  \sum_{k=1}^2 \beta_{k1}, \sum_{k=1}^2  \alpha_{k2}\} \;=\; 3. &\label{eq:3-reactant1}
\end{align}
So, we have $cap_{pos}(G)=3$. That means $G$ has no boundary steady states and so, %we have
\begin{align}
\min \{\alpha_{11}, \beta_{11}, \alpha_{21}\}\;=\; 0,\; \text{and} \;\min \{\alpha_{21}, \beta_{21}, \alpha_{22}\}\;=\; 0. \label{eq:3-reactant2}
\end{align}
%\textcolor{red}{need to edit from here}
Therefore, by Lemma \ref{lm:JSthm58} and  Lemma \ref{lm:nec1}, we know that the matrix of reactant coefficients and product coefficients $\tau :=
\left(
\begin{array}{cccc}
\alpha_{11}&\beta_{11}&\alpha_{12}&\beta_{12}\\
\alpha_{21}&\beta_{21}&\alpha_{22}&\beta_{22}
\end{array}
\right)
$
associated with $G$ belong to the set
{\footnotesize
\begin{align}
{\mathcal C} \;:=\; \{\left(
\begin{array}{cccc}
\alpha_{11}&\beta_{11}&\alpha_{12}&\beta_{12}\\
\alpha_{21}&\beta_{21}&\alpha_{22}&\beta_{22}
\end{array}
\right)
\in {\mathbb Z}^{2\times 4}_{\geq 0}\;\text{s.t.
\eqref{eq:js}, \eqref{eq:nec1}, \eqref{eq:nec12},  \eqref{eq:3-reactant1}, \eqref{eq:3-reactant2} hold}\}.
\end{align}
}Define a map $\pi_{\sigma}:{\mathbb Z}^{2\times 4}_{\geq 0}\rightarrow {\mathbb Z}^{2\times 3}_{\geq 0}$  such that
for any $\left(
\begin{array}{cccc}
\alpha_{11}&\beta_{11}&\alpha_{12}&\beta_{12}\\
\alpha_{21}&\beta_{21}&\alpha_{22}&\beta_{22}
\end{array}
\right)
\in {\mathbb Z}^{2\times 4}_{\geq 0}$,
its image under $\pi_{\sigma}$ is
$
\left(
\begin{array}{ccc}
\alpha_{11}&\beta_{11}&\alpha_{12}\\
\alpha_{21}&\beta_{21}&\alpha_{22}
\end{array}
\right).
$
Note that if $\tau$ satisfies \eqref{eq:js}, then
$\pi_{\sigma}(\tau)$ satisfies
\begin{align}
\exists
k\in \{1, 2\} \;\text{s.t.}\;
\max \{\alpha_{k1}, \beta_{k1}\}\; <\; \alpha_{k2}, \; &\text{or} \;
\min \{\alpha_{k1}, \beta_{k1}\} \;>\; \alpha_{k2}\; >\;  0. &  \label{eq:maxa2}
\end{align}
Let
{\footnotesize
\begin{align}\label{eq:Csigma}
{\mathcal C}_{\sigma} \;:=\; \{
\left(
\begin{array}{ccc}
\alpha_{11}&\beta_{11}&\alpha_{12}\\
\alpha_{21}&\beta_{21}&\alpha_{22}
\end{array}
\right)\in  {\mathbb Z}^{2\times 3}_{\geq 0}\;\text{s.t. \eqref{eq:nec1}, \eqref{eq:3-reactant1}, \eqref{eq:3-reactant2}, and \eqref{eq:maxa2} hold}\}.
\end{align}
}
%Notice that ${\mathcal C}_{\sigma}$ is a finite set.
%We explicitly compute the set ${\mathcal C}_{\sigma}$ by {\tt Maple2020}. %, and we find that it consists of \textcolor{red}{$72$} elements.
For any $\sigma\in {\mathcal C}_{\sigma}$, define a equivalence class ${\mathcal C}_{\sigma}$ as $$[\sigma]_{{\mathcal C}} := \{\hat \sigma\in {\mathcal C}_{\sigma}| \hat \sigma \;\text{is strongly equivalent to}\; \sigma\}.$$ It is straightforward to check by a computer program  that there are $25$ equivalence classes in ${\mathcal C}_{\sigma}$ (see the supporting files \#2--5 in Table \ref {tab:sup}), and we pick one element from each equivalence class as a representative.
We present
the $25$ representatives in Table \ref{tab:sigma}.

In Table \ref{tab:sigma}, for any representative $\sigma$ recorded in a unbold/uncolored cell,
the set  $\pi_{\sigma}^{-1} (\sigma)\cap {\mathcal C}$ is empty.
For instance, for the first column of the second row,
we have
$$\sigma\;=\;
\left(
\begin{array}{ccc}
\alpha_{11}&\beta_{11}&\alpha_{12}\\
\alpha_{21}&\beta_{21}&\alpha_{22}
\end{array}
\right)\;=\;
\left(
\begin{array}{ccc}
0&1&2\\
3&0&1
\end{array}
\right),
$$
which satisfies the condition \eqref{eq:maxa2} because for $k=1$,
$\max \{\alpha_{k1}, \beta_{k1}\} <\alpha_{k2}$ holds.
By the condition \eqref{eq:js}, we have $\beta_{12}>\alpha_{12}=2$.
The condition \eqref{eq:nec12} can be written as
 $\frac{\beta_{12}-2}{1}=\frac{\beta_{22}-1}{-3}$.
So, we have $\beta_{22}-1<0$ since $\beta_{12}-2>0$. Hence, $\beta_{22}=0$ is the only solution for $\beta_{22}$ in
${\mathbb Z}_{\geq 0}$. So we have $\beta_{12}-2=\frac{1}{3}$, and we have no solution for $\beta_{12}$ in
${\mathbb Z}_{\geq 0}$. Therefore, $\pi_{\sigma}^{-1} (\sigma)\cap {\mathcal C}=\emptyset$. Similarly, we can easily verify
that $\pi_{\sigma}^{-1} (\sigma)\cap {\mathcal C}$ is empty for any other $\sigma$ recorded in a unbold cell (see ``reverse1.mw"--``reverse4.mw").
We repeat the representatives in the bold/colored cells in the first column of Table \ref{tab:net2}, and we write down their corresponding networks in the second column.

Note
\begin{align}\label{eq:cs}
{\mathcal C}  \;=\;  \pi_{\sigma}^{-1}({\mathcal C}_{\sigma})\cap {\mathcal C}
                   % &\;=\;  \pi^{-1}(\cup_{\alpha \in {\mathcal C}_{\alpha}} [\alpha])\cap {\mathcal C} \notag \\
                     \;=\;  \cup_{\sigma \in {\mathcal C}_{\sigma}}\cup_{\hat \sigma\in [\sigma]_{\mathcal C}}
                     \left(\pi_{\sigma}^{-1}(\hat \sigma)\cap {\mathcal C} \right).
\end{align}
By Definition \ref{def:equivalent1}, if $\hat \sigma \in [\sigma]_{\mathcal C}$, then
there exist two permutation matrices $P$ and $Q$ such that
$\hat \sigma=P\sigma Q$.  Thus, there exists a bijection
%\begin{align*}
$\phi \;:\; \pi_{\sigma}^{-1}(\sigma)\cap {\mathcal C} \rightarrow \pi_{\sigma}^{-1}(\hat \sigma) \cap {\mathcal C}$
%\end{align*}
such that
for any $\tau\in \pi_{\sigma}^{-1}(\sigma)\cap {\mathcal C}$, $\phi(\tau) := P\tau Q$.
By Definition
\ref{def:form}, the two networks associated with $\tau$ and $\phi(\tau)$ have the same form. Thus,
by \eqref{eq:cs}, any multistable network $G$ has the form of a network associated with an element in  $\pi_{\sigma}^{-1} (\sigma)\cap {\mathcal C}$ for a representative $\sigma$  recorded in the first column of Table \ref{tab:net2}.
%So, by Lemma \ref{lm:equivalence}, we only need to compute the set $\pi^{-1} (\alpha)\cap {\mathcal C}$ for a representative $\alpha$ in each equivalence class.
In the rest of the proof, we explain how to compute $\pi_{\sigma}^{-1} (\sigma)\cap {\mathcal C}$ for each representative in ${\mathcal C}_{\sigma}$ recorded in  Table \ref{tab:net2}.

For the reactant coefficients recorded in  Table \ref{tab:net2}-Row (1),
the matrix $\sigma$ is
$$
\left(
\begin{array}{ccc}
\alpha_{11}&\beta_{11}&\alpha_{12}\\
\alpha_{21}&\beta_{21}&\alpha_{22}
\end{array}
\right)\;=\;
\left(
\begin{array}{ccc}
2&0&3\\
1&0&0
\end{array}
\right),
$$
which satisfies the condition \eqref{eq:maxa2} because for $k=1$,
$\max \{\alpha_{k1}, \beta_{k1}\} <\alpha_{k2}$ holds.
By the condition \eqref{eq:js}, we have $\beta_{12}>\alpha_{12}=3$.
The condition \eqref{eq:nec12} can be written as
$
\frac{\beta_{12}-3}{-2}\;=\; \frac{\beta_{22}}{-1}$, i.e.,
$\beta_{22}=\frac{1}{2}(\beta_{12}-3)$.
Above all, we conclude that
\begin{align*}
\pi_{\sigma}^{-1}(\sigma)\cap {\mathcal C}=
\{\left(
\begin{array}{cccc}
2&0&3&\beta_{12}\\
1&0&0&\frac{1}{2}(\beta_{12}-3)\\
\end{array}
\right)|\beta_{12} \in {\mathbb Z}_{>3}
\}.
\end{align*}
Similarly, from  each set of reactant coefficients recorded in the first column of  Table \ref{tab:net2}, we can solve  $\pi_{\sigma}^{-1}(\sigma)\cap {\mathcal C}$, and we
record the corresponding $\beta_{21}$ and $\beta_{22}$  in the third column.
\end{proof}

{\bf Proof of Theorem \ref{thm:g1}.}
``$\Leftarrow$":
For the network in Table \ref{tab:net2}--Row \eqref{row:1r1i10},
it is straightforward to check that for any $\beta_{12}>2$,
the equality \eqref{eq:scalar} holds for $\lambda=-(\beta_{12}-2)<0$.
  Let $\kappa_1=\frac{1}{2}$, $\kappa_2=16$, $\kappa_3=\frac{3}{2\left(\beta_{12}-2\right)}$ and $c_1=-9$.
Then we have
\begin{align*}
h_1&\;=\;\left(\beta_{11}-\alpha_{11}\right)
\left(\kappa_1x_2-\kappa_2x_1-\lambda \kappa_3x_1^2x_2\right)\;=\;\frac{1}{2}x_2-16x_1+\frac{3}{2}x_1^2x_2,\;\;\;\text{and} \\
h_2& \;=\; (\beta_{21}-\alpha_{21})x_1 - (\beta_{11}-\alpha_{11})x_2 - c_{1}\;=\;-x_1 -x_2 +9.
\end{align*}
By solving the equations $h_1(x)=h_2(x)=0$, the network has three nondegenerate positive steady states:
{\scriptsize
$x^{(1)}=(4-\sqrt{13},
    5+\sqrt{13}),\;\;
 x^{(2)}=(1,8),\;\;
x^{(3)}=(4+\sqrt{13},
    5-\sqrt{13})$.
    }
It is straightforward to check by Lemma \ref{lm:stable} that
$x^{(1)}$ and $x^{(3)}$ are stable. Similarly, we can show
the networks in Rows \eqref{row:1r1i11}--\eqref{row:1r1i2} admit multistability. We present the
computation in the supporting file \#1, see in Table \ref{tab:sup}.

``$\Rightarrow$":
By Theorem \ref{thm:nondegmultistable} and \cite[Theorem 3.6 2(b)]{Joshi:Shiu:Multistationary},
if $G\in {\mathcal G}_1$ and $G$ has only $1$ species, then $G$ admits no multistability. By Lemma \ref{lm:g1}, we only need to show the networks listed in  Table \ref{tab:net2}-Rows (\ref{row:1r1i4})--(\ref{row:1r1i9}) do not admit multistability.
%if $G$ admits multistability, then it does not have the form of the networks listed in Table \ref{tab:net2}-Rows (3)--(9).

%consistent.
%By Lemma \ref{lm:consistent},  the network admits even no positive steady state, and hence, it does not admits multistability.

For the network in Table \ref{tab:net2}-Row (\ref{row:1r1i4}),
the polynomial $g(x_1)$ defined in \eqref{eq:g} is
\begin{align*}
g(x_1)\; &=\; -(\kappa_1x_1^2x_2 - \kappa_2-\lambda \kappa_3x_1^3)\;|_{x_2=(x_1+c_1)/2} \\
&= -(\frac{\kappa_1}{2}-\lambda\kappa_3)x_1^3-\frac{c_1\kappa_1}{2}x_1^2 +\kappa_2,
\end{align*}
%and the interval $I$ is $(\max\{0, -\frac{c_1}{2}\}, +\infty)$,
where
$\lambda:=-\frac{\beta_{12}-\alpha_{12}}{\beta_{11}-\alpha_{11}}
=-\frac{\beta_{22}-\alpha_{22}}{\beta_{21}-\alpha_{21}}>0$.
The number of sign changes of the coefficients  is at most $2$ since $g(x_1)$ has at most $3$ terms. By Descartes' rule of signs (Theorem \ref{thm:des}),
$g(x_1)=0$ has at most $2$ positive roots. So, this network admits at most $2$ positive steady states and by Theorem \ref{thm:multistable}, the network does not admit multistability.

%\textcolor{blue}{change the proof after version 20200720.}

For the network in Table \ref{tab:net2}-Row (\ref{row:1r1i5}),
\begin{align*}
g(x_1)\; &=\; -(\kappa_1x_1x_2^2 - \kappa_2-\lambda \kappa_3x_1^3)\;|_{x_2=2x_1+c_1}\\
&= (\lambda\kappa_3 - 4\kappa_1)x_1^3 - 4c_1\kappa_1x_1^2 - c_1^2
\kappa_1x_1 + \kappa_2,
\end{align*}
and the interval $I$ is $(\max\{0, -\frac{c_1}{2}\}, +\infty)$, where $\lambda:=-\frac{\beta_{12}-\alpha_{12}}{\beta_{11}-\alpha_{11}}
=-\frac{\beta_{22}-\alpha_{22}}{\beta_{21}-\alpha_{21}}>0$.
If $-\frac{c_1}{2}<0$, then $c_1>0$, and so the number of sign changes of the coefficients is at most $2$ since $g(x_1)$ has at most $3$ terms. By Descartes' rule of signs (Theorem \ref{thm:des}),
$g(x_1)=0$ has at most $2$ positive roots.
Similarly, if $-\frac{c_1}{2}>0$ and $\lambda\kappa_3 - 4\kappa_1>0$, then by Descartes' rule of signs,
$g(x_1)=0$ has at most $2$ positive roots.
If $-\frac{c_1}{2}>0$ and $\lambda\kappa_3 - 4\kappa_1<0$, then
\begin{align*}
g'(-\frac{c_1}{2})\;&=\; 3(\lambda\kappa_3 - 4\kappa_1)x_1^2 - 8c_1\kappa_1x_1 - c_1^2\kappa_1|_{x_1=-\frac{c_1}{2}}\;=\;\frac{3}{4}\lambda\kappa_3c^2_1>0.
\end{align*}
So $g'(x_1)=0$ has at most $1$ root over the interval $I$, and hence $g(x_1)=0$ has at most $2$ roots over $I$. Above all,
the network admits at most $2$ positive steady states, and so, by Theorem \ref{thm:multistable}, the network does not admit multistability.
Similarly, we can show that the network in  Table \ref{tab:net2}-Row (\ref{row:1r1i6}) does not admit multistability.

For the network in Table \ref{tab:net2}-Row (\ref{row:1r1i7}), the polynomial $g(x_1)$ is
\begin{align*}
g(x_1)\;& =\; -(\kappa_1x_1x_2^2 - \kappa_2x_2-\lambda \kappa_3x_1^3)\;|_{x_2=x_1+c_1}\\
&=\; -(\kappa_1-\lambda\kappa_3)x_1^3-2c_1\kappa_1x_1^2-(c^2_1\kappa_1-\kappa_2)x_1+c_1\kappa_2,
\end{align*}
%and the interval $I$ is $(\max\{0, -c_1\}, +\infty)$,
where
$\lambda:=-\frac{\beta_{12}-\alpha_{12}}{\beta_{11}-\alpha_{11}}
=-\frac{\beta_{22}-\alpha_{22}}{\beta_{21}-\alpha_{21}}>0$.
%If the network admits multistability, then by Lemma \ref{cry:sign}, $g(x_1)=0$ has
%$3$ positive roots.
Note that for any $\kappa_1>0$, $\kappa_2>0$ and for any $c_1\in {\mathbb R}$, $-2c_1\kappa_1$ and $c_1\kappa_2$ have different signs if $c_1\neq 0$. So the number of sign changes of the coefficients of $g(x_1)$ is at most $2$.
By Descartes' rule of signs (Theorem \ref{thm:des}),
$g(x_1)=0$ has at most $2$ positive roots. So, this network has at most $2$ positive steady states and by Theorem \ref{thm:multistable}, the network does not admit multistability.

For the network in Table \ref{tab:net2}-Row (\ref{row:1r1i8}),
the polynomial $g(x_1)$ is
$$g(x_1)\; =\; \kappa_1 - \kappa_2x_1x_2-\lambda \kappa_3x_1^3\;|_{x_2=x_1-c_1},$$
and the interval $I$ is $(\max\{0, c_1\}, +\infty)$, where
$\lambda:=-\frac{\beta_{12}-\alpha_{12}}{\beta_{11}-\alpha_{11}}
=-\frac{\beta_{22}-\alpha_{22}}{\beta_{21}-\alpha_{21}}<0$. %(\textcolor{red}{need to put a note to state why it is}).
So the number of sign changes of the coefficients of $g(x_1)$ is at most $2$. By Descartes' rule of signs (Theorem \ref{thm:des}),
$g(x_1)=0$ has at most $2$ positive roots. So,  by Theorem \ref{thm:multistable}, the network does not admit multistability. Similarly, we can show that
the network in Table \ref{tab:net2}-Row (\ref{row:1r1i9}) does not admit multistability. $\Box$
%\end{proof}

%\begin{table}[]
%\tiny
  %\centering
  %\begin{tabular}{cccccc}
 % \hline
  %$\alpha_{11}$ & $\alpha_{21}$ & $\beta_{11}$ & $\beta_{21}$ & $\alpha_{12}$ & $\alpha_{22}$ \\
   % \hline
    %\end{tabular}
  %\caption{Possible reactant coefficients of a multistable 3-reactant network with one irreversible and one reversible  reaction and 2 species \textcolor{red}{Hao Xu}}
  %\label{tab:a2}
  %\end{table}

\newcounter{rowb}
\newcommand{\newrowb}{\refstepcounter{rowb}\arabic{rowb}}

\begin{table}[]
  \renewcommand\arraystretch{1.75}
  \tiny
  \centering
  \caption{All candidates for multistable networks in ${\mathcal G}_1$ with $3$ reactants and $2$ species}
  \label{tab:net2} (only networks in Rows \eqref{row:1r1i10}--\eqref{row:1r1i2} are multistable)
  \begin{tabular}{cc|c|r}
  \hline
  \cellcolor[HTML]{FFCE93}&\cellcolor[HTML]{FFCE93}  $\left(\alpha_{11},\alpha_{21},\beta_{11},\beta_{21},\alpha_{12},\alpha_{22}\right)$ &  \cellcolor[HTML]{FFCE93}Network  & \cellcolor[HTML]{FFCE93} $\beta_{12}$ and $\beta_{22}$ in ${\mathbb Z}_{\geq 0}$  \\ \hline

  %\begin{tabular}[c]{@{}c@{}}
 % 	(\newrowb\label{row:1r1i3}) \\ \quad
 % \end{tabular} & $(0,3,1,0,2,1)$ &
 % \begin{tabular}[c]{@{}c@{}}
%	$ 3X_2 \Leftrightarrow  X_1 $ \\
%	$ 2X_1 +  X_2 \rightarrow \beta_{12}X_1 + \beta_{22}X_2 $
 % \end{tabular} &
  %\begin{tabular}[r]{@{}r@{}}
%	No integer value for $\beta_{21}$, \\
%	see the proof
    %$\beta_{22}=-3(\beta_{12}-2)+1 $ \\
    %$\beta_{12}-2>0$
 % \end{tabular}

  % \\ \hline

  \begin{tabular}[c]{@{}c@{}}
  	(\newrowb\label{row:1r1i4}) \\ \quad
  \end{tabular} & $(2,1,0,0,3,0)$ &
  \begin{tabular}[c]{@{}c@{}}
	$ 2X_1 +  X_2 \Leftrightarrow 0 $ \\
	$ 3X_1 \rightarrow \beta_{12}X_1 + \beta_{22}X_2 $
  \end{tabular}  &
  \begin{tabular}[r]{@{}r@{}}
	$\beta_{22}=\frac{1}{2}(\beta_{12}-3) $ \\
	$\beta_{12}-3>0$
  \end{tabular}

    \\ \hline

  \begin{tabular}[c]{@{}c@{}}
  	(\newrowb\label{row:1r1i5}) \\ \quad
  \end{tabular} & $(1,2,0,0,3,0)$ &
  \begin{tabular}[c]{@{}c@{}}
	$  X_1 + 2X_2 \Leftrightarrow 0 $ \\
	$ 3X_1 \rightarrow \beta_{12}X_1 + \beta_{22}X_2 $
  \end{tabular} &
  \begin{tabular}[r]{@{}r@{}}
	$\beta_{22}=2(\beta_{12}-3) $ \\
	$\beta_{12}-3>0$
  \end{tabular}

  \\ \hline

  \begin{tabular}[c]{@{}c@{}}
  	(\newrowb\label{row:1r1i6}) \\ \quad
  \end{tabular} & $(1,2,0,0,2,1)$ &
  \begin{tabular}[c]{@{}c@{}}
	$  X_1 + 2X_2 \Leftrightarrow 0 $ \\
	$ 2X_1 +  X_2 \rightarrow \beta_{12}X_1 + \beta_{22}X_2 $
  \end{tabular}  &
  \begin{tabular}[r]{@{}r@{}}
	$\beta_{22}=2(\beta_{12}-2)+1 $ \\
	$\beta_{12}-2>0$
  \end{tabular}
  \\ \hline

  \begin{tabular}[c]{@{}c@{}}
  	(\newrowb\label{row:1r1i7}) \\ \quad
  \end{tabular} & $(1,2,0,1,3,0)$ &
  \begin{tabular}[c]{@{}c@{}}
	$  X_1 + 2X_2 \Leftrightarrow X_2 $ \\
	$ 3X_1  \rightarrow \beta_{12}X_1 + \beta_{22}X_2 $
  \end{tabular}  &
  \begin{tabular}[r]{@{}r@{}}
	$\beta_{22}=\beta_{12}-3$ \\
	$\beta_{12}-3>0$
  \end{tabular}
  \\ \hline

  \begin{tabular}[c]{@{}c@{}}
  	(\newrowb\label{row:1r1i8}) \\ \quad
  \end{tabular} & $(0,0,1,1,3,0)$ &
  \begin{tabular}[c]{@{}c@{}}
	$  0 \Leftrightarrow  X_1 +  X_2 $ \\
	$ 3X_1  \rightarrow \beta_{12}X_1 + \beta_{22}X_2 $
  \end{tabular}  &
  \begin{tabular}[r]{@{}r@{}}
	$\beta_{22}=\beta_{12}-3 $ \\
	$\beta_{12}-3>0$
  \end{tabular}
   \\ \hline

  \begin{tabular}[c]{@{}c@{}}
  	(\newrowb\label{row:1r1i9}) \\ \quad
  \end{tabular} & $(0,0,1,1,2,1)$ &
  \begin{tabular}[c]{@{}c@{}}
	$  0 \Leftrightarrow  X_1 +  X_2 $ \\
	$ 2X_1 +  X_2  \rightarrow \beta_{12}X_1 + \beta_{22}X_2 $
  \end{tabular}  &
  \begin{tabular}[r]{@{}r@{}}
	$\beta_{22}=(\beta_{12}-2)+1 $ \\
	$\beta_{12}-2>0$
  \end{tabular}
   \\ \hline

\cellcolor[HTML]{6CDEFF}
  \begin{tabular}[c]{@{}c@{}}
  	(\newrowb\label{row:1r1i10}) \\ \quad
  \end{tabular} & $(0,1,1,0,2,1)$ &
  \begin{tabular}[c]{@{}c@{}}
	$  X_2 \Leftrightarrow X_1 $ \\
	$ 2X_1 +  X_2 \rightarrow \beta_{12}X_1 + \beta_{22}X_2 $
  \end{tabular}  &
  \begin{tabular}[r]{@{}r@{}}
	$\beta_{22}=-(\beta_{12}-2)+1 $ \\
	$\beta_{12}-2>0$
  \end{tabular}
  \\ \hline
\cellcolor[HTML]{6CDEFF}
  \begin{tabular}[c]{@{}c@{}}
  	(\newrowb\label{row:1r1i11}) \\ \quad
  \end{tabular} & $(1,0,0,2,2,1)$ &
  \begin{tabular}[c]{@{}c@{}}
	$  X_1 \Leftrightarrow 2X_2 $ \\
	$ 2X_1 +  X_2 \rightarrow \beta_{12}X_1 + \beta_{22}X_2 $
  \end{tabular} &
  \begin{tabular}[r]{@{}r@{}}
	$\beta_{22}=-2(\beta_{12}-2)+1 $ \\
	$\beta_{12}-2>0$
  \end{tabular}
  \\ \hline
\cellcolor[HTML]{6CDEFF}
\begin{tabular}[c]{@{}c@{}}
    (\newrowb\label{row:1r1i1}) \\ \quad
  \end{tabular}    & $(1,2,0,0,2,0)$&
  \begin{tabular}[c]{@{}c@{}}
	$  X_1 + 2X_2 \Leftrightarrow  0 $ \\
	$ 2X_1 \rightarrow \beta_{12}X_1 + \beta_{22}X_2 $
  \end{tabular}&
  \begin{tabular}[r]{@{}r@{}}
	$\beta_{22}=2(\beta_{12}-2) $ \\
	$\beta_{12}-2>0$
  \end{tabular}

  \\ \hline
\cellcolor[HTML]{6CDEFF}
  \begin{tabular}[c]{@{}c@{}}
  	(\newrowb\label{row:1r1i2}) \\ \quad
  \end{tabular} & $(1,2,0,1,2,0)$ &
  \begin{tabular}[c]{@{}c@{}}
	$  X_1 + 2X_2 \Leftrightarrow  X_2 $ \\
	$ 2X_1 \rightarrow \beta_{12}X_1 + \beta_{22}X_2 $
  \end{tabular} &
  \begin{tabular}[r]{@{}r@{}}
	$\beta_{22}=\beta_{12}-2 $ \\
	$\beta_{12}-2>0$
  \end{tabular}

  \\ \hline

  \end{tabular}

  %\label{tab:table_1}

\end{table}

\begin{remark}\label{rmk:rrc}
In the proofs of Theorem \ref{thm:4-reactant} and Theorem \ref{thm:g1}, for each of the multistable networks, we found the witness by the software {\tt RealRootClassification} \cite{CDMMX2013} in {\tt Maple2020}.
The software can also outputs an open region (a semi-algebraic set) in the parameter space such that for any choice of parameters in the open region,
each of the multistable networks exhibits multistability.
\end{remark}

%\subsection{Networks with two reversible-reaction pairs}\label{sec:ext2}

\section{Networks in ${\mathcal G}_2$: proof of Theorem \ref{thm:g2}} \label{sec:nig2}

\begin{definition}\cite[Definition 3.3]{Joshi:Shiu:Multistationary}
	Let $G$ be a reaction network that contains only one species $X_1$. Thus, each reaction of $G$ has the form $ aX_1 \rightarrow bX_1$, where $ a,b{\ge}0 $ and $ a {\neq} b$. Let $m$ be the number of (distinct) reactant complexes, and let $a_{1}<a_{2}<...<a_{m}$ be their stoichiometric coefficients. The \defword{arrow diagram} of $G$, denote $\rho = (\rho_1,...,\rho_m)$, is the element of $\{ \rightarrow,\leftarrow,\leftarrow\!\!\!\!\bullet\!\!\!\!\rightarrow \}^{m}$ defined by:
	\begin{equation}
		\rho_{i}=  \left\{
		\begin{array}{ll}
			\rightarrow \;\; & \text{if for all reactions $a_{i}X_1{\rightarrow}bX_1$ in $G$, it is the case that $b{>}a_i$} \\
			\leftarrow \;\; & \text{if for all reactions $a_{i}X_1{\rightarrow}bX_1$ in $G$, it is the case that $b{<}a_i$} \\
			\leftarrow\!\!\!\!\bullet\!\!\!\!\rightarrow & \text{otherwise.}
		\end{array}
		\right.
	\end{equation}
\end{definition}

\begin{definition}\cite[Definition 3.4]{Joshi:Shiu:Multistationary}
	For $T\in\mathbb{Z}_{{\ge}2}$, a \defword{$T$-alternating network} is a $1$-species network with exactly $T+1$ reactions and with arrow diagram $\rho\in\{ \rightarrow,\leftarrow \}^{T+1}$ such that $\rho_i=\rightarrow$ if and only if $\rho_{i+1}=\leftarrow$ for all $i\in\{ 1,...,T \}$.
\end{definition}

{\bf Proof of Theorem \ref{thm:g2}.}
	``$\Leftarrow$":
	For network $G$ with the form of the network \eqref{eq:unique}, $G$ has a 3-alternating subnetwork (i.e., $\{ 0{\rightarrow}X_1, 0{\leftarrow}X_1, 2X_1{\rightarrow}3X_1, 2X_1{\leftarrow}$ $3X_1 \}$) with arrow diagram $(\rightarrow,\leftarrow,\rightarrow,\leftarrow)$. By \cite[Theorem 3.6 2(c)]{Joshi:Shiu:Multistationary}, G admits at least $\lceil\frac{T}{2}\rceil = \lceil\frac{3}{2}\rceil = 2$ stable positive steady states. Thus, $G$ admits multistability.
	
	``$\Rightarrow$":
	For any network $G\in\mathcal{G}_2$, if $G$ has only one species, then $G$ has the form $ a_{0}X_1{\Leftrightarrow}a_{1}X_1,a_{2}X_1{\Leftrightarrow}a_{3}X_1$, and here, we have $a_{i}\in\{0,1,2,3\}$ since $G$ is at-most-3-reactant. If $G$ admits multistability (i.e., $cap_{stab}(G){\ge}2$), by Theorem \ref{thm:nonde} and Theorem \ref{thm:multistable}, $G$ admits at least 3 nondegenerate steady states. By \cite[Theorem 3.6 2(b)]{Joshi:Shiu:Multistationary}, $G$ has a 3-alternating subnetwork, so $G$ has 4 distinct reactant complexes with arrow diagram $(\rightarrow,\leftarrow,\rightarrow,\leftarrow)$. Thus, $G$ must have the form of the network \eqref{eq:unique}. $\Box$
	
% \end{proof}

%\textcolor{blue}{Can we extend our results to more reactions?}

\section{Discussion}\label{sec:dis}
Knowing the structures of small networks will help us to understand important networks in biology.
For instance, the multistability of the Huang-Ferrell mitogen-activated
protein kinase (MAPK) cascade with negative feedback can be inferred from a subnetwork with much fewer species and reactions \cite[Figure 1]{BP16}.
Following a study on a multistable network called  ERK (a model for dual-site phosphorylation and dephosphorylation of extracellular signal-regulated kinase) \cite[Figure 1]{OSTT2019}, a study on the maximum numbers of positive steady states for small networks has also been carried out by looking at the mixed volumes.

For future work, we propose the problems below.
\begin{itemize}
\item[(1)] Does there exist a network $G$ in ${\mathcal G}$ such that $cap_{pos}(G)=3$ but $cap_{stab}(G)<2$? We remark that for all small networks we have studied (see Table \ref{tab:net} and Table \ref{tab:net2}), if a network admits three positive steady states, then there are two stable ones.
\item[(2)]  Under which conditions does a network in ${\mathcal G}$ admit strictly more than $3$ positive steady states?
\item[(3)] For the set of networks ${\mathcal G}_i$ ($i\in \{0, 1, 2\}$), which subset is the smallest such that any network in this subset admits strictly more than $3$ positive steady states?
\end{itemize}
\section*{Acknowledgments}
The authors would thank Dr. Anne Shiu and Dr. Ang\'elica Torres for their suggestions on the first version of this paper.
The authors would also thank the anonymous referees. Their valuable comments  dramatically improved the presentation of this work.

\bigskip
%\newpage
\begin{center}
{\large\bf SUPPLEMENTARY MATERIAL}
\end{center}

Table \ref{tab:sup} lists all files at the online repository:
\url{https://github.com/HaoXUCode/MSRN-Supplement}

%\textcolor{red}{Hao Xu}

\begin{table}[h!]
\centering
\caption{Supporting Information Files}
\label{tab:sup}
{\scriptsize
\begin{tabular}{||l c l||}
 \hline
\cellcolor[HTML]{FFCE93} Name &\cellcolor[HTML]{FFCE93} File Type & \cellcolor[HTML]{FFCE93}Results \\ [0.5ex]
 \hline\hline
 1. \texttt{WitnessForMultistableNetworksInTheorem2.6.mw/.pdf} & \texttt{Maple/PDF} &  Theorem \ref{thm:g1}\\
 2. \texttt{EquivalenceClassesOfSmallReversibleNetworks4.mw/.pdf} & \texttt{Maple/PDF} & Lemma \ref{lm:g1} \\
 3. \texttt{EquivalenceClassesOfSmallReversibleNetworks3.mw/.pdf} & \texttt{Maple/PDF}& Lemma \ref{lm:g1} \\
 4. \texttt{EquivalenceClassesOfSmallReversibleNetworks2.mw/.pdf} & \texttt{Maple/PDF} & Lemma \ref{lm:g1} \\
 5. \texttt{EquivalenceClassesOfSmallReversibleNetworks1.mw/.pdf} & \texttt{Maple/PDF} & Lemma \ref{lm:g1} \\
 %\texttt{reverse5.mw} & \texttt{Maple} & Lemma \ref{lm:g1} \\
 %\texttt{rrc.mw} & \texttt{Maple} &  Theorem \ref{thm:g1}\\
6. \texttt{WitnessForMultistableNetworksInTheorem2.4.mw/.pdf} & \texttt{Maple/PDF} &  Theorem \ref{thm:4-reactant}\\
7. \texttt{EquivalenceClassesOfSmallIrreversibleNetworks.mw/.pdf} & \texttt{Maple/PDF} &  Lemma \ref{lm:4-reactant}\\
 \hline
\end{tabular}
}
\end{table}
\end{document}